\documentclass[journal]{IEEEtran}
%\documentclass[journal]{IEEEtran}
%\documentclass[journal,onecolumn]{IEEEtran}
%\documentclass[journal,onecolumn,12pt]{IEEEtran}
%\usepackage{geometry}
%\geometry{a4paper,left=2.5cm,right=2.5cm,top=3cm,bottom=3cm}
%\linespread{1.3}
\usepackage{amsmath,amsfonts}
\usepackage{algorithmic}
\usepackage{algorithm}
\usepackage{array}
\usepackage[caption=false,font=normalsize,labelfont=sf,textfont=sf]{subfig}
\usepackage{textcomp}
\usepackage{stfloats}
\usepackage{url}
\usepackage{color}
\usepackage{verbatim}
\usepackage{graphicx}
\usepackage{amsmath}
\usepackage{amssymb}
\usepackage{mathrsfs}
\usepackage{makecell}
\usepackage{threeparttable}
\usepackage[figuresright]{rotating}
\usepackage{soul} % 导入 soul 包
\usepackage{color, xcolor} % 颜色包，color 必须导入，xcolor 建议导入
% 若高亮 \hl 的结果变成下划线，一般是因为没导入 color
\usepackage{booktabs}
\usepackage{bm}
\usepackage{multicol}
\usepackage{orcidlink}
%\usepackage{setspace}
%\setstretch{1} 
\usepackage{tikz}
%-----------------参考文献索引跳转
\makeatletter
\let\NAT@parse\undefined
\makeatother
\usepackage{hyperref}  %hyperref still needs to be put at the end!
\hypersetup{
	colorlinks=true,
	linkcolor=black,
	citecolor=black
}%取消参考文献彩色框

\newtheorem{lemma}{Lemma}
\newtheorem{corollary}{Corollary}
\newtheorem{theorem}{Theorem}
\newtheorem{definition}{Definition}
\newenvironment{proof}{{\noindent\it Proof:} }{\hfill $\square$\par}

\newtheorem{example}{Example}
\newtheorem{remark}{Remark}

\hyphenation{op-tical net-works semi-conduc-tor IEEE-Xplore}

% updated with editorial comments 8/9/2021

\begin{document}
	
	\title{Combinatorial Constructions of Optimal Quaternary Additive Codes}
	
	%\author{Chaofeng Guan, Jingjie Lv, Yiting Liu, Zhi Ma~\IEEEmembership{}
	\author{Chaofeng Guan, Jingjie Lv, Gaojun Luo, Zhi Ma~\IEEEmembership{}
		% <-this % stops a space
		%\thanks{Chaofeng Guan, Yiting Liu, and Zhi Ma are with the Henan Key Laboratory of Network Cryptography Technology, Zhengzhou, 450001, China.  (e-mail: gcf2020@yeah.net, lyt9156@outlook.com, LWtglx2023@outlook.com).}
%		\thanks{Chaofeng Guan and Zhi Ma are with the Henan Key Laboratory of Network Cryptography Technology, Zhengzhou 450001, China  (e-mails: gcf2020@yeah.net, LWtglx2023@outlook.com).}
%			\thanks{Jingjie Lv is with the 
%			School of Electrical Engineering \& Intelligentization, Dongguan University of Technology, Dongguan, 523808,
%			China (email: juxianljj@163.com).}
%		\thanks{ Gaojun Luo is with the School of Mathematics, Nanjing University of Aeronautics and Astronautics, Nanjing, 210016, China (e-mail: gaojun\_luo@nuaa.edu.cn).}		
%		\thanks{Chaofeng Guan is supported by the National Key Research and Development Program of China under Grant 2021YFB3100100, and the National Natural Science Foundation of China under Grant U21A20428.
%			Zhi Ma is supported by the National Natural Science Foundation of China under Grant 61972413.
%			Gaojun Luo is supported by the Natural Science Foundation of Jiangsu Province Grant No. BK20230867, and the National Natural Science Foundation of China
%			Grant No. 12401690 and Postdoctoral Fellowship Program of CPSF Grant No. GZC20242234	}
%		\thanks{This work is supported by the National Natural Science Foundation of China
%			under Grant No. 61972413, U21A20428, 62002385, and the National Key R\&D Program of China under Grant No. 2021YFB3100100. }% <-this % stops a space
%		% The paper headers
	}
	%61972413,61901525,62002385,U21A20428

	\IEEEpubid{}
	% Remember, if you use this you must call \IEEEpubidadjcol in the second
	% column for its text to clear the IEEEpubid mark.

	\maketitle
	
	\begin{abstract}

 This paper aims to construct optimal quaternary additive codes with non-integer dimensions. Firstly, we propose combinatorial constructions of quaternary additive constant-weight codes, alongside additive generalized anticode construction. Subsequently, we propose generalized Construction X, which facilitates the construction of non-integer dimensional optimal additive codes from linear codes.
Then, we construct ten classes of optimal quaternary non-integer dimensional additive codes through these two methods. 
As an application, we also determine the optimal additive $[n,3.5,n-t]_4$ codes for all $t$ with variable $n$, except for $t=6,7,12$.
 %ten classes of them are additive Griesmer puncture-optimal codes or additive Griesmer codes. 
	\end{abstract}
	
	\begin{IEEEkeywords}
Quaternary additive code, additive constant-weight code, additive generalized anticode construction, generalized Construction X.
	\end{IEEEkeywords}
	\section{Introduction}

\IEEEPARstart{A}{dditive} codes are closed under vector addition but not necessarily under scalar multiplication.
All linear codes are additive, but additive codes are not necessarily linear. 
Therefore, the parameters of additive codes may be better than optimal linear codes.
%The algebraic structure of additive codes is more complex than that of linear codes, the parameters of additive codes may be better than optimal linear ones.
Moreover, additive codes are an interesting class of error-correcting codes and have significant applications in quantum information \cite{calderbank1998quantum,ketkar2006nonbinary}, computer memory systems \cite{chen1984error,chen1991fault,chen1992symbol}, deep space communication \cite{hattori1998subspace}, protection against side-channel attacks \cite{shi2023additive,shi2022additive}, and distance-regular graphs \cite{Shi2018AND}.%, kim2017secret,annamalai2021additive

In \cite{blokhuis2004small}, Blokhuis and Brouwer determined the parameters of optimal quaternary additive codes of lengths up to 12.
Later in \cite{bierbrauer2009short,Danielsen2009,bierbrauer2010geometric,bierbrauer2015nonexistence,bierbrauer2019additive}, much work has been performed on quaternary additive codes with small lengths, resulting in a general determination of the parameters of optimal additive codes of lengths up to 15.
In \cite{bierbrauer2021optimal}, Bierbrauer et al. determined the parameters of all 2.5-dimensional optimal quaternary additive codes by a projective geometry approach. 
However, as the dimensions increase, the number of involved optimal additive codes increases exponentially.
For instance, for $k=2.5$, the parameters of optimal quaternary 2.5-dimensional additive codes can be determined by only three non-trivial key codes, while the problem is much more complicated for $k\ge 3.5$. 
In \cite{guo2017construction,Guan2023SomeGQ,kurz2024computer}, many optimal quaternary 3.5-dimensional additive codes were obtained by computer-supported methods. However, this work could only partially determine the optimal quaternary 3.5-dimensional additive codes of lengths from 28 to 254. A solution to the problem of constructing optimal quaternary 3.5-dimensional or higher-dimensional additive codes requires more efficient methods and theories.
%Since the optimal additive code period involved increases exponentially with the increase of dimension, the parameters of the 3.5-dimensional optimal quaternary additive codes are only partially determined .

%obtained many additive codes with parameters better than the best quaternary linear codes in \cite{Grassltable}, and partially determined the parameters of optimal quaternary additive 3.5-dimensional codes with lengths from 28 to 254.

The Griesmer bound is essential in determining the performance of linear codes, and the following is the generalization of the Griesmer bound for additive codes.
\begin{lemma}\label{Griesmer_Bound}(Additive Griesmer Bound, \cite{Guan2023SomeGQ})
	If $C_a$ is a quaternary additive $[n,k,d]_4$ code with $k\ge1$, then \begin{equation}\label{AGB}
		3n \geq g(2k, 2d) \triangleq \sum\limits_{i=0}^{2k-1}\left\lceil\frac{d}{2^{i-1}}\right\rceil.
	\end{equation} 
\end{lemma}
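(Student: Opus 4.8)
The plan is to transfer the problem to the binary setting and then invoke the ordinary (binary) Griesmer bound. The key observation is that a quaternary additive $[n,k,d]_4$ code $C_a$ is, by definition, an $\mathbb{F}_{2}$-linear subspace of $\mathbb{F}_{4}^{n}$ of $\mathbb{F}_{2}$-dimension $2k$. I would exhibit an $\mathbb{F}_{2}$-linear embedding of $C_a$ into a binary linear code $C_b$ of length $3n$ that scales Hamming weight by a constant factor, and then apply the classical Griesmer bound to $C_b$.

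Concretely, writing $\mathbb{F}_{4} = \{0,1,\omega,\omega^{2}\}$ with $\omega^{2} = \omega+1$, I would define the coordinatewise map $\Phi \colon \mathbb{F}_{4}^{n} \to \mathbb{F}_{2}^{3n}$ induced by $a+b\omega \mapsto (a,\,b,\,a+b)$ for $a,b \in \mathbb{F}_{2}$; equivalently, $\Phi$ applies the three distinct nonzero $\mathbb{F}_{2}$-linear functionals $\mathbb{F}_{4} \to \mathbb{F}_{2}$ to each coordinate. This map is $\mathbb{F}_{2}$-linear, and the two properties I need are: (i) $\Phi$ is injective, since a nonzero element of $\mathbb{F}_{4}$ is annihilated by exactly one of the three nonzero functionals, so its two other images are nonzero; and (ii) every nonzero symbol of $\mathbb{F}_{4}$ is sent to a binary triple of Hamming weight exactly $2$, because each nonzero element lies in the kernel of precisely one functional. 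Property (ii) gives the weight relation $\mathrm{wt}(\Phi(c)) = 2\,\mathrm{wt}(c)$ for every $c \in C_a$, while property (i) shows $\dim_{\mathbb{F}_{2}} C_b = 2k$, where $C_b = \Phi(C_a)$. Hence $C_b$ is a binary linear $[3n,\,2k,\,2d]_2$ code.

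It then remains to apply the classical binary Griesmer bound to $C_b$: for a binary $[N,K,D]_2$ code with $K \geq 1$ one has $N \geq \sum_{i=0}^{K-1} \lceil D/2^{i} \rceil$. Taking $N = 3n$, $K = 2k$ and $D = 2d$ (the hypothesis $k \geq 1$ guarantees $K \geq 1$) yields $3n \geq \sum_{i=0}^{2k-1} \lceil 2d/2^{i} \rceil$, and the simplification $\lceil 2d/2^{i} \rceil = \lceil d/2^{i-1} \rceil$ reproduces precisely the claimed quantity $g(2k,2d)$. The only genuinely delicate point is the joint verification of the weight-doubling property (ii) and injectivity, which rests on the combinatorics of the three one-dimensional $\mathbb{F}_{2}$-subspaces of $\mathbb{F}_{4}$ (each nonzero $x$ lies in exactly one such line, hence is killed by exactly one nonzero functional). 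Once this is in place, the statement follows as a direct reduction to a standard bound rather than requiring a fresh residual-code induction; I would finish by confirming the index arithmetic so that the binary Griesmer sum matches $g(2k,2d)$ term by term.
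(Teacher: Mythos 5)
Your proof is correct. The paper itself gives no proof of this lemma (it is quoted from the cited reference), but your reduction is exactly the mechanism the paper uses elsewhere (Lemma \ref{L_anti_bound}): concatenating each $\mathbb{F}_4$-coordinate with the binary $[3,2,2]_2$ code --- your map $a+b\omega\mapsto(a,b,a+b)$ is precisely that concatenation --- doubles all weights, turns the additive $[n,k,d]_4$ code into a binary linear $[3n,2k,2d]_2$ code, and the classical Griesmer bound then yields $3n\ge\sum_{i=0}^{2k-1}\lceil 2d/2^i\rceil=g(2k,2d)$. This is the standard and intended argument; the index arithmetic and the weight-doubling verification you give are both accurate.
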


Follow the work of Bierbrauer \cite{bierbrauer2021optimal}, we also prefer to work with the \textit{species} $t=n-d$ instead of minimum distance $d$. 
	%To distinguish optimal scenarios concerning the above bound, 
 We define $C_{n,t}$ as an additive $[n,k,n-t]_4$ code, where $n$, $k$, and $t$ are positive integers. 
	%$C_{n,t}$ can represent all additive codes with dimension $k$. 
	As in the linear case, an $[n,k,n-t]_4$ additive code satisfying the additive Griesmer bound with equality is called anan additive \textit{Griesmer code}. 

For fixed dimension $k$, we call $C_{n,t}$ an additive \textit{distance-optimal (DO)} code if $t$ is the smallest number for which there exists an additive $[n,k,n-t]_4$ code. 
Similarly, we call $C_{n,t}$ an additive \textit{puncture-optimal (PO)}  code if $n$ is the maximum length for which there exists an additive $[n,k,n-t]_4$ code.
%if $t$ is the minimum number such that additive $[n,k,n-t]$ code exists, then we call $C_{n,t}$ an additive \textit{distance-optimal (DO)}  code. 
%Similarly, if $n$ is the maximal length such that an additive $[n,k,n-t]$ code exists, then we call $C_{n,t}$ an additive \textit{puncture-optimal (PO)}  code. 
If the optimality type of $C_{n,t}$ can be determined directly via Lemma \ref{Griesmer_Bound}, then we call $C_{n,t}$ an additive \textit{Griesmer distance-optimal (GDO)} or \textit{Griesmer puncture-optimal (GPO)}  code.
	% \begin{equation}\label{griesmer_opt}
	% 	%g(2k, 2(n-t)) \le 
	% 	3n  < g(2k, 2(n-t)+2),
	% \end{equation}
	% then we call $C_{n,t}$ an additive \textit{Griesmer distance-optimal (GDO) code}. 
	% For fixed $n$ and $k$, if there is no GDO code, then we refer to the code with minimum $t$ that exists as an additive \textit{distance-optimal (DO) code}.
%	For a fixed $t$, if $C_{n,t}$  satisfies 
%	\begin{equation}\label{griesmer_opt}
%		%g(2k, 2(n-t)) \le 
%		3n  < g(2k+1, 2(n-t)),
%	\end{equation}
%	then we call $C_{n,t}$ an additive \textbf{Griesmer dimensional-optimal (GDIO) code}.; if no such code $C_{n,t}$ exists, then we refer to the code with maximum $t$ that exists as an additive \textbf{distance-optimal (DO) code}.
	% For fixed $k$ and $t$, if $C_{n,t}$ satisfies 		 
	% \begin{equation}\label{griesmer_rd_opt}
	% 	%g(2k, 2(n-t)) \le 
	% 	3n+3  < g(2k, 2(n-t)+2),
	% \end{equation} i.e., additive $[n+1,k,n-t+1]_4$ code violates Griesmer bound, then we call $C_{n,t}$  an additive \textit{Griesmer puncture-optimal (GPO) code}.
	% For fixed $k$ and $t$, if there is no GPO code exists, then we call the longest GDO (or DO) code an additive \textit{puncture-optimal (PO)} code.
	This is, all optimal additive GDO (or DO) codes can be punctured from additive GPO (or PO) codes,
	 otherwise they are also GPO (or PO).
	Therefore, for a specific dimension $k$, when we want to construct optimal additive code $C_{n,t}$ for all $t$, we only need to construct all $k$-dimensional GPO and PO codes.
	These facts suggest that GPO and PO codes are the most critical optimal codes.

	The main objective of this paper is to construct optimal quaternary additive codes with non-integer dimensions using combinatorial methods.
	As a result,  many classes of optimal quaternary additive codes are obtained, and their weight distributions are also determined.
%	  It is worth mentioning that most of our additive codes outperform the corresponding optimal linear codes in terms of the number of codewords for given lengths and minimum distances, with $0.5$ more dimensions, i.e., twice as many codewords.
	The contributions of this paper can be summarized as follows.
	\begin{enumerate}
		\item 
We present three methods for constructing a class of quaternary constant-weight codes with parameters $[2^{k}-1,\frac{k}{2},3\cdot2^{k-2}]_4$ and prove that all their codewords satisfy the condition that $1$, $w$, and $w^2$ appear $2^{k-2}$ times, see Lemmas \ref{APS_construction}, \ref{iterating} and \ref{L_camba}.
In addition, we also obtain two classes of derived codes and determine their weight distributions, see Lemma \ref{aug_Simplex}. 

\item We propose an additive generalized anticode construction of optimal additive codes based on a combinatorial approach, see Lemma \ref{anticode construction} and Theorem  \ref{anticode_k_1_k_2}.
With the help of an equivalent division method, we improve the parameters of the resulting additive codes, see Theorem  \ref{anticode_k_1_k_2_1/3}. The weight distributions of related optimal additive codes are also completely determined.
\item We provide a generalized Construction X.
By choosing a special auxiliary $[n,k^{\prime},d]_4$ code and combining it with two $[n_1,k,d_1]_4$ and $[n_2,k,d_2]_4$ codes, an additive $[n_1+n_2+n,k,>d_1+d_2+d]_4$ code can be generated, as
shown in Theorem \ref{Generalized_X_Construction}.
In addition to directly constructing optimal additive codes, it can also establish a connection between optimal additive codes with non-integer dimensions and linear codes, see Lemma \ref{Generalized_X_Construction_Corollary}.

\item We give a method to determine the existence of quaternary additive codes using binary linear codes (Lemma \ref{L_anti_bound}). 
A tighter bound concerning the Griesmer bound for quaternary additive codes with short lengths is obtained. 
By the optimal additive codes obtained in this paper, we ultimately determine optimal additive $[n,3.5,n-t]_4$ codes for all $t$ with variable $n$, except for $t=6,7,12$, see Lemma \ref{Additive code combinations} and Table \ref{3.5_additive}.
	\end{enumerate}

	%\ref{Sec III ASEP}, \ref{Sec IV anti}\ref{Sec V ASEP}\ref{Sec_3.5_4.5}
		This paper is organized as follows. In Section \ref{Sec II}, we recall some preliminary results needed in this paper. Sections \ref{Sec III ASEP} and \ref{Sec IV anti} are devoted to additive constant-weight codes and related generalized anticode construction.
Section \ref{Sec V ASEP} presents a generalized Construction X. 		
In Section \ref{Sec_3.5_4.5}, we determine the parameters of all the quaternary 3.5-dimensional GPO and PO codes, except for three. Finally, Section \ref{VII Dis conclu}  concludes this paper.

	\section{Preliminaries}\label{Sec II}
	This section presents the fundamentals of additive codes. 
	For more details, refer to the standard textbooks \cite{huffman2010fundamentals,bierbrauer2017introduction,huffman2021concise}.
	
	%\subsection{Quaternary additive codes}
	Denote the finite fields of order $2$ and $4$ by respectively $F_2=\{0,1\}$ and $F_4 = \{0, 1, w, w^2\}$ with $w^2 +w + 1 = 0$.
	For $\bm{u }=(u_{0},\ldots, u_{n-1})\in F_{2}^{n}$, the Hamming weight of $\bm{u}$ is 
	$\mathbf{w}(\bm{u})=|\left\{i \mid u_{i} \neq0, 0 \leq i \leq n-1\right\}|$. 
	For $\bm{v}=(v_{0},\ldots, v_{2n-1}) \in F_{2}^{2n}$,  
	the symplectic weight of $\bm{v}$ is 
	$\mathbf{w}_{s}(\bm{v})=|\left\{i \mid (v_{i}, v_{n+i}) \neq(0,0), 0 \leq i \leq n-1 \right\}|$.
		For $\bm{v}_1,\bm{v}_2 \in \mathbb{F}_q^{2n}$,  the dot and symplectic inner products of them are defined as $\langle\bm{v}_1,\bm{v}_2\rangle_{e}=\sum_{i=0}^{2 n-1} v_{1,i} v_{2,i}$ and 
	$\langle\bm{v}_1,\bm{v}_2\rangle_{s}=\sum_{i=0}^{n-1}\left(v_{1,i} v_{2,n+i}-v_{1,n+i} v_{2,i}\right)$, respectively. 
	Let $q=2$ or $4$, a code $C_l$ is linear over $F_q$ if it is a linear subspace of $F_q^n$;   
	if such a code has dimension $k$, minimum Hamming distance $d$, then $C_l$ is denoted by $[n,k,d]^l_q$.
	If $C_s$ is a $[2n,k]_2^l$ code with the minimum symplectic distance $d_s$, then $C_s$ is denoted as $[2n,k,d_s]^s_{2}$.
	An additive $ ( n,K,d ) _4$ code $C_a$ is an additive subgroup of $F_4^n$ with $K$ elements whose minimum Hamming weight is $d$, which is an $F_2$-linear $F_4$-code. Writing dimension $k=\log_4 K$, then $C_a$ also can be denoted as $[n,k,d]_4$, where $k$ is not necessarily integer.
Throughout this paper, to avoid confusion, we use different subscripts to represent the three types of codes involved in this paper: \( C_l \) for linear code with Hamming weight, \( C_s \) for linear code with symplectic weight, and \( C_a \) for additive code with Hamming weight.

%	The ratio $\frac{k}{n}$ of a code is its information rate.
%	 We say an additive $[n, k + 0.5, d]_4$ code for integer $k$ outperforms the linear codes if no linear $[n, k+1, d]^l_4$ code exists.
%	A code $C_a$ is said to be an additive code over $F_{4}$ if it is an additive subgroup of $F^n_{4}$.
%	If $C_a$ has dimension $k_a$, minimum Hamming distance $d$, then $C_a$ is denoted as $[n,k_a,d]_{4}$.
%		The following lemma gives sufficient conditions for additive codes to outperform linear codes. 
%	\begin{lemma}\label{outperform_linear}
%		Let $C_{n,t}$ be an additive Griesmer code with parameters $[n,k,n-t]_4$, where $k$ is odd.
%		If $C_{n,t}$ is a both Griesmer dimensional and puncture optimal  Quaternary additive GPO and Griesmer codes with non-integer dimensions outperform the linear codes.
%\end{lemma}
%\begin{proof}
%Let $C_{n,t}$ be an additive Griesmer code with parameters $[n,k,n-t]_4$. Since $g(2k,2(n-t))=3n$, $[n,k+0.5,n-t]_4$ code do not exists. 
%Moreover, according to the definition of GPO code,  all Griesmer codes are GPO codes or can be punctured from GPO codes. 
%Therefore, the GPO code has parameters $[n+ \Delta,k,d+ \Delta  ]_4$, where $\Delta$ is a non-negative integer. 
%Obviously, since $[n,k+0.5,n-t]_4$ code do not exists, $[n+ \Delta,k+0.5,d+ \Delta  ]_4$ code also do not exists.
%	\end{proof}

	Define a map $\Phi$ from $F_2^{2n}$ to $F_4^n$ as $\Phi(\bm{v})=(v_0+w\cdot v_n,v_1+w\cdot v_{n+1},\cdots,v_{n-1}+w\cdot v_{2n-1})$, where $w$ generates $F_{4}$ over $F_{2}$.
Then it is easy to see that $\Phi$ is a bijection and $\mathbf{w}(\Phi(\bm{v}))=\mathbf{w}_{s}(\bm{v})$.
Let $G_s$ be a generator matrix of $C_s$. Then, $\Phi(G_s)$ generates an additive code $C_a$ with parameters $[n,\frac{k}{2},d_s]_{4}$. 
 {The \textit{additive form} of $F_4$-vector $\bm{m}=(m_0,m_1,\ldots,m_{k-1})^T$ is $(m_0,w*m_0,m_1,w*m_1,\ldots,m_{k-1},w*m_{k-1})^T$. 
Let $F_{4_a}^k$ denote the vector space consisting of all vectors in $F_4^k$ written in the additive form.}

The extended code of $C_a$ is obtained by adding an appropriate extra coordinate to each codeword of $C_a$ such that the sum of the coordinates of all extended codewords is zero.
The weight distribution of a code $C_a$ is defined by $1+A_1z+A_2z^2+\cdots+A_nz^n$, where $A_i$ is the number of codewords of weight $i$, $1\le i \le n$. 
We call $C_a$ a $t$-weight code if the number of nonzero $A_i$ in the sequence $(A_1, A_2, \cdots , A_n)$ equals $t$.

A $k$-dimensional vector space over $F_2$ can be viewed as $PG(k-1,2)$. 
In $PG(k-1,2)$, one-dimensional subspaces are called points, and the two-dimensional subspaces are called lines. Let $\mathcal{P}$ and $\mathcal{L}$ represent the set of all points and lines in $PG(k-1,2)$, respectively.
The number of points in $PG(k-1,2)$ is $|\mathcal{P} |=2^k-1 $ and the number of lines is $|\mathcal{L} |=\frac{|\mathcal{P} |\cdot(|\mathcal{P}|-1)}{6}$.
A set of all points generates the well-known Simplex code with parameters
$[2^k-1,k,2^{k-1}]_2$, which is also a constant-weight cyclic code\footnote{Binary cyclic Simplex code can be generated by a polynomial with defining set $T=Z_n\setminus C_{1}$.}. 
In \cite{blokhuis2004small}, Blokhuis et al. took quaternary additive code as multisets of lines in $PG(k-1, 2)$, this is, viewed an $F_4$-additive code as an $F_2$-code with symplectic weight. Bierbrauer et al. proposed the parameters $[\frac{(2^k-1)(2^{k-1}-1)}{3}, \frac{k}{2}, 2^{k-2}(2^{k-1}-1)]_4$ of additive Simplex code in \cite{bierbrauer2009short}. 
It is easy to see that the additive Simplex code meets the additive Griesmer bound with equality.

At the end of this section, we introduce a particular lengthening method for additive codes, called Construction X, which is an efficient technique for constructing optimal additive codes in \cite{Guan2023SomeGQ}.
\begin{lemma}(Additive Construction X, \cite{Guan2023SomeGQ})\label{construction_X}
	Assume that $C_{a}$ is an additive $[n,k_1,d_1]_{4}$ code with subcode $[n,k_2,d_2]_{4}$, where $d_2 > d_1$.
	Let $C_{au}$ be an additive $[l,k_1-k_2,\delta ]_{4}$ code. Then, there exists an additive $[n+l,k_1,\min\{\delta+d_1,d_2\} ]_{4}$ code.
\end{lemma}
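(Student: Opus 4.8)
The plan is to mimic the classical Construction X at the level of $F_2$-linear structure, exploiting the fact that a quaternary additive code is precisely an $F_2$-linear subspace of $F_4^n$. First I would record the relevant $F_2$-dimensions. Denote the $[n,k_2,d_2]_4$ subcode by $C_2$. Since $4^{k_i}=2^{2k_i}$, the code $C_a$ has $F_2$-dimension $2k_1$ and $C_2$ has $F_2$-dimension $2k_2$, so the quotient $C_a/C_2$ has $F_2$-dimension $2(k_1-k_2)$, an integer even when $k_1,k_2$ are half-integers. As $C_{au}$ also has $F_2$-dimension $2(k_1-k_2)$, I can fix an $F_2$-linear isomorphism $\psi\colon C_a/C_2\to C_{au}$.

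Next I would define the lengthened code
$$ C' = \{\,(\bm{c},\,\psi(\bm{c}+C_2)) : \bm{c}\in C_a\,\}\subseteq F_4^{n+l}. $$
Because both $\bm{c}\mapsto\bm{c}$ and $\bm{c}\mapsto\psi(\bm{c}+C_2)$ are $F_2$-linear, $C'$ is additive; and since the first $n$ coordinates already recover $\bm{c}$, the defining map is injective, so $|C'|=|C_a|=4^{k_1}$ and $C'$ has length $n+l$ and dimension $k_1$, as required.

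Finally I would bound the minimum weight by a two-case argument on a nonzero codeword $(\bm{c},\psi(\bm{c}+C_2))$. If $\bm{c}\in C_2$, then the appended part vanishes and $\bm{c}$ is a nonzero codeword of the subcode, so the weight equals $\mathbf{w}(\bm{c})\ge d_2$. If $\bm{c}\notin C_2$, then $\bm{c}+C_2$ is a nonzero coset, hence $\psi(\bm{c}+C_2)$ is a nonzero codeword of $C_{au}$ of weight at least $\delta$, while $\bm{c}$ is a nonzero codeword of $C_a$ of weight at least $d_1$; the total weight is therefore at least $d_1+\delta$. In both cases the weight is at least $\min\{\delta+d_1,d_2\}$, which yields the claimed minimum distance.

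I expect no serious obstacle here; the only points requiring care are purely formal — verifying that $\psi$ is well defined on cosets and $F_2$-linear (so that $C'$ is genuinely additive rather than merely a subset of $F_4^{n+l}$), and organizing the two-case weight bound cleanly. Note that the hypothesis $d_2>d_1$ is not actually needed for the minimum-distance inequality itself, which holds verbatim; rather, it guarantees that the construction is non-trivial, producing a code whose distance $\min\{\delta+d_1,d_2\}$ strictly exceeds $d_1$.
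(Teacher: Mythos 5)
Your proposal is correct and takes essentially the same route as the paper: the paper chooses an $F_2$-complement $C_{s_2}$ of the $[n,k_2,d_2]_4$ subcode inside $C_a$ and appends a generator matrix of $C_{au}$ to $G_{s_2}$ in block form, which is exactly your coset isomorphism $\psi$ realized by a choice of splitting, and the two-case weight estimate (codeword in the subcode versus not) is identical. Your closing remark that $d_2>d_1$ is only needed for non-triviality, not for the distance bound, is also accurate.
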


\begin{proof}
 	Since $C_{a}$ has an $[n,k_2,d_2]_{4}$ subcode, $C_a$ can be divided into two disjoint parts, subcodes $C_{s_1}$ and $C_{s_2}$ with parameters $[n,k_2,d_2]_{4}$ and $[n,k_1-k_2,\ge d_1]_{4}$, respectively. Obviously, we have  $C_a=C_{s_1}+C_{s_2}$, and it is $C_{s_2}$ or the combination of $C_{s_1}$ and $C_{s_2}$ decrease the distance of $C_a$ from $d_2$ to $d_1$.  
 	An economic approach to augment the distance of $C_a$ to $d_2$ is horizontal join an auxiliary code $C_{au}$ with $C_{s_2}$. % so that the result code $C_{ax}$ has smaller length $n+l$. 
 	%Let $C_{ax}$ denote the result code. 
 {	Let $G_{s_1}$, $G_{s_2}$  and $G_{au}$ denote generator matrices of $C_{s_1}$, $C_{s_2}$ and $C_{au}$, respectively. Then, additive Construction X can be described as the following matrix form:
	\begin{equation}
		G_{ax}=\left ( \begin{matrix}
			G_{s_1} & \mathbf{0}_{k_2\times l}\\
			G_{s_2}& G_{au}
		\end{matrix} \right ) ,
	\end{equation}
	where $\mathbf{0}_{k_2\times l}$ is a $k_2\times l$ zero matrix. 		
	Obviously, $G_{ax}$ generates an additive $[n+l,k_1,\min\{\delta+d_1,d_2\} ]_{4}$ code.}
\end{proof}
%\begin{lemma}
%		
%		Let $C$ be a linear $[n,k,d]_q^l$ code with generator matrix $G$. Then, $\Phi(G|A_fG)$ generates an additive $[n,k,d]_{q^2}$ code. 
%
%		\begin{equation}
%			A=\begin{pmatrix}
%				\bm{r_0} + w \cdot \bm{r_1}       \\
%				\bm{r_1} +w \cdot \bm{r_2}        \\
%				\vdots                            \\
%				\bm{r_{k-2}}  +w \cdot \bm{r_{k-1}}\\
%				\bm{r_{k-1}}  +w \cdot  \sum\limits_{i \in \mathcal{S}_f} \bm{r_i}
%			\end{pmatrix},
%		\end{equation}
%	where $\bm{r_i}$ is 
%\end{lemma}
%	\begin{proof}
%		Denote $G$ as $G=\begin{pmatrix}
%			\bm{r_0}\\ \bm{r_{1}}\\\vdots\\\bm{r_{k-1}} 
%		\end{pmatrix}$
%	\end{proof}

\section{Additive symbol equal probability codes}\label{Sec III ASEP}

	Given $\alpha\in F_4$, let $\bm{\alpha_{k,n}}$ denote the $k\times n$ matrix such that each entry is $\alpha$, and the notation $\bm{\alpha_{k,n}}$ is simplified as $\bm{\alpha_{n}}$ if $k=1$, and $\bm{\alpha_{k}}^T$ if $n=1$. 
	 Let $\bm{c}=(c_0,c_1,\cdots,c_{n-1}) \in F_4^n$ and $\mathbf{w}_{\alpha}(\bm{c})=|\{i\mid c_i= \alpha, 0 \leq i \leq n-1 \}|$. 
 {	 Define cyclic shift operator $\tau$ as $\tau(\bm{c})=(c_{n-1},c_0,\cdots,c_{n-2})$.}
	Let $[n,k^{\prime},d^{\prime}]_4\subset [n,k,d]_4$ denote that the additive $[n,k,d]_4$ code has an $[n,k^{\prime},d^{\prime}]_4$ subcode.

\begin{definition}
	Let $C_a$ be a quaternary additive code generated by $G_s$. If any codeword $\bm{c}$ of $C_a$ satisfies
	\begin{equation}
		\mathbf{w}_1(\bm{c})=\mathbf{w}_w(\bm{c})=\mathbf{w}_{w^2}(\bm{c}),
	\end{equation}
	then we call $C_a$ an \textit{additive symbol equal probability (ASEP) code}. 
\end{definition}

\begin{lemma}\label{APS_construction}
 {Let $f(x)=f_0+\cdots+f_{k-2}x^{k-2} +f_{k-1}x^{k-1}+ x^k$ be a monic irreducible polynomial\footnote{The property of finite fields guarantees that such polynomials can be found for any $k$ and $q$. A simple way is to choose the generator polynomial of $q$-ary ($n-k$)-dimensional cyclic Hamming code.} with degree $k$ over $F_2[x]$, and $\mathcal{S}_f=\{f_0,f_{1},\ldots,f_{k-1}\}$. 
	Let $S_k=\begin{pmatrix}
	\bm{r_0}\\ \bm{r_{1}}\\\vdots\\\bm{r_{k-1}} 
\end{pmatrix}$ be a generator matrix of binary $k$-dimensional Simplex code $C_{S_k}$, where $\bm{r_i}$ is row vector, $k\ge3$, $0\le i\le k-1$. 
Then, we have the following results.}

 {(1) ASEP code $C_{A_{\frac{k}{2}}}$ with parameters $[2^{k}-1,\frac{k}{2} ,3\cdot2^{k-2}]_4$ can be generated by $A_{\frac{k}{2}}$.}

 {(2) If $S_k$ generates a cyclic Simplex code $C_{S_k}$ and satisfies $\bm{r_{i+1}}=\tau(\bm{r_i})$, then $A_{\frac{k}{2}}^c$ also generates $C_{A_{\frac{k}{2}}}$.
\begin{equation}
	A_{\frac{k}{2}}=\begin{pmatrix}
		\bm{r_0} + w \cdot \bm{r_1}       \\
		\bm{r_1} +w \cdot \bm{r_2}        \\
		\vdots                            \\
		\bm{r_{k-2}}  +w \cdot \bm{r_{k-1}}\\
		\bm{r_{k-1}}  +w \cdot  \sum\limits_{i \in \mathcal{S}_f} \bm{r_i}
	\end{pmatrix},
	A_{\frac{k}{2}}^c=\begin{pmatrix}
		\bm{r_0} + w \cdot \bm{r_1}       \\
		\bm{r_1} +w \cdot \bm{r_2}        \\
		\vdots                            \\
		\bm{r_{k-2}}  +w \cdot \bm{r_{k-1}}\\
		\bm{r_{k-1}}  +w \cdot  \tau(\bm{r_{k-1}})
	\end{pmatrix}.
\end{equation}}

 {(3) The ASEP code $C_{A_{\frac{k}{2}}}$ is an additive both GPO and Griesmer code.}
\end{lemma}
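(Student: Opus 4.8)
The plan is to realise every codeword of $C_{A_{\frac{k}{2}}}$ as $\bm{a}+w\bm{b}$ with $\bm{a},\bm{b}$ codewords of the binary Simplex code $C_{S_k}$, and to reduce all three assertions to the classical weight properties of $C_{S_k}$. Writing a generic codeword as the $F_2$-combination $\sum_{j=0}^{k-1}\lambda_j R_j$ of the rows $R_j$ of $A_{\frac{k}{2}}$, the coefficient of $1$ is $\bm{a}=\sum_{j}\lambda_j\bm{r_j}$ and the coefficient of $w$ is $\bm{b}=\sum_{j=0}^{k-2}\lambda_j\bm{r_{j+1}}+\lambda_{k-1}\sum_{i\in\mathcal{S}_f}\bm{r_i}$. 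Re-expanding $\bm{b}$ in the basis $\bm{r_0},\dots,\bm{r_{k-1}}$ shows that its coordinate vector is exactly $M\bm{\lambda}$, where $M$ is the companion matrix of $f$ (equivalently, under the identification $\bm{r_j}\leftrightarrow x^j$ in $F_{2^k}=F_2[x]/(f(x))$, the vector $\bm{a}$ corresponds to $\lambda(x)$ and $\bm{b}$ to $x\lambda(x)$). Since $w^2=1+w$, at each coordinate $i$ one reads off $c_i=0,1,w,w^2$ according to $(a_i,b_i)=(0,0),(1,0),(0,1),(1,1)$, so that $\mathbf{w}_1(\bm{c})=|\{a_i=1,b_i=0\}|$, $\mathbf{w}_w(\bm{c})=|\{a_i=0,b_i=1\}|$ and $\mathbf{w}_{w^2}(\bm{c})=|\{a_i=1,b_i=1\}|$.

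For part (1) I would first establish the symbol counts under the hypothesis that $\bm{a}$, $\bm{b}$ and $\bm{a}+\bm{b}$ are all nonzero. In that case each is a nonzero Simplex codeword of weight $2^{k-1}$, and the identity $\mathbf{w}(\bm{a}+\bm{b})=\mathbf{w}(\bm{a})+\mathbf{w}(\bm{b})-2|\{a_i=b_i=1\}|$ forces $|\{a_i=b_i=1\}|=2^{k-2}$; hence $\mathbf{w}_{w^2}(\bm{c})=2^{k-2}$ and $\mathbf{w}_1(\bm{c})=\mathbf{w}_w(\bm{c})=2^{k-1}-2^{k-2}=2^{k-2}$ as well, giving both the ASEP property and the constant weight $3\cdot2^{k-2}$. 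The crux is therefore to verify the nonvanishing hypothesis for every nonzero $\bm{\lambda}$, and this is where the irreducibility of $f$ enters: $\bm{a}=\bm{0}$ iff $\bm{\lambda}=\bm{0}$; $\bm{b}=\bm{0}$ iff $M\bm{\lambda}=\bm{0}$, and $M$ is invertible since $\det M=f_0\neq0$; and $\bm{a}=\bm{b}$ iff $(M+I)\bm{\lambda}=\bm{0}$, where $\det(M+I)=f(1)\neq0$ because an irreducible $f$ of degree $k\ge3$ has neither $0$ nor $1$ as a root. In field terms this is just the statement that $\lambda$, $x\lambda$ and $(1+x)\lambda$ are all nonzero in $F_{2^k}$ when $\lambda\neq0$. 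The same computation shows the $k$ rows are $F_2$-independent (the real and $w$ parts must vanish separately), whence $|C_{A_{\frac{k}{2}}}|=2^k$, the additive dimension equals $k/2$, and the length is $2^k-1$, completing the parameters.

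For part (2) I would invoke the standard linear-feedback-shift-register recurrence of the cyclic Simplex code: when $\bm{r_{i+1}}=\tau(\bm{r_i})$ and $f$ is the connection polynomial, the feedback relation gives $\tau(\bm{r_{k-1}})=\sum_{i\in\mathcal{S}_f}\bm{r_i}$. Consequently the last rows of $A_{\frac{k}{2}}$ and $A_{\frac{k}{2}}^c$ coincide while all other rows are identical, so the two matrices are literally equal and generate the same code. For part (3), both optimality claims reduce to evaluating the additive Griesmer function of Lemma~\ref{Griesmer_Bound} with the additive dimension $k/2$ (so that $2k$ becomes $k$) and $d=3\cdot2^{k-2}$. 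A direct summation gives $g(k,2d)=\sum_{i=0}^{k-1}3\cdot2^{k-1-i}=3(2^k-1)=3n$, so the code meets the additive Griesmer bound with equality and is a Griesmer code; while $g(k,2(d+1))=3(2^k-1)+(k+1)=3n+k+1$, so the condition \eqref{griesmer_rd_opt}, namely $3n+3<g(k,2(d+1))$, becomes $3<k+1$, which holds precisely because $k\ge3$. I expect the only genuinely delicate step to be the nonvanishing argument in part (1): everything else is either a textbook property of the Simplex code or a routine ceiling-function computation, whereas ruling out $\bm{b}=\bm{0}$ and $\bm{a}=\bm{b}$ for all nonzero $\bm{\lambda}$ is exactly what the irreducibility of $f$ is there to guarantee.
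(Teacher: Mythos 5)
Your proposal is correct, and for parts (1) and (3) it follows essentially the same route as the paper: the paper also writes codewords as $(\bm{c_l},\bm{c_r})=\bm{c_I}(S_k,A_fS_k)$ with $A_f$ the companion matrix of $f$, derives the weight from the symplectic-weight identity $\mathbf{w}_s(\bm{c})=(\mathbf{w}(\bm{c_l})+\mathbf{w}(\bm{c_r})+\mathbf{w}(\bm{c_l}+\bm{c_r}))/2$ (which is exactly your inclusion--exclusion count), and hinges on the same three nonvanishing facts; your explicit determinants $\det M=f_0$ and $\det(M+I)=f(1)$ replace the paper's citation of auxiliary lemmas from the literature, which makes your version more self-contained. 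The Griesmer computations in (3) agree with the paper's. Where you genuinely diverge is part (2): the paper proves it by recognizing $\Phi^{-1}(A^c_{k/2})$ as a quasi-cyclic code with generator $(g(x),xg(x))$ and invoking an external theorem, whereas you use the feedback identity $\tau(\bm{r_{k-1}})=\sum_{i\in\mathcal{S}_h}\bm{r_i}$ of the cyclic Simplex code to reduce (2) to (1); this is simpler and avoids the quasi-cyclic machinery. One small caveat there: your claim that $A_{\frac{k}{2}}$ and $A^c_{\frac{k}{2}}$ are \emph{literally equal} presupposes that the polynomial $f$ of the lemma is the check (connection) polynomial $h$ of the cyclic Simplex code, which the lemma's hypotheses do not force. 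If $f\neq h$ the two matrices differ; the conclusion still follows because $A^c_{\frac{k}{2}}$ is then the matrix $A_{\frac{k}{2}}$ built from $h$, which is primitive and hence irreducible, so part (1) applies to it directly --- you should add that half-sentence rather than assert equality of the matrices.
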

\begin{proof}	
 {Let $A_f=\begin{pmatrix}
	& 1    &        &           \\
	&      & \ddots &           \\
	&      &        &      1     \\
	f_0 & f_1 & \cdots & f_{k-1} 
\end{pmatrix}$ be a $k\times k$ square matrix over $F_q$, where the empties indicate zero. Let $G_s=(S_k,A_fS_k)=	\left( 
\begin{array}{c|c}
	\bm{r_0}& \bm{r_1}\\ \bm{r_{1}}&\bm{r_2}\\\vdots\\\bm{r_{k-1}} &\sum\limits_{i \in \mathcal{S}_f} \bm{r_i}
\end{array}
\right)$. 
By Lemma 2.5 in \cite{ling2010generalization}, $f(x)$ is a characteristic polynomial of $A_f$, so $Rank(A_f)=k$ and $A_f$ has no eigenvalues in $F_2$. 
Let $C_s$ be the code generated $G_s$ and $\bm{c_I}$ be an $F_2$-vector of length $k$. Then, $\bm{c}=\bm{c_I}(S_k,A_fS_k)=(\bm{c_l},\bm{c_r})$ can represent all codewords of $C_s$. 
Since, $f(x)$ is an irreducible polynomial, for all nonzero $\bm{c_I}$, we have $\bm{c_l}-\bm{c_r}=\bm{c_I}(I_k-A_f)S_k\ne \bm{0}$.
Therefore, $\bm{c_l}+\bm{c_r}$ is also a nonzero codeword of $C_{S_k}$.
With Lemma 2.4 in \cite{ling2010generalization},  $\mathbf{w}_s(\bm{c})=(\mathbf{w}(\bm{c_l})+\mathbf{w}(\bm{c_r})+\mathbf{w}(\bm{c_l}+\bm{c_r}))/2\ge 3\cdot2^{k-2}$, so $C_s$ has parameters $[2^{k+1}-2,\frac{k}{2} ,3\cdot2^{k-2}]_2^s$, i.e., $A_{\frac{k}{2}}=\Phi(G_s)$ generates an additive $[2^{k}-1,\frac{k}{2} ,3\cdot2^{k-2}]_4$ code.}

 {In addition, if $C_{S_k}$ is a cyclic Simplex code with generator polynomial $g(x)$, and $S_k$ satisfies $\bm{r_{i+1}}=\tau(\bm{r_i})$, then the code generated by $\Phi^{-1}(A_{\frac{k}{2}}^c)$ can be viewed as a quasi-cyclic code with generator $(g(x),xg(x))$. By Theorem 1 in \cite{Guan2023SomeGQ}, $(g(x), xg(x))$ generates a quasi-cyclic code $C_{qc}$ with parameters $[2^{k+1}-2,\frac{k}{2} ,3\cdot2^{k-2}]_2^s$. 	
Therefore, $A_{\frac{k}{2}}^c$ also generates an additive $[2^{k}-1,\frac{k}{2} ,3\cdot2^{k-2}]_4$ code.}
	
%	A generator
%	matrix of $\Phi(Cs)$ can be written as: 
%		\begin{equation}
%		\Phi(G_s)=\begin{pmatrix}
%			\bm{r_1}+w \cdot \bm{r_1} +w \cdot \bm{r_2} \\  \bm{r_2}+w \cdot \bm{r_2} +w \cdot \bm{r_3}  \\ \vdots   \\ \bm{r_k} +w \cdot \bm{r_k} +w \cdot \bm{r_1}
%		\end{pmatrix}
%		=w^2\cdot \begin{pmatrix}
%			\bm{r_1} +w \cdot \bm{r_2} \\  \bm{r_2} +w \cdot \bm{r_3}  \\ \vdots   \\ \bm{r_k}  +w \cdot \bm{r_1}
%		\end{pmatrix}
%		=w^2\cdot A_{\frac{k}{2}}.
%	\end{equation}
	
 {	Furthermore, since $k$-dimensional Simplex code $C_{S_k}$ is a constant-weight linear $[2^k-1,k,2^{k-1}]^l_2$ code , for arbitrary nonzero codewords $\bm{c_a}$ and $\bm{c_b}$ ($\bm{c_a}\ne \bm{c_b}$) in $C_{S_k}$, we have $\mathbf{w}_0(\bm{c_a}+\bm{c_b})=2^{k-1}-1$ and $\mathbf{w}_1(\bm{c_a}+\bm{c_b})=2^{k-1}$. 
	Obviously, both the left and right sides of $\Phi^{-1}(A_{\frac{k}{2}})$ and $\Phi^{-1}(A_{\frac{k}{2}}^c)$ can generate $C_{S_k}$. Also, the two sides of their nonzero codewords are always different.
	Therefore, $\bm{c_a}+w\bm{c_b}$ can represent all codewords of $C_{A_{\frac{k}{2}}}$, and we have 
	\begin{equation}
		\begin{array}{l} 
			\left\{  \begin{array}{l}
			\mathbf{w}_0(\bm{c_a}+w\bm{c_b})=2^{k-2}-1. \\
			\mathbf{w}_0(\bm{c_a}+w\bm{c_b})+\mathbf{w}_w(\bm{c_a}+w\bm{c_b})=2^{k-1}-1. \\
			\mathbf{w}_1(\bm{c_a}+w\bm{c_b})+\mathbf{w}_w(\bm{c_a}+w\bm{c_b})=2^{k-1}. \\ 
			\mathbf{w}_w(\bm{c_a}+w\bm{c_b})+\mathbf{w}_{w^2}(\bm{c_a}+w\bm{c_b})=2^{k-1}. 
		\end{array}\right. 
	\\
	\Rightarrow 
	\left\{  \begin{array}{l}
		\mathbf{w}_1(\bm{c_a}+w\bm{c_b})=2^{k-2}. \\
		\mathbf{w}_w(\bm{c_a}+w\bm{c_b})=2^{k-2}. \\ 
		\mathbf{w}_{w^2}(\bm{c_a}+w\bm{c_b})=2^{k-2}. 
	\end{array}\right. 
	\end{array} 
	\end{equation}}
	
 {	Thus, the additive $[2^{k}-1,\frac{k}{2} ,3\cdot2^{k-2}]_4$ codes generated by  $A_{\frac{k}{2}}$ and $A_{\frac{k}{2}}^c$ are both ASEP codes. }
	
 {		Finally, since $g(k,3\cdot2^{k-1})=3 (2^{k}-1)$ and $g(k,3\cdot2^{k-1}+2)=3\cdot2^{k}+k-2>3\cdot2^{k}$, by Lemma \ref{Griesmer_Bound}, additive $[2^{k}-1,\frac{k}{2} ,3\cdot2^{k-2}]_4$ code is an additive both GPO and Griesmer code.}
		%$\sum\limits_{i=0}^{k-1}\left\lceil\frac{3\cdot2^{k-2}}{2^{i-1}}\right\rceil=3\cdot\sum\limits_{i=0}^{k-1}2^{k-i-1}$ and 
\end{proof}

\begin{remark}
	Remarkably, Bierbrauer et al. \cite{Bierbrauer2023AnAP} also recently investigated geometrically optimal additive codes with parameters $[2^{k}-1,\frac{k}{2} ,3\cdot2^{k-2}]_4$ and proved they are 3-cover.
	Here, Lemma \ref{APS_construction} can be regarded as another construction of these optimal additive codes from a different point of view. 
\end{remark}
%Lemma \ref{APS_construction} is a convenient way to construct all APS codes, which can be regarded as a particular combination of points in $PG(k-1,2)$. %Therefore, we can regard APS codes as consisting of additive points in a generalized sense. 

\begin{example}
 {	When $k=3$, we can easily get a 1.5-dimensional ASEP code $C_{A_{1.5}}$ from 3-dimensional Simplex code by Lemma \ref{APS_construction}.  
	Here, we show the details.}
	
 {	Firstly, the generator matrix $S_3$ of binary $3$-dimensional Simplex code may be chosen as
	\begin{equation}
		S_3=\left(\begin{array}{ccccccc}
			1 & 0 & 0 & 1 & 1 & 0 & 1 \\
			0 & 1 & 0 & 1 & 0 & 1 & 1 \\
			0 & 0 & 1 & 0 & 1 & 1 & 1
		\end{array}\right).
	\end{equation}}

 {	By (1) of Lemma \ref{APS_construction}, we can choose a monic irreducible polynomial $f(x)=x^3 + x + 1$, which corresponding to a non-singular matrix
	$A_f=\left(\begin{array}{ccc}
		0&1&0\\
		0&0&1\\
		1& 1& 0
	\end{array}\right)$. $A_f$ has no eigenvalues in $F_2$.
	Then, we have 
	\begin{equation}
					\setlength{\arraycolsep}{0.5pt}
		A_{1.5}=\Phi(S,A_fS)=\left(\begin{array}{ccccccc}
			1&w&0&w^2&1& w& w^2\\
			0&1&w&1&w& w^2&w^2\\
			w& w& 1& 0& w^2& w^2& 1
		\end{array}\right).\end{equation}
	It is easy to verify that $A_{1.5}$ generates an ASEP code $C_{1.5}$ with parameters $[7,1.5,6]_4$. }
%	\begin{equation}\label{1.5ASEP}
%		A_{1.5}=\begin{pmatrix}
%			1 & w & 1 & w^2 & w^2 & w   & 0   \\
%			w & 0 & 1 & w   & 1   & w^2 & w^2 \\
%			0 & 1 & w & 1   & w^2 & w^2 & w
%		\end{pmatrix}.
%	\end{equation}
%	One can check that $A_{1.5}$ indeed generates an ASEP code with parameters $[7,1.5,6]_4$. 
\end{example}

For ASEP Griesmer codes with integer dimensions, we prefer a more straightforward approach to construct them, as the following lemma.

\begin{lemma}\label{divided_Simplex} 
	Let $G_l$ be a generator matrix of the quaternary $l$-dimensional Simplex code, where $l\ge2$.  
	Then, the following matrix $A_{l}$ generates an ASEP Griesmer code $C_{A_l}$ with parameters $[2^{2l}-1,l,3\cdot2^{2l-2}]_4$.  
	\begin{equation}\label{integer_ASEP}
		A_l=\begin{pmatrix}
			G_l  & wG_l & w^2G_l \\
			wG_l  & w^2G_l &  G_l 
		\end{pmatrix}.
	\end{equation}
%	
%	The ASEP code $C_{A_\frac{k_2}{2}}$ can be written as three equivalent parts, one of which can be expressed as
%	\begin{equation}
%		A_\frac{k_2}{2}=\left( 
%		\begin{array}{c|c|c}
%			G_{\frac{k_2}{2}}  & w*G_{\frac{k_2}{2}} & w^2*G_{\frac{k_2}{2}} \\
%			w*G_{\frac{k_2}{2}}  & w^2*G_{\frac{k_2}{2}} &  G_{\frac{k_2}{2}} 
%		\end{array}
%		\right),
%	\end{equation}
%	where $G_{\frac{k_2}{2}}$ is a generator matrix of quaternary $\frac{k_2}{2}$-dimensional Simplex code.
\end{lemma}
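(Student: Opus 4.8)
The plan is to exhibit an arbitrary codeword of $C_{A_l}$ and reduce it to a transparent normal form. Write the rows of $G_l$ as $\bm{g_1},\ldots,\bm{g_l}$, so that a generic codeword of $C_{A_l}$ is the $F_2$-combination of the $2l$ rows of $A_l$ with coefficient vectors $\bm{a}=(a_1,\ldots,a_l)$ (top block) and $\bm{b}=(b_1,\ldots,b_l)$ (bottom block) in $F_2^l$. Setting $\bm{u}=\sum_i a_i\bm{g_i}$ and $\bm{v}=\sum_i b_i\bm{g_i}$, a direct computation of the three column-blocks produces the codeword $(\bm{u}+w\bm{v}\mid w\bm{u}+w^2\bm{v}\mid w^2\bm{u}+\bm{v})$. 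The first key observation is that, using $w^3=1$, the second and third blocks equal exactly $w$ and $w^2$ times the first; hence every codeword collapses to the compact form $(\bm{x}\mid w\bm{x}\mid w^2\bm{x})$ with $\bm{x}=\bm{u}+w\bm{v}$.

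The second key observation is that, as $(\bm{a},\bm{b})$ ranges over $F_2^l\times F_2^l$, the vector $\bm{x}=\sum_i(a_i+w b_i)\bm{g_i}$ ranges over all $F_4$-combinations of the $\bm{g_i}$, because $a_i+w b_i$ runs through all of $F_4$ (namely $0,1,w,w^2$) as $(a_i,b_i)$ runs through $F_2^2$. Thus $\bm{x}$ traverses the entire quaternary $l$-dimensional Simplex code $C_l$, and the correspondence $\bm{x}\leftrightarrow(\bm{x}\mid w\bm{x}\mid w^2\bm{x})$ is a bijection. This at once gives $|C_{A_l}|=|C_l|=4^l$, so the $2l$ rows are $F_2$-independent and the dimension is $l$, and it gives length $3\cdot\tfrac{4^l-1}{3}=2^{2l}-1$.

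Next I would read off the distance and the ASEP property from this normal form. Since multiplication by $w$ or $w^2$ merely permutes the nonzero symbols of $F_4$, each of the three blocks has the same Hamming weight, so $\mathbf{w}(\bm{x}\mid w\bm{x}\mid w^2\bm{x})=3\,\mathbf{w}(\bm{x})=3\cdot 2^{2l-2}$ by the constant-weight property $\mathbf{w}(\bm{x})=4^{l-1}$ of the Simplex code; this yields both constant-weightness and the minimum distance. For the ASEP condition, let $n_1,n_w,n_{w^2}$ count the occurrences of $1,w,w^2$ in $\bm{x}$. The cyclic action $1\mapsto w\mapsto w^2\mapsto 1$ induced by multiplication by $w$ (resp.\ $w^2$) shows that each nonzero symbol $\alpha$ is contributed exactly once by each block, whence $\mathbf{w}_\alpha(\bm{x}\mid w\bm{x}\mid w^2\bm{x})=n_1+n_w+n_{w^2}=\mathbf{w}(\bm{x})$ for every $\alpha\in\{1,w,w^2\}$, establishing $\mathbf{w}_1=\mathbf{w}_w=\mathbf{w}_{w^2}$.

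Finally, to confirm Griesmer-optimality I would substitute $2k=2l$ and $d=3\cdot 2^{2l-2}$ into Lemma \ref{Griesmer_Bound}: since $\frac{d}{2^{i-1}}=3\cdot 2^{2l-1-i}$ with nonnegative exponent for all $0\le i\le 2l-1$, every ceiling is an exact power of two, so $g(2l,2d)=3\sum_{j=0}^{2l-1}2^{j}=3(2^{2l}-1)=3n$, and the bound holds with equality. The only step requiring genuine care is the structural reduction of the first two paragraphs — verifying the block identities via $w^3=1$ and the surjectivity of $(\bm{a},\bm{b})\mapsto\bm{x}$ onto $C_l$; once the normal form $(\bm{x}\mid w\bm{x}\mid w^2\bm{x})$ is in hand, the length, dimension, distance, ASEP property, and Griesmer equality all follow routinely.
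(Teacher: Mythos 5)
Your proof is correct, and it takes the only natural route: the paper itself compresses the entire argument into the single sentence ``This lemma follows from the properties of Simplex codes,'' so your write-up simply supplies the verification the authors leave implicit. The normal-form reduction to $(\bm{x}\mid w\bm{x}\mid w^2\bm{x})$ with $\bm{x}$ ranging over the quaternary Simplex code, together with the constant-weight property $\mathbf{w}(\bm{x})=2^{2l-2}$ and the symbol-permuting action of multiplication by $w$, is exactly the structure the paper is relying on, and your checks of the dimension, length, ASEP condition, and Griesmer equality are all sound.
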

\begin{proof}
	This lemma follows from the properties of Simplex codes.
\end{proof}
% In addition, higher integer dimensions ASEP codes can be generated similarly. 

It is easy to check that, if $k\ge4$ is even, then the codes in Lemma \ref{divided_Simplex} and (1) of Lemma \ref{APS_construction} have the same parameters, where $l=\frac{k}{2}$. 
For convenience, in the paper, we agree that the parameters of the ASEP codes are all $[2^{k}-1,\frac{k}{2} ,3\cdot2^{k-2}]_4$, and denote them as $C_{A_{\frac{k}{2}}}$, and default its generator matrix as $A_{\frac{k}{2}}$.
%Here, we give an alternative way of iteratively construct ASEP codes.

%Here is a crucial derivation for constructing high-dimensional ASEP codes, especially for non-integer dimensions.

\begin{lemma}\label{aug_Simplex}
	Let the symbols be the same as above. Then, the following results hold.
	
	(1) Verticaljoin $\bm{0_{2^{k}-1}}$, $\bm{1_{2^{k}-1}}$ and  $\bm{w_{2^{k}-1}}$ with the generator matrix of $C_{A_{\frac{k}{2}}}$ yields an additive Griesmer code $C_{Au_\frac{k}{2}}$ with parameters $[2^{k}-1,\frac{k}{2}+1 ,3\cdot2^{k-2}-1]_4$, and there is 
	\begin{equation}
		[2^k-1,1,2^k-1]_4\subset [2^k-1,\frac{k}{2}+1,3\cdot 2^{k-2}-1]_4;
	\end{equation}

	(2) $C_{Au_\frac{k}{2}}$ is a three-weight code with weight distribution $1+(2^{k+2}-2^k-3)\cdot z^{3\cdot 2^{k-2}-1}+(2^k-1)\cdot z^{3\cdot 2^{k-2}}+3z^{2^k-1}$; 
	
	(3) Extending $C_{Au_\frac{k}{2}}$ gives an additive both GPO and Griesmer code $C_{Aue_\frac{k}{2}}$ with parameters $[2^k,\frac{k}{2}+1,3\cdot 2^{k-2}]_4$, there also exists
	\begin{equation}
		[2^k,1,2^k]_4\subset [2^k,\frac{k}{2}+1,3\cdot 2^{k-2}]_4;
	\end{equation}
	
	(4) $C_{Aue_\frac{k}{2}}$ is a two-weight code with weight distribution $1+(2^{k+2}-4)\cdot z^{3\cdot 2^{k-2}}+3z^{2^k}$. 
	
\end{lemma}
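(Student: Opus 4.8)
The plan is to realize $C_{Au_\frac{k}{2}}$ as a coset decomposition over the repetition code and read everything off from the ASEP symbol counts. Write $R=\{\lambda\bm{1_{2^k-1}}:\lambda\in F_4\}$ for the quaternary repetition code of length $2^k-1$. Since $\bm{1}+\bm{w}=\bm{w^2}$, the rows $\bm{0},\bm{1},\bm{w}$ appended to $A_{\frac{k}{2}}$ have $F_2$-rank $2$ and generate exactly $R$, so $C_{Au_\frac{k}{2}}=C_{A_{\frac{k}{2}}}+R$ and every codeword has the shape $\bm{c}+\lambda\bm{1}$ with $\bm{c}\in C_{A_{\frac{k}{2}}}$ and $\lambda\in F_4$. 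Because $C_{A_{\frac{k}{2}}}$ is constant-weight $3\cdot2^{k-2}$ while each nonzero word of $R$ has full weight $2^k-1>3\cdot2^{k-2}$ (for $k\ge3$), we get $C_{A_{\frac{k}{2}}}\cap R=\{\bm{0}\}$, so the sum is direct, $\dim=\frac{k}{2}+1$, and the subcode $R$ is the asserted $[2^k-1,1,2^k-1]_4$.

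The heart of the argument is to compute $\mathbf{w}(\bm{c}+\lambda\bm{1})$ from the ASEP multiplicities: a nonzero $\bm{c}$ has $\mathbf{w}_0(\bm{c})=2^{k-2}-1$ and $\mathbf{w}_1(\bm{c})=\mathbf{w}_w(\bm{c})=\mathbf{w}_{w^2}(\bm{c})=2^{k-2}$. I would split into cases: $\lambda=0$ gives weight $3\cdot2^{k-2}$; for $\lambda\ne0$, $\bm{c}\ne\bm{0}$, adding $\lambda$ turns the $2^{k-2}$ positions equal to $\lambda$ into zeros and the $2^{k-2}-1$ zeros into nonzeros, so exactly $2^{k-2}$ coordinates vanish and the weight is $(2^k-1)-2^{k-2}=3\cdot2^{k-2}-1$; and $\bm{c}=\bm{0}$, $\lambda\ne0$ gives the full-weight words of $R$. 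Counting $2^k-1$, $3(2^k-1)=2^{k+2}-2^k-3$, and $3$ codewords in these families yields the three-weight distribution of (2) and minimum $3\cdot2^{k-2}-1$, proving (1). The Griesmer claim is then a ceiling computation: with $2k_a=k+2$ and $2d=3\cdot2^{k-1}-2$, evaluating $\sum_{i=0}^{k+1}\lceil(3\cdot2^{k-1}-2)/2^i\rceil$ term by term gives $3\cdot2^k-3=3n$.

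For (3)–(4) I would compute the extension coordinate $\sum_i(c_i+\lambda)$. Two facts drive everything: $\sum_i c_i=2^{k-2}(1+w+w^2)=0$ for every $\bm{c}\in C_{A_{\frac{k}{2}}}$, while $\sum_i\lambda=(2^k-1)\lambda=\lambda$ since $2^k-1$ is odd in characteristic $2$. Hence the extension coordinate equals $\lambda$, vanishing exactly on the ASEP words and nonzero on all others. Thus the weight-$3\cdot2^{k-2}$ words keep their weight while the weight-$(3\cdot2^{k-2}-1)$ words and the full-weight words each gain one, merging into a two-weight code with $2^{k+2}-4$ words of weight $3\cdot2^{k-2}$ and $3$ words of weight $2^k$; this is (4), the minimum is $3\cdot2^{k-2}$, and the extended repetition words give the subcode $[2^k,1,2^k]_4$, proving (3). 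Optimality follows from $g(k+2,3\cdot2^{k-1})=3\cdot2^k=3n$ (Griesmer) and $g(k+2,3\cdot2^{k-1}+2)=3\cdot2^k+k+1>3\cdot2^k+3=3(n+1)$ for $k\ge3$, which rules out a $[2^k+1,\frac{k}{2}+1,3\cdot2^{k-2}+1]_4$ code, giving GPO.

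I expect the main obstacle to be the weight bookkeeping for $\bm{c}+\lambda\bm{1}$: the whole scheme hinges on the ASEP multiplicities being exactly $2^{k-2}-1$ zeros and $2^{k-2}$ of each of $1,w,w^2$, so that adding a constant $\lambda$ flips precisely $2^{k-2}$ coordinates to zero \emph{independently} of $\bm{c}$ and $\lambda$. Getting this count, together with the characteristic-$2$ evaluation $(2^k-1)\lambda=\lambda$ and the identity $1+w+w^2=0$, exactly right is what makes all four parts fall out simultaneously, whereas the Griesmer and GPO verifications are routine (if slightly tedious) ceiling sums.
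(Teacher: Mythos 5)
Your proposal is correct and follows essentially the same route as the paper: decompose $C_{Au_\frac{k}{2}}$ as the direct sum of the ASEP code and the repetition code, use the ASEP multiplicities ($2^{k-2}-1$ zeros, $2^{k-2}$ of each nonzero symbol) to show that adding $\lambda\bm{1}$ with $\lambda\ne 0$ zeroes out exactly $2^{k-2}$ coordinates, count the three codeword families to get the weight distribution, and verify the Griesmer/GPO claims by the ceiling sums. Your explicit computation of the extension coordinate as $(2^k-1)\lambda=\lambda$ is a slightly more detailed rendering of the paper's parity argument for parts (3)--(4), but the substance is identical.
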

\begin{proof}
	For (1), since $C_{A_\frac{k}{2}}$ is an ASEP code, adding $\bm{0_{2^{k}-1}}$ to $C_{A_\frac{k}{2}}$ makes no difference, and for given any codeword $\bm{c}$ of $C_{A_\frac{k}{2}}$ and nonzero element $\beta$ in $F_4$, we have $\mathbf{w}_0(\bm{c}+\bm{\beta_{2^{k}-1}})=2^{k-2}$ and 
	$
		S=\{ \mathbf{w}_1(\bm{c}+\bm{\beta_{2^{k}-1}}), \mathbf{w}_w(\bm{c}+\bm{\beta_{2^{k}-1}}),\mathbf{w}_{w^2}(\bm{c}+\bm{\beta_{2^{k}-1}})\}=\{2^{k-2}-1,2^{k-2} \}.
	$
	 Furthermore, there is only one value of $2^k-1$ in $\mathbf{w}_1(\bm{c}+\bm{\beta_{2^{k}-1}})$, $\mathbf{w}_w(\bm{c}+\bm{\beta_{2^{k}-1}})$ and $\mathbf{w}_{w^2}(\bm{c}+\bm{\beta_{2^{k}-1}})$. Thus, $d_H(C_{Au_\frac{k}{2}})\ge 3\cdot2^{k-2}-1$. Any basis of $C_{A_\frac{k}{2}}$ can not span $\bm{1_{2^{k}-1}}$, $\bm{w_{2^{k}-1}}$ and  $\bm{w^2_{2^{k}-1}}$, so the dimension of $C_{Au_\frac{k}{2}}$ is $\frac{k}{2}+1$. Since $g(k+2,3\cdot2^{k-1}-2)=3(2^{k}-1)$, $[2^{k}-1,\frac{k}{2}+1 ,3\cdot2^{k-2}-1]_4$ code is an additive Griesmer code.
	
	For (2),  since $C_{Au_\frac{k}{2}}$ has two disjoint subcodes $[2^k-1,1,2^k-1]_4$ and $[2^{k}-1,\frac{k}{2} ,3\cdot2^{k-2}]_4$, respectively, all the codewords of weight $3\cdot2^{k-2}-1$ are a combination of codewords of these two subcodes. Therefore, $C_{Au_\frac{k}{2}}$ has weight distribution $1+(2^{k+2}-2^k-3)\cdot z^{3\cdot 2^{k-2}-1}+(2^k-1)\cdot z^{3\cdot 2^{k-2}}+3z^{2^k-1}$.  
	
	For (3), for the reason that there is $[2^{k}-1,\frac{k}{2},3\cdot2^{k-2}]_4\subset [2^k-1,\frac{k}{2}+1,3\cdot 2^{k-2}-1]_4$, by extending additive code $[2^k-1,1,2^k-1]_4$ we can get a $[2^k,\frac{k}{2}+1,3\cdot 2^{k-2}]_4$ code.
	Since $g(k+2,3\cdot2^{k-1})=3\cdot2^k$ and $g(k+2,3\cdot2^{k-1}+2)=3\cdot2^k+k+1>3\cdot2^k+3$, $[2^k,\frac{k}{2}+1,3\cdot 2^{k-2}]_4$ code is an additive both GPO and Griesmer code.
	
	For (4), since all combinations of the codewords in $[2^k,1,2^k]_4$ code and $[2^{k}-1,\frac{k}{2},3\cdot2^{k-2}]_4$ code give codewords of weight $3\cdot2^{k-2}$, $C_{Aue_\frac{k}{2}}$ is a two-weight code with weight distribution  $1+(2^{k+2}-4)\cdot z^{3\cdot 2^{k-2}}+3z^{2^k}$.
\end{proof}

Based on Lemma \ref{aug_Simplex}, we give the following method to construct high-dimensional ASEP codes from low-dimensional ASEP codes iteratively.
\begin{lemma}\label{iterating}
	Let the symbols be the same as above.
	Let $C_{A_{\frac{k}{2}}}$ be a $\frac{k}{2}$-dimensional ASEP code. Then the ($\frac{k}{2}+1$)-dimensional ASEP code $C_{A_{\frac{k}{2}+1}}$ can be generated by 
	\begin{equation}\label{E_iterating}
			\setlength{\arraycolsep}{0pt}
		A_{\frac{k}{2}+1}=\begin{pmatrix}
			A_{\frac{k}{2}}      & A_{\frac{k}{2}}      & A_{\frac{k}{2}}        & A_{\frac{k}{2}}        & \bm{0_{k\times 1}} & \bm{0_{k\times 1}} & \bm{0_{k\times 1}} \\
			\bm{0_{1,2^k-1}} & \bm{1_{1,2^k-1}} & \bm{w_{1,2^k-1}}   & \bm{w^2_{1,2^k-1}} & 1                      & w                      & w^2                    \\
			\bm{0_{1,2^k-1}} & \bm{w_{1,2^k-1}} & \bm{w^2_{1,2^k-1}} & \bm{1_{1,2^k-1}}   & w                      & w^2                    & 1
		\end{pmatrix}.
	\end{equation}
\end{lemma}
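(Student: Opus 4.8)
The plan is to parametrize the codewords directly and read off the symbol counts block by block. I would write an arbitrary information vector as $\bm{c_I}=(\bm{a},b_1,b_2)$, where $\bm{a}\in F_2^{k}$ selects the rows coming from the four $A_{\frac{k}{2}}$ blocks and $b_1,b_2\in F_2$ select the two new rows, and then set $\bm{x}=\bm{a}A_{\frac{k}{2}}\in F_4^{2^k-1}$ (a codeword of $C_{A_{\frac{k}{2}}}$) together with $\mu=b_1+b_2w\in F_4$. First I would compute $\bm{c}=\bm{c_I}A_{\frac{k}{2}+1}$ columnwise: the last two rows contribute the constants $b_1\cdot 1+b_2 w=\mu$, $b_1 w+b_2 w^2=w\mu$ and $b_1 w^2+b_2=w^2\mu$ to blocks $2,3,4$ (and to the three trailing coordinates), where the identity $b_1 w^2+b_2=w^2\mu$ uses $w^3=1$. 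This yields the clean form
\begin{equation}
	\bm{c}=\bigl(\bm{x},\ \bm{x}+\mu\bm{1_{2^k-1}},\ \bm{x}+w\mu\bm{1_{2^k-1}},\ \bm{x}+w^2\mu\bm{1_{2^k-1}},\ \mu,\ w\mu,\ w^2\mu\bigr).
\end{equation}

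The heart of the argument is a case split on $\mu$. When $\mu\neq 0$, the four additive shifts $\{0,\mu,w\mu,w^2\mu\}$ are exactly $F_4$, so at every one of the $2^k-1$ coordinates the four entries $x_i,x_i+\mu,x_i+w\mu,x_i+w^2\mu$ run over all of $F_4$ precisely once; summing over coordinates, each of $0,1,w,w^2$ occurs exactly $2^k-1$ times inside the four blocks, \emph{regardless} of $\bm{x}$. The trailing triple $(\mu,w\mu,w^2\mu)$ is a permutation of $(1,w,w^2)$, adding one further occurrence to each nonzero symbol, so $1,w,w^2$ each appear $2^k$ times. When $\mu=0$ we have $\bm{c}=(\bm{x},\bm{x},\bm{x},\bm{x},0,0,0)$, and here I would invoke the ASEP hypothesis from Lemma \ref{APS_construction}: a nonzero $\bm{x}$ has $1,w,w^2$ each appearing $2^{k-2}$ times, so the fourfold repetition makes each appear $4\cdot 2^{k-2}=2^k$ times. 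In both cases every nonzero codeword satisfies $\mathbf{w}_1(\bm{c})=\mathbf{w}_w(\bm{c})=\mathbf{w}_{w^2}(\bm{c})=2^k$, which is exactly the ASEP (and indeed constant-weight) condition.

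Finally I would settle dimension and distance from the same display. The map $\bm{c_I}\mapsto\bm{c}$ is injective: if $\mu\neq 0$ the trailing coordinates are nonzero, while if $\mu=0$ then $\bm{c}=\bm{0}$ forces $\bm{x}=\bm{0}$, hence $\bm{a}=\bm{0}$ since the rows of $A_{\frac{k}{2}}$ are $F_2$-linearly independent; thus only $\bm{c_I}=\bm{0}$ maps to $\bm{0}$, the code has $2^{k+2}$ elements, and its dimension is $\frac{k}{2}+1$. The length is $4(2^k-1)+3=2^{k+2}-1$, and the constant weight $3\cdot 2^k$ gives minimum distance $3\cdot 2^k$, so $C_{A_{\frac{k}{2}+1}}$ is the ASEP code with the agreed parameters $[2^{k+2}-1,\frac{k}{2}+1,3\cdot 2^{k}]_4$. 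The only genuinely delicate step is the columnwise bookkeeping that produces the shift set $\{0,\mu,w\mu,w^2\mu\}=F_4$; once this is in place the counting is immediate, and it is worth noting that the inductive ASEP hypothesis on $C_{A_{\frac{k}{2}}}$ is actually needed only in the degenerate case $\mu=0$.
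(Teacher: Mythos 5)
Your proof is correct, and it takes a more direct route than the paper's. The paper proves this lemma by regrouping the columns into one copy of $A_{\frac{k}{2}}$ plus three blocks of the augmented-and-extended form from Lemma \ref{aug_Simplex}, then invoking the two-weight structure of $C_{Aue_{\frac{k}{2}}}$ (parameters $[2^k,\frac{k}{2}+1,3\cdot 2^{k-2}]_4$ with the $[2^k,1,2^k]_4$ subcode) to deduce the symbol counts of the juxtaposition $C'_{A_{\frac{k}{2}+1}}$ and finally of the whole code. You instead parametrize every codeword explicitly as $\bigl(\bm{x},\,\bm{x}+\mu\bm{1},\,\bm{x}+w\mu\bm{1},\,\bm{x}+w^2\mu\bm{1},\,\mu,\,w\mu,\,w^2\mu\bigr)$ and split on $\mu$: for $\mu\neq 0$ the coset identity $\{0,\mu,w\mu,w^2\mu\}=F_4$ forces each symbol to appear once per coordinate position across the four blocks, independently of $\bm{x}$, and for $\mu=0$ the ASEP hypothesis on $C_{A_{\frac{k}{2}}}$ finishes the count. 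What your version buys is self-containedness (no appeal to Lemma \ref{aug_Simplex}), an explicit injectivity argument settling the dimension (which the paper leaves implicit), and the genuinely clarifying observation that the inductive ASEP hypothesis is needed only in the degenerate case $\mu=0$; what the paper's version buys is reuse of machinery it has already built and will use again (the blocks of Equation (\ref{Eq_aug_Simplex}) reappear in Lemma \ref{combination_X_construction}). Both arguments reach the same constant-weight conclusion $\mathbf{w}_1=\mathbf{w}_w=\mathbf{w}_{w^2}=2^k$ for every nonzero codeword, hence the parameters $[2^{k+2}-1,\frac{k}{2}+1,3\cdot 2^{k}]_4$.
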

\begin{proof}
	The matrix $A_{\frac{k}{2}+1}$ can be divided into four parts as follows:
	\begin{equation}
					\setlength{\arraycolsep}{0pt}
		A_{\frac{k}{2}+1}=\left( 
		\begin{array}{c|cc|cc|cc}
			A_{\frac{k}{2}}      & A_{\frac{k}{2}}      & \bm{0_{k\times 1}} & A_{\frac{k}{2}}        & \bm{0_{k\times 1}} & A_{\frac{k}{2}}        & \bm{0_{k\times 1}} \\
			\bm{0_{1,2^k-1}} & \bm{1_{1,2^k-1}} & 1                      & \bm{w_{1,2^k-1}}   & w                      & \bm{w^2_{1,2^k-1}} & w^2                    \\
			\bm{0_{1,2^k-1}} & \bm{w_{1,2^k-1}} & w                      & \bm{w^2_{1,2^k-1}} & w^2                    & \bm{1_{1,2^k-1}}   & 1
		\end{array}
		\right).
	\end{equation}

	By Lemma \ref{aug_Simplex}, the code $C_{Aue_\frac{k}{2}}$ generated by 
	\begin{equation} \label{Eq_aug_Simplex}
		\begin{pmatrix}
			A_{\frac{k}{2}}	& \bm{0_{k, 1}}\\
			\bm{\beta_{2^{k}-1}}&\beta	\\
			w\cdot \bm{\beta_{2^{k}-1}}&w\cdot \beta
		\end{pmatrix}
	\end{equation} is a two-weight additive code with parameters $[2^{k},\frac{k}{2}+1 ,3\cdot2^{k-2}]_4$.

				\setlength{\arraycolsep}{0pt}
	$A^{\prime }_{\frac{k}{2}+1}=\left( 
	\begin{array}{cc|cc|cc}
		A_{\frac{k}{2}} & \bm{0_{k, 1}} & A_{\frac{k}{2}} & \bm{0_{k, 1}} & A_{\frac{k}{2}} & \bm{0_{k, 1}} \\
		\bm{1_{1,2^k-1}}     & 1             & \bm{w_{1,2^k-1}}     & w             & \bm{w^2_{1,2^k-1}}   & w^2           \\
		\bm{w_{1,2^k-1}}     & w             & \bm{w^2_{1,2^k-1}}   & w^2           & \bm{1_{1,2^k-1}}     & 1
	\end{array}
	\right)$ generates a two-weight additive code $C_{A_{\frac{k}{2}+1}}^{\prime}$ with parameters $[3\cdot 2^k,\frac{k}{2}+1,9\cdot 2^{k-2}]_4$. For any nonzero codeword $\bm{c^{\prime}}$ in $C_{A_{\frac{k}{2}+1}}^{\prime}$, there is $\mathbf{w}_0(\bm{c^{\prime}})=\mathbf{w}_1(\bm{c^{\prime}})=\mathbf{w}_w(\bm{c^{\prime}})=\mathbf{w}_{w^2}(\bm{c^{\prime}})=3\cdot 2^{k-2}$ or $\mathbf{w}_1(\bm{c^{\prime}})=\mathbf{w}_w(\bm{c^{\prime}})=\mathbf{w}_{w^2}(\bm{c^{\prime}})=2^{k}$. 
	Thus, $A_{\frac{k}{2}+1}$ can generate an additive code with parameters $[2^{k+2}-1,\frac{k}{2}+1,3\cdot 2^k]_4$, which satisfies for any nonzero codeword$\bm{c^{\prime\prime}}$ in $C_{A_{\frac{k}{2}+1}}$, $\mathbf{w}_1(\bm{c^{\prime\prime}})=\mathbf{w}_w(\bm{c^{\prime\prime}})=\mathbf{w}_{w^2}(\bm{c^{\prime\prime}})= 2^{k}$. 
	Therefore, it is indeed ($\frac{k}{2}+1$)-dimensional ASEP code $C_{A_{\frac{k}{2}+1}}$.
\end{proof}

By Lemma \ref{iterating}, we also can iteratively construct $C_{A_{\frac{k}{2}}}$ from $C_{1.5}$ for $k>3$ and $k$ is odd, such that it has an ordered matrix structure.

\begin{lemma}\label{combination_X_construction}
	If there exists an additive code $C_a$ with parameters $[n,\frac{k}{2},d]_4$, then there also exists an additive code with parameters
	\begin{equation}
		\left[s\cdot 2^k+n,\frac{k}{2}+1, 3s\cdot 2^{k-2}+d \right]_4
	\end{equation}
	 satisfies $s\ge \lceil \frac{d}{2^{k-2}} \rceil$.
\end{lemma}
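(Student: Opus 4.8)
The plan is to build the new code by horizontally repeating the extended augmented code $C_{Aue_{\frac{k}{2}}}$ of Lemma \ref{aug_Simplex}(3) $s$ times to supply the extra half-dimension and the $3s\cdot 2^{k-2}$ contribution to the distance, and then to graft the given code $C_a$ onto it by additive Construction X (Lemma \ref{construction_X}) to account for the remaining length $n$ and distance $d$. The key feature I exploit is that $C_{Aue_{\frac{k}{2}}}$ is a $[2^k,\frac{k}{2}+1,3\cdot 2^{k-2}]_4$ two-weight code whose nonzero weights are only $3\cdot 2^{k-2}$ and $2^k$, and which contains the one-dimensional constant-weight subcode $[2^k,1,2^k]_4$.

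First I would form the $s$-fold horizontal repetition $C$ of $C_{Aue_{\frac{k}{2}}}$. Since each of the $s$ identical blocks contributes the same weight to a given codeword, every nonzero codeword of $C$ has weight equal to $s$ times its weight in a single block; hence $C$ is a $[s\cdot 2^k,\frac{k}{2}+1,3s\cdot 2^{k-2}]_4$ code, and the repeated copy of its constant-weight subcode is a $[s\cdot 2^k,1,s\cdot 2^k]_4$ subcode. Because $2^k=4\cdot 2^{k-2}>3\cdot 2^{k-2}$, this subcode has strictly larger minimum distance than $C$ itself, so $C$ together with this subcode is a legitimate input to Construction X.

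Next I would invoke Lemma \ref{construction_X} with the main code $C$ (so $k_1=\frac{k}{2}+1$, $d_1=3s\cdot 2^{k-2}$), its subcode $[s\cdot 2^k,1,s\cdot 2^k]_4$ (so $k_2=1$, $d_2=s\cdot 2^k$), and the auxiliary code $C_a$, which has exactly the required dimension $k_1-k_2=\frac{k}{2}$ and parameters $[n,\frac{k}{2},d]_4$ (so $\delta=d$, $l=n$). Construction X then yields an additive code of length $s\cdot 2^k+n$, dimension $\frac{k}{2}+1$, and minimum distance $\min\{d+3s\cdot 2^{k-2},\,s\cdot 2^k\}$. The final step is the elementary verification that the hypothesis $s\ge\lceil d/2^{k-2}\rceil$, i.e. $d\le s\cdot 2^{k-2}$, forces $d+3s\cdot 2^{k-2}\le 4s\cdot 2^{k-2}=s\cdot 2^k$, so the minimum equals $3s\cdot 2^{k-2}+d$, which is the claimed distance.

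I expect the only point requiring genuine care to be the bookkeeping that identifies the $s$-fold repetition of the weight-$2^k$ constant-weight subcode with the subcode $[s\cdot 2^k,1,s\cdot 2^k]_4$ demanded by Construction X and confirms $d_2>d_1$; everything else is a direct application of Lemma \ref{aug_Simplex} and Lemma \ref{construction_X} together with the one-line inequality controlling the minimum. The condition on $s$ is not really an obstacle but precisely the threshold at which the $C_a$-contribution stops being the binding constraint in the Construction X distance.
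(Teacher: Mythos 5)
Your proposal is correct and follows essentially the same route as the paper: juxtapose $s$ copies of the $[2^k,\frac{k}{2}+1,3\cdot 2^{k-2}]_4$ code from Lemma \ref{aug_Simplex}(3) to get a $[s\cdot 2^k,\frac{k}{2}+1,3s\cdot 2^{k-2}]_4$ code with an $[s\cdot 2^k,1,s\cdot 2^k]_4$ subcode, then apply Construction X with $C_a$ as auxiliary code. Your explicit check that $s\ge\lceil d/2^{k-2}\rceil$ forces $d+3s\cdot 2^{k-2}\le s\cdot 2^k$, so the minimum in Lemma \ref{construction_X} is the claimed distance, is exactly the role the paper assigns to that hypothesis, just spelled out more carefully.
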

 {\begin{proof}
According to (3) of Lemma \ref{aug_Simplex}, we have an special additive $[2^k,\frac{k}{2}+1,3\cdot 2^{k-2}]_4$ code $C_{Aue_\frac{k}{2}}$, which has a $[2^k,1,2^k]_4$ subcode.  
Therefore, the juxtaposition of $s$ generator matrix of $C_{Aue_\frac{k}{2}}$ produces additive $[s\cdot 2^k,\frac{k}{2}+1,3s\cdot 2^{k-2}]_4$ code with an $[s\cdot 2^k,1,s\cdot2^k]_4$ subcode.
Furthermore, since $s\ge \lceil \frac{d}{2^{k-2}} \rceil$, using $C_a$ as auxiliary code, we can get an additive $[s\cdot 2^k+n,\frac{k}{2}+1, 3s\cdot 2^{k-2}+d]_4$ code by Lemma \ref{construction_X}.
\end{proof}}

Lemma \ref{combination_X_construction} is also an effective method for constructing high-dimensional additive codes from low-dimensional additive codes. Lemma \ref{iterating} is a special case of Lemma \ref{combination_X_construction}. 

%\footnote{Multiset: A set containing repeated elements. If $\mathcal{L}$ has repeated elements, we call it a generator multiset.}
\section{Additive generalized anticode construction}\label{Sec IV anti}

 {Let $C_a$ be an additive code with generator matrix $A$. We define \textit{generator multiset} $\mathcal{L}$ of $C_a$ as a set consisting of column vectors of $A$. %\footnote{$\Phi(\mathcal{L})^{-1}$ can also be viewed as multiset lines in $PG(k-1, 2)$.}
Define $d_{max}(C_a)$ as the maximum Hamming weight of $C_a$.
%Let $\mathcal{L}_1$ and $\mathcal{L}_2$ be two sets of lines in $PG(k-1,2)$ that satisfy $\mathcal{L}_2 \subset \mathcal{L}_1$, and $\mathcal{L}_1\setminus  \mathcal{L}_2$ is the set that results from deleting $\mathcal{L}_2$ from $\mathcal{L}_1$. 
For a $k \times n$ matrix $A$ with no zero column, we define $A^{\circ}=(A\mid \bm{0_{k}}^T)$. 
For any matrix with the type of $B=(A\mid \bm{0_{k}}^T)$, we define $(B)^{\ominus }=A$.}
Throughout this section, we let $k=k_1+k_2$ such that $k_1\geq 3$ and $k_2\geq 4$ are respective odd and even integers.
%assumed that $k_1\ge3$ and $k_2\ge4$ are odd and even, respectively, and $k=k_1+k_2$.
%Let $C_{A_{\frac{k}{2}}}^{\circ}$ (resp., ) denote codes (resp., matrices) with one all zero column, and $C_{A_{\frac{k}{2}}}^{\ominus }$ (resp., $A_{\frac{k}{2}}^{\ominus }$) denote codes (resp., matrices) that removes all-zero column.
%A high-dimensional ASEP code can also be formed by combining two low-dimensional ASEP codes.
%
%Higher-dimensional ASEP codes can also be constructed by combining two low-dimensional ASEP codes.
 {\begin{lemma} (Additive Generalized Anticode Construction)\label{anticode construction}
	Let $C_{a_1}$ and $C_{a_2}$ be two additive codes with parameters $[n_1,k,d_1]_4$ and $[n_2,\le k]_4$, respectively, where $n_2\le n_1$.  
	If their generator sets $\mathcal{L}_{a_1}$ and $\mathcal{L}_{a_2}$ satisfy $\mathcal{L}_{a_2} \subset \mathcal{L}_{a_1}$, then we call $C_{a_2}$ as an \textit{additive generalized anticode} of $C_{a_1}$. $\mathcal{L}_{a_1}\setminus  \mathcal{L}_{a_2}$ generates an additive $[n_1-n_2,\le k,\ge d_1-d_{max}(C_{a_2})]_4$ code. 
\end{lemma}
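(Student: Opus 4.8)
The plan is to read the statement through the multiset-of-columns picture, in which deleting the sub-multiset $\mathcal{L}_{a_2}$ from $\mathcal{L}_{a_1}$ simply removes a block of coordinates from $C_{a_1}$. First I would fix a generator matrix $A_1$ of $C_{a_1}$ whose columns realize the multiset $\mathcal{L}_{a_1}$; since $\mathcal{L}_{a_2}\subset\mathcal{L}_{a_1}$, after a column permutation I may write $A_1=(A_3\mid A_2)$, where the $n_2$ columns of $A_2$ realize $\mathcal{L}_{a_2}$ (so $A_2$ is a generator matrix of $C_{a_2}$) and the remaining $n_1-n_2$ columns of $A_3$ realize $\mathcal{L}_{a_1}\setminus\mathcal{L}_{a_2}$. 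Crucially, $A_2$ and $A_3$ share the full row set of $A_1$. The length $n_1-n_2$ is then immediate, and the code $C_{a_3}$ generated by $A_3$ is a column-submatrix of $A_1$, so its $F_2$-rank is at most that of $A_1$, which gives additive dimension $\le k_1$ (equality may fail, which is exactly why the statement reads $\le k_1$).

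For the distance I would exploit additivity of Hamming weight over a disjoint coordinate partition. For every $F_2$-combination vector $\bm{b}$ of the rows, set $\bm{c}^{(1)}=\bm{b}A_1$, $\bm{c}^{(2)}=\bm{b}A_2$ and $\bm{c}^{(3)}=\bm{b}A_3$, which are codewords of $C_{a_1}$, $C_{a_2}$ and $C_{a_3}$, respectively. Because the columns of $A_2$ and $A_3$ partition those of $A_1$ into disjoint coordinate blocks, the weights satisfy
\begin{equation}
	\mathbf{w}(\bm{c}^{(1)})=\mathbf{w}(\bm{c}^{(3)})+\mathbf{w}(\bm{c}^{(2)}),
\end{equation}
so that $\mathbf{w}(\bm{c}^{(3)})=\mathbf{w}(\bm{c}^{(1)})-\mathbf{w}(\bm{c}^{(2)})$.

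To finish, I would restrict attention to a nonzero codeword $\bm{c}^{(3)}\ne\bm{0}$ of $C_{a_3}$. Since $\bm{c}^{(3)}$ is the $A_3$-block of $\bm{c}^{(1)}$, nonvanishing of $\bm{c}^{(3)}$ forces $\bm{c}^{(1)}=\bm{b}A_1\ne\bm{0}$, whence $\mathbf{w}(\bm{c}^{(1)})\ge d_1$ by the minimum distance of $C_{a_1}$. On the other hand $\bm{c}^{(2)}\in C_{a_2}$ gives $\mathbf{w}(\bm{c}^{(2)})\le d_{max}(C_{a_2})$ by the definition of the maximum weight. Combining these with the identity above yields $\mathbf{w}(\bm{c}^{(3)})\ge d_1-d_{max}(C_{a_2})$, which is the asserted bound.

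I expect the only genuine subtlety -- the main obstacle -- to be the bookkeeping that makes the weight identity legitimate: one must insist that $A_2$ and $A_3$ be read as column-submatrices of the \emph{same} matrix $A_1$, sharing its entire row set, so that a single coefficient vector $\bm{b}$ simultaneously produces the three aligned codewords and the assignment $\bm{b}\mapsto\bm{b}A_2$ still ranges over all of $C_{a_2}$ (which is what justifies $\mathbf{w}(\bm{c}^{(2)})\le d_{max}(C_{a_2})$, even when $C_{a_2}$ has smaller dimension and $A_2$ drops rank). Once this common-ambient-space convention is fixed, the dimension drop that forces the $\le k_1$ bound and the edge case $\bm{c}^{(1)}=\bm{0}$ are both handled cleanly, and the rest of the argument is routine.
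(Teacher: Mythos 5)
Your proposal is correct and follows essentially the same route as the paper: reorder the columns of a generator matrix of $C_{a_1}$ so that the $\mathcal{L}_{a_2}$-block of every codeword lies in $C_{a_2}$, use additivity of Hamming weight over the two coordinate blocks together with $\mathbf{w}(\bm{c}^{(2)})\le d_{max}(C_{a_2})$, and puncture. Your write-up is just a more explicit version (the weight identity, the rank argument for $\le k_1$, and the common-row-space convention) of the paper's two-line proof.
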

\begin{proof}
	Since $\mathcal{L}_{a_2} \subset \mathcal{L}_{a_1}$, we can reorder the columns of generator matrix of $C_{a_1}$ such that any codeword $\bm{c}=(\bm{c}_1,\bm{c}_2)$ of $C_{a_1}$ satisfy $\bm{c}_2 \in C_{a_1}$.
	Then, $\mathbf{w}(\bm{c}_1)\ge d_1-d_{max}(C_{a_2})$.  
	Puncture the coordinates of $\bm{c}_2$ gives a result code with minimum distance at least $d_1-d_{max}(C_{a_2})$. 
	In addition, the dimension of the result code does not exceed $k$.
\end{proof}}
 
% \begin{remark}
% 	The linear anticode construction mainly depends on $PG(k-1,2)$, by deleting the point set $Q$ from $PG(k-1,2)$, where $Q$ is called the anticode code. However, in the case of addition, this process will become complicated in $PG(k-1,2)$, so we give Lemma 8.
% \end{remark}

Next, we show the specific procedure for constructing optimal additive codes using ASEP codes by Lemma \ref{anticode construction}.

\begin{definition}
	Let $A=(\bm{\alpha_1},\bm{\alpha_2},\cdots,\bm{\alpha_{n_1}})$, $B=(\bm{\beta_1},\bm{\beta_2},\cdots,\bm{\beta_{n_2}})$, where $\bm{\alpha_i}$ and $\bm{\beta_j}$, $1\le i\le n_1$, $1\le j\le n_2$, are column vectors.
	Define 
	\begin{equation} 
					\setlength{\arraycolsep}{0.5pt}
		A \star   B=\left( 
		\begin{array}{cccccccccc}
			\bm{\alpha}_1 & \cdots & \bm{\alpha_{1}}  & \bm{\alpha_2} & \cdots & \bm{\alpha_{2}}  & \cdots & \bm{\alpha_{n_1}} & \cdots & \bm{\alpha_{n_1}} \\
			\bm{\beta_1}  & \cdots & \bm{\beta_{n_2}} & \bm{\beta_1}  & \cdots & \bm{\beta_{n_2}} & \cdots & \bm{\beta_1}      & \cdots & \bm{\beta_{n_2}}
		\end{array}
		\right).
	\end{equation}
\end{definition}

\begin{lemma}\label{L_camba}
	 Let $C_{A_{\frac{k_1}{2}}}$ and $C_{A_{\frac{k_2}{2}}}$ be ASEP codes with generator matrices $A_{\frac{k_1}{2}}$ and $A_{\frac{k_2}{2}}$, generated by Lemmas \ref{iterating} and \ref{divided_Simplex}, respectively. 
	 Then %adding the all-zero column yields $A_{\frac{k_1}{2}}^{\circ}$ and $A_{\frac{k_2}{2}}^{\circ}$.  Then, 
%		\begin{equation} \label{comba}
%		A_\frac{k}{2}=(A_{\frac{k_1}{2}}^{\circ} \star A_{\frac{k_2}{2}}^{\circ})^{\ominus}=\left(A_{\frac{k_1}{2}} \star   A_{\frac{k_2}{2}} \begin{array}{|cc|c}
%			A_{\frac{k_1}{2}}&\bm{0_{k_1, 2^{k_2}-1}}\\
%			\bm{0_{k_2, 2^{k_1}-1}}&A_{\frac{k_2}{2}}
%		\end{array}
%	\begin{matrix}
%		\bm{0_{k, 1}}
%	\end{matrix}
%	\right).
%	\end{equation}
		\begin{equation} \label{comba}
						\setlength{\arraycolsep}{0pt}
	A_\frac{k}{2}=(A_{\frac{k_1}{2}}^{\circ} \star A_{\frac{k_2}{2}}^{\circ})^{\ominus}
	\cong 
	 \left(A_{\frac{k_1}{2}} \star   A_{\frac{k_2}{2}} \begin{array}{|cc}
		A_{\frac{k_1}{2}}&\bm{0_{k_1, 2^{k_2}-1}}\\
		\bm{0_{k_2, 2^{k_1}-1}}&A_{\frac{k_2}{2}}
	\end{array}
	\right).
\end{equation}
generates $\frac{k}{2}$-dimensional ASEP code $C_{A_\frac{k}{2}}$.
\end{lemma}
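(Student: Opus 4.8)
The plan is to reduce everything to a clean counting identity over $F_4$. First I would unwind the definitions of $\circ$, $\star$ and $\ominus$ to establish the stated equivalence $\cong$. The columns of $A_{\frac{k_1}{2}}^{\circ}$ are the columns $\bm{\alpha_1},\dots,\bm{\alpha_{2^{k_1}-1}}$ of $A_{\frac{k_1}{2}}$ together with one zero column, and similarly for $A_{\frac{k_2}{2}}^{\circ}$. By the definition of $\star$, the columns of $A_{\frac{k_1}{2}}^{\circ}\star A_{\frac{k_2}{2}}^{\circ}$ are exactly the stacked vectors $\binom{\bm{\alpha_i}}{\bm{\beta_j}}$ as $i$ ranges over $\{0,1,\dots,2^{k_1}-1\}$ and $j$ over $\{0,1,\dots,2^{k_2}-1\}$, where I set $\bm{\alpha_0}=\bm 0$ and $\bm{\beta_0}=\bm 0$. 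Separating the four cases ($i,j$ both positive / only $i$ positive / only $j$ positive / both zero) and deleting via $\ominus$ the unique all-zero column produced by $(i,j)=(0,0)$ gives precisely the block matrix on the right-hand side, up to a reordering of coordinates. That reordering is exactly the code equivalence claimed.

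Next I would parametrise codewords. A codeword is an $F_2$-combination of the $k=k_1+k_2$ rows; writing the coefficient vector as $(\bm a,\bm b)$ with $\bm a\in F_2^{k_1}$ and $\bm b\in F_2^{k_2}$ produces $\bm u=\bm a A_{\frac{k_1}{2}}\in C_{A_{\frac{k_1}{2}}}$ and $\bm v=\bm b A_{\frac{k_2}{2}}\in C_{A_{\frac{k_2}{2}}}$. Keeping the virtual coordinates $u_0=v_0=0$, the whole codeword is read off as the family $(u_i+v_j)$ over $0\le i\le 2^{k_1}-1$, $0\le j\le 2^{k_2}-1$, with the $(0,0)$ entry (which equals $0$) deleted. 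For $\gamma\in F_4$ let $a_\gamma$ count the indices $i$ (including $i=0$) with $u_i=\gamma$, and $b_\gamma$ the indices $j$ with $v_j=\gamma$. Then for $\gamma\ne0$ the symbol count of the codeword is the $F_4$-convolution
\[ \mathbf{w}_\gamma(\bm u,\bm v)=\sum_{\alpha\in F_4}a_\alpha\, b_{\gamma+\alpha}, \]
since the single deleted $(0,0)$ entry only ever contributes to the count of $0$'s.

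The heart of the argument is a short computation using that $C_{A_{\frac{k_1}{2}}}$ and $C_{A_{\frac{k_2}{2}}}$ are ASEP codes of lengths $2^{k_1}-1$ and $2^{k_2}-1$. Including the virtual zero, the zero codeword gives the profile $(a_0,a_1,a_w,a_{w^2})=(2^{k_1},0,0,0)$, whereas any \emph{nonzero} codeword has $\mathbf{w}_0=2^{k_1-2}-1$ and $\mathbf{w}_1=\mathbf{w}_w=\mathbf{w}_{w^2}=2^{k_1-2}$, so its profile becomes the \emph{uniform} one $(2^{k_1-2},2^{k_1-2},2^{k_1-2},2^{k_1-2})$; the same holds for $\bm v$ with $k_2$ in place of $k_1$. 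In each of the three regimes giving a nonzero codeword, namely $(\bm u=0,\bm v\ne0)$, $(\bm u\ne0,\bm v=0)$ and $(\bm u\ne0,\bm v\ne0)$, at least one of the two profiles is uniform over $F_4$, and the convolution of any profile with a uniform one is again uniform. Hence $\mathbf{w}_\gamma=2^{k-2}$ for every $\gamma\in\{1,w,w^2\}$, so every nonzero codeword is balanced with constant Hamming weight $3\cdot2^{k-2}$. This simultaneously proves the ASEP property and the minimum distance $d=3\cdot2^{k-2}$.

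Finally I would pin down the dimension: the two tail blocks $\binom{A_{\frac{k_1}{2}}}{\bm 0}$ and $\binom{\bm 0}{A_{\frac{k_2}{2}}}$ display $\bm u$ and $\bm v$ verbatim inside the codeword, so the map $(\bm u,\bm v)\mapsto\text{codeword}$ is injective; therefore the code has $2^{k_1}\cdot2^{k_2}=2^k$ codewords and dimension $\frac{k}{2}$, matching the parameters $[2^{k}-1,\frac{k}{2},3\cdot2^{k-2}]_4$. I do not expect a genuine obstacle here: the only points needing care are the bookkeeping of the virtual zero coordinate (so that deleting the $(0,0)$ column alters only $\mathbf{w}_0$) and the verification that the ASEP symbol-profiles above exhaust the three nonzero regimes; everything else is the elementary convolution identity over $F_4$ together with the fact that uniformity is preserved under convolution.
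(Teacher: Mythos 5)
Your proof is correct, but it takes a genuinely different route from the paper's. The paper does not re-verify the ASEP property at all: it observes that the iteratively built matrix $A_{\frac{k}{2}}$ from Lemma \ref{iterating} already has, in its lower $k_2$ rows, every vector of $F_{4_a}^{\frac{k_2}{2}}$ repeated $2^{k_1}$ times above copies of $A_{\frac{k_1}{2}}^{\circ}$, and that the columns of the integer-dimensional $A_{\frac{k_2}{2}}$ from Lemma \ref{divided_Simplex} are exactly the nonzero vectors of $F_{4_a}^{\frac{k_2}{2}}$; a column permutation then turns Equation (\ref{E_iterating}) into Equation (\ref{comba}), and the ASEP property is inherited from Lemma \ref{iterating}. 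You instead prove directly that the right-hand side of (\ref{comba}) generates a $[2^k-1,\frac{k}{2},3\cdot 2^{k-2}]_4$ ASEP code, via the symbol-profile convolution $\mathbf{w}_\gamma=\sum_\alpha a_\alpha b_{\gamma+\alpha}$ and the observation that convolving any profile with a uniform one stays uniform; your bookkeeping of the virtual zero coordinates and the uniform profile $(2^{k_1-2},2^{k_1-2},2^{k_1-2},2^{k_1-2})$ is accurate, and the injectivity argument for the dimension via the two tail blocks is sound. What your approach buys is self-containment and generality: it works for \emph{any} pair of ASEP codes, not just those produced by Lemmas \ref{iterating} and \ref{divided_Simplex}, and it replaces the paper's ``it is easy to verify'' column-rearrangement step with an explicit count. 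What it does not deliver is the first equality in (\ref{comba}), i.e., that the $\star$-product literally coincides (up to column permutation) with the specific iteratively constructed $A_{\frac{k}{2}}$; this is a cosmetic rather than substantive omission, since the later results only use the $\star$-form and the ASEP property, both of which you establish.
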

\begin{proof}
	In the case of using $C_{A_{1.5}}$ as initial ASEP code, the iteratively constructed $A_{\frac{k}{2}}$ satisfies that
	  lower $k_2$-rows of $A_{\frac{k}{2}}$ in Equation (\ref{E_iterating}) is that all vectors in $F_{4_a}^{\frac{k_2}{2}}$ repeat $2^{k_1}$ times, which can be written as
		\begin{equation}\label{repeat_matrix} 
		A_\frac{k}{2}^{\circ}=		\left( \begin{array}{cccc}
			A_{\frac{k_1}{2}}^{\circ}&A_{\frac{k_1}{2}}^{\circ}&\ldots&A_{\frac{k_1}{2}}^{\circ}\\
			\bm{0_{k_2,2^{k_1}}}&\bm{\mathsf{v}^{(1)}_{k_2,2^{k_1}}}&\ldots&\bm{\mathsf{v}^{(2^{k_2}-1)}_{k_2,2^{k_1}}}
		\end{array}
		\right),
	\end{equation}
	where $\bm{\mathsf{v}^{(i)}_{k_2,2^{k_1}}}$, $1\le i\le 2^{k_2}-1$, denotes the $k_2\times 2^{k_1}$ matrix generated by repeating the $i$-th nonzero vector in $F_{4_a}^{\frac{k_2}{2}}$ $2^{k_1}$ times.
	
		In addition, ASEP codes with integer dimensions generated by Equation (\ref{integer_ASEP}) are indeed all nonzero vectors in in $F_{4_a}^{\frac{k_2}{2}}$. 
	Therefore, it is easy to verify that rearranging the columns in Equation (\ref{E_iterating}) yields Equation (\ref{comba}). 
\end{proof}

Let $C_\frac{k}{2}$ be the ASEP code determined by Equation (\ref{comba}), and denote its generator set as $\mathcal{L}_\frac{k}{2}$.
 Let 
 \begin{equation} \label{divided_A_k_anti}
 	A_\frac{k_1^{\prime}}{2}=\left( \begin{array}{c}
 		A_{\frac{k_1}{2}}\\
 		\bm{0_{k_2\times (2^{k_1}-1)}}
 	\end{array}\right),	
 	A_\frac{k_2^{\prime}}{2}=\left( \begin{array}{c}
 		\bm{0_{k_1\times (2^{k_2}-1)}}\\
 		A_{\frac{k_2}{2}}
 	\end{array}\right),
 \end{equation}where $A_{\frac{k_1}{2}}$ and $A_{\frac{k_2}{2}}$ generated by Lemmas \ref{iterating} and \ref{divided_Simplex}, respectively. 
$\mathcal{L}_{\frac{k_1^{\prime}}{2}}$ and $\mathcal{L}_{\frac{k_2^{\prime}}{2}}$ are generator sets of $A_\frac{k_1^{\prime}}{2}$ and $A_\frac{k_2^{\prime}}{2}$, respectively. 
Obviously, $\mathcal{L}_{\frac{k_1^{\prime}}{2}}$ and $\mathcal{L}_{\frac{k_2^{\prime}}{2}}$ are two disjoint subsets of $\mathcal{L}_\frac{k}{2}$.

% Since the ASEP codes $C_\frac{k}{2}$ determined by Lemma \ref{L_camba} have an order matrix structure, $\mathcal{L}_{\frac{k_1}{2}}$ and $\mathcal{L}_{\frac{k_2}{2}}$  subsets of $\mathcal{L}_\frac{k}{2}$.

Therefore, the additive codes $C_\frac{k_1^{\prime}}{2}$ and $C_\frac{k_2^{\prime}}{2}$ determined by $A_\frac{k_1^{\prime}}{2}$ and $A_\frac{k_2^{\prime}}{2}$ are generalized anticodes of $C_\frac{k}{2}$, and their maximum weight are separately $3\cdot 2^{k_1-2}$ and $3\cdot 2^{k_2-2}$.

% the additive anticode obtained by deleting columns of $(C_{A_{\frac{k_1}{2}}}, \cdots,C_{A_{\frac{k_t}{2}}})$ from $C_{A_\frac{k}{2}}$.

\begin{theorem}\label{anticode_k_1_k_2}
Let the symbols be the same as above. Then, we have the following results.

(1) $\mathcal{L}_\frac{k}{2}\setminus \mathcal{L}_\frac{k_1^{\prime}}{2}$ generates an additive two-weight code with parameters $[2^k-2^{k_1}, \frac{k}{2}, 3\cdot 2^{k-2}-3\cdot 2^{k_1-2}]_4$, which has the weight distribution  $1+(2^k-2^{k_2})\cdot z^{3\cdot 2^{k-2}-3\cdot 2^{k_1-2}}+(2^{k_2}-1)\cdot z^{3\cdot 2^{k-2}}$;

(2) If $k_1=3$, then $[2^k-2^{k_1}, \frac{k}{2}, 3\cdot 2^{k-2}-3\cdot 2^{k_1-2}]_4$ code is GDO;  if $k_1\ge5$, it is GPO. 

(3) $\mathcal{L}_\frac{k}{2}\setminus (\mathcal{L}_{\frac{k_1^{\prime}}{2}}+\mathcal{L}_{\frac{k_2^{\prime}}{2}})$ generates an additive GDO three-weight code with parameters $[2^k-2^{k_1}-2^{k_2}+1, \frac{k}{2}, 3\cdot 2^{k-2}-3\cdot 2^{k_1-2}-3\cdot 2^{k_2-2}]_4$, which has the weight distribution $1+(2^k-2^{k_1}-2^{k_2})\cdot z^{3\cdot 2^{k-2}-3\cdot 2^{k_1-2}-3\cdot 2^{k_2-2}}+(2^{k_1}-1)\cdot z^{3\cdot 2^{k-2}-3\cdot 2^{k_1-2}}+(2^{k_2}-1)\cdot z^{3\cdot 2^{k-2}-3\cdot 2^{k_2-2}}$.

(4) $[2^k-2^{k_1}-2^{k_2}+1, \frac{k}{2}, 3\cdot 2^{k-2}-3\cdot 2^{k_1-2}-3\cdot 2^{k_2-2}]_4$ code is GDO if $|k_1-k_2|=1$ and $9\le k\le15$, or $|k_1-k_2|\ge3$ and $\min\{k_1,k_2\}\le4$; else if $|k_1-k_2|=1$ and $k\ge17$, or $|k_1-k_2|\ge3$ and $\min\{k_1,k_2\}>4$, it is GPO.
\end{theorem}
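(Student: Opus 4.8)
The plan is to establish first the weight distribution stated in part (3), and then read off the optimality claims in part (4) by comparing the minimum distance against the additive Griesmer bound of Lemma~\ref{Griesmer_Bound}. The three-weight structure is the backbone: I would begin by computing, for each codeword class, how many times the generator multiset $\mathcal{L}_\frac{k}{2}\setminus(\mathcal{L}_{\frac{k_1^{\prime}}{2}}+\mathcal{L}_{\frac{k_2^{\prime}}{2}})$ meets a given point/line of $PG(k-1,2)$. Since $C_{A_\frac{k}{2}}$ is an ASEP code, every nonzero codeword of the full code has weight exactly $3\cdot 2^{k-2}$; the deleted columns are exactly those of the two anticodes $C_\frac{k_1^{\prime}}{2}$ and $C_\frac{k_2^{\prime}}{2}$, whose maximum weights are $3\cdot 2^{k_1-2}$ and $3\cdot 2^{k_2-2}$ as noted just above the theorem. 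By Lemma~\ref{anticode construction}, puncturing these coordinates drops each codeword's weight by the weight it had on the deleted part, so the residual weight of a codeword $\bm{c}$ is $3\cdot 2^{k-2}-\mathbf{w}(\bm{c}|_{\frac{k_1^{\prime}}{2}})-\mathbf{w}(\bm{c}|_{\frac{k_2^{\prime}}{2}})$.

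\textbf{Counting the three weight classes.} The key observation is that $A_\frac{k_1^{\prime}}{2}$ is supported only on the top $k_1$ rows and $A_\frac{k_2^{\prime}}{2}$ only on the bottom $k_2$ rows, so the two punctured parts are governed independently by the $C_{A_{\frac{k_1}{2}}}$ and $C_{A_{\frac{k_2}{2}}}$ components of a codeword. A codeword whose top-part restriction is zero but whose bottom part is nonzero loses only $3\cdot 2^{k_2-2}$ (there are $2^{k_2}-1$ such codewords, since the bottom component ranges over a nonzero vector of $C_{A_{\frac{k_2}{2}}}$ while the top is forced zero); symmetrically $2^{k_1}-1$ codewords lose only $3\cdot 2^{k_1-2}$; and the remaining $2^k-2^{k_1}-2^{k_2}$ codewords, having both components nonzero, lose $3\cdot 2^{k_1-2}+3\cdot 2^{k_2-2}$. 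Here I would invoke the ASEP/constant-weight property that both anticodes are themselves constant-weight on their supports (each nonzero codeword of an ASEP code has the fixed weight), which forces these losses to be exactly the maximal values rather than merely bounded by them. Tallying the three cases yields precisely the claimed weight distribution of part~(3), and the minimum nonzero weight is $3\cdot 2^{k-2}-3\cdot 2^{k_1-2}-3\cdot 2^{k_2-2}$.

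\textbf{Deciding GDO versus GPO in part (4).} With length $n=2^k-2^{k_1}-2^{k_2}+1$ and distance $d=3\cdot 2^{k-2}-3\cdot 2^{k_1-2}-3\cdot 2^{k_2-2}$ in hand, the optimality classification reduces to testing the inequalities \eqref{griesmer_opt} and \eqref{griesmer_rd_opt} that define GDO and GPO, i.e.\ comparing $3n$ (resp.\ $3n+3$) against $g(k,2d+2)$. I would plug in $t=n-d$ and expand $g(k,2d+2)=\sum_{i=0}^{k-1}\lceil (d+1)/2^{i-1}\rceil$, separating the high-order terms (for which the ceiling is exact) from the low-order tail where the ceilings contribute a correction depending on the binary structure of $d+1$. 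The main obstacle, and where the case split on $|k_1-k_2|$ and $\min\{k_1,k_2\}$ originates, is controlling this tail: the sign of $g(k,2d+2)-3n$ flips depending on how $3\cdot 2^{k_1-2}+3\cdot 2^{k_2-2}+1$ interacts with the partial geometric sums, and this interaction is sensitive to whether $k_1$ and $k_2$ are close (the $|k_1-k_2|=1$ regime, governed by the threshold $k\ge 17$) or far apart (the $|k_1-k_2|\ge 3$ regime, governed by $\min\{k_1,k_2\}>4$). I expect to handle each regime by a direct but careful estimate of the Griesmer sum, treating the boundary cases $9\le k\le 15$ and $\min\{k_1,k_2\}\le 4$ separately since there the inequality \eqref{griesmer_rd_opt} fails and only GDO (not GPO) can be concluded.
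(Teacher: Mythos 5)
Your proposal is correct and follows essentially the same route as the paper: both arguments rest on the block decomposition of $A_{\frac{k_1}{2}}\star A_{\frac{k_2}{2}}$ into the top $k_1$ rows and bottom $k_2$ rows, use the constant-weight (ASEP) property of $C_{A_{\frac{k_1}{2}}}$ and $C_{A_{\frac{k_2}{2}}}$ to pin down exactly how much weight each class of codewords loses on the deleted coordinates, and then settle GDO versus GPO by evaluating the Griesmer sum $g(k,2d+2)$ against $3n$ and $3n+3$. The only cosmetic difference is that you subtract the loss from the constant weight $3\cdot 2^{k-2}$ while the paper computes the residual weights of $\bm{c_1}$, $\bm{c_2}$ and $\bm{c_1}+\bm{c_2}$ directly on the punctured generator matrix; note in passing that a full count of the class with both components nonzero gives $(2^{k_1}-1)(2^{k_2}-1)=2^k-2^{k_1}-2^{k_2}+1$, one more than the figure that both you and the theorem statement record.
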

\begin{proof} 
In accordance with the foregoing, the corresponding matrix $A_{\frac{k}{2}\setminus\frac{k_1^{\prime}}{2}}$ of	$\mathcal{L}_\frac{k}{2}\setminus \mathcal{L}_\frac{k_1^{\prime}}{2}$ can be written as the following form:
	
			\begin{equation} 
	A_{\frac{k}{2}\setminus\frac{k_1^{\prime}}{2}}=\left(A_{\frac{k_1}{2}} \star   A_{\frac{k_2}{2}} \begin{array}{|c}
			\bm{0_{k_1, 2^{k_2}-1}}\\
			A_{\frac{k_2}{2}}
		\end{array}
		\right).
	\end{equation}
	
	Denote the additive codes generated by the upper $k_1$ rows and lower $k_2$ rows of $A_{\frac{k}{2}\setminus\frac{k_1^{\prime}}{2}}$ as $C_{up}$ and $C_{down}$.
	Then, by the properties of $A_{\frac{k}{2}\setminus\frac{k_1^{\prime}}{2}}$, for all nonzero codewords $\bm{c_1}\in C_{up}$ and $\bm{c_2}\in C_{down}$, there are $\mathbf{w}(\bm{c_1 })=\mathbf{w}(\bm{c_1 }+\bm{c_2})=3\cdot 2^{k-2}-3\cdot 2^{k_1-2}$ and $\mathbf{w}(\bm{c_2})=3\cdot 2^{k-2}$. 
	Therefore, the intersection of $C_{up}$ and $C_{down}$ is a zero codeword, it is easy to conclude that $\mathcal{L}_\frac{k}{2}\setminus \mathcal{L}_\frac{k_1^{\prime}}{2}$ generates a two-weight $[2^k-2^{k_1}, \frac{k}{2}, 3\cdot 2^{k-2}-3\cdot 2^{k_1-2}]_4$ code with weight distribution  $1+(2^k-2^{k_2})\cdot z^{3\cdot 2^{k-2}-3\cdot 2^{k_1-2}}+(2^{k_2}-1)\cdot z^{3\cdot 2^{k-2}}$. 
	Since $g(k,3\cdot 2^{k-1}-3\cdot 2^{k_1-1}+2)=3\cdot 2^{k}-3\cdot 2^{k_1}+k_1>3(2^k-2^{k_1})$, this code is GDO for $k_1=3$; if $k_1>3$, it is GPO.

	Similarly, the corresponding matrix of $\mathcal{L}_\frac{k}{2}\setminus (\mathcal{L}_{\frac{k_1^{\prime}}{2}}+\mathcal{L}_{\frac{k_2^{\prime}}{2}})$ is $A_{\frac{k_1}{2}} \star   A_{\frac{k_2}{2}}$, which also can be divided to up and down two parts.
	Let $C_{up}^{\prime}$ and $C_{down}^{\prime}$ denote the additive codes generated by the upper $k_1$ rows and lower $k_2$ rows of $A_{\frac{k_1}{2}} \star   A_{\frac{k_2}{2}}$, respectively.
	 Then, for all nonzero codewords $\bm{c_1^{\prime}}\in C_{up}^{\prime }$ and $\bm{c_2^{\prime}}\in C_{down}^{\prime}$, there are $\mathbf{w}(\bm{c_1^{\prime }})=3\cdot 2^{k-2}-3\cdot 2^{k_1-2}$, $\mathbf{w}(\bm{c_2^{\prime }})=3\cdot 2^{k-2}-3\cdot 2^{k_2-2}$ and $\mathbf{w}(\bm{c_1^{\prime }}+\bm{c_2^{\prime}})=3\cdot 2^{k-2}-3\cdot 2^{k_1-2}-3\cdot 2^{k_2-2}$.
	Thus, the intersection of $C_{up}^{\prime }$ and $C_{down}^{\prime }$ is also only zero codeword, $\mathcal{L}_\frac{k}{2}\setminus (\mathcal{L}_{\frac{k_1^{\prime}}{2}}+\mathcal{L}_{\frac{k_2^{\prime}}{2}})$ generates a three-weight $[2^k-2^{k_1}-2^{k_2}+1, \frac{k}{2}, 3\cdot 2^{k-2}-3\cdot 2^{k_1-2}-3\cdot 2^{k_2-2}]_4$ code with weight distribution $1+(2^k-2^{k_1}-2^{k_2})\cdot z^{3\cdot 2^{k-2}-3\cdot 2^{k_1-2}-3\cdot 2^{k_2-2}}+(2^{k_1}-1)\cdot z^{3\cdot 2^{k-2}-3\cdot 2^{k_1-2}}+(2^{k_2}-1)\cdot z^{3\cdot 2^{k-2}-3\cdot 2^{k_2-2}}$.

	%$g(k,3\cdot 2^{k-1}-3\cdot 2^{k_1-1}-3\cdot 2^{k_2-1}+2)>3(2^k-2^{k_1}-2^{k_2}+1)$, this code is also GDO.
		
	Since $g(k,3\cdot 2^{k-1}-3\cdot 2^{k_1-1}-3\cdot 2^{k_2-1}+2)=\left\{\begin{matrix}
				3\cdot 2^{k}-3\cdot 2^{k_1}-3\cdot 2^{k_2}+\min\{k_1,k_2\},& |k_1-k_2|=1,\\
				3\cdot 2^{k}-3\cdot 2^{k_1}-3\cdot 2^{k_2}+\min\{k_1,k_2\}+2,&|k_1-k_2|\ge 3,
			\end{matrix}\right.$ the 
		$[2^k-2^{k_1}-2^{k_2}+1, \frac{k}{2}, 3\cdot 2^{k-2}-3\cdot 2^{k_1-2}-3\cdot 2^{k_2-2}]_4$ code is GDO if $|k_1-k_2|=1$ and $9\le k\le15$, or $|k_1-k_2|\ge3$ and $\min\{k_1,k_2\}\le4$; else if $|k_1-k_2|=1$ and $k\ge17$, or $|k_1-k_2|\ge3$ and $\min\{k_1,k_2\}>4$, it is GPO.
\end{proof}

As agreed upon, $A_{\frac{k_2}{2}}$ is generated from Lemma \ref{divided_Simplex}, which can be divided into three equivalent blocks, i.e., $A_\frac{k_2^{\prime}}{2}$ can be written as $
	A_\frac{k_2^{\prime}}{2}=\left( 
		\begin{array}{c|c|c}
			\bm{0_{k_1\times \frac{2^{k_2}-1}{3}}}&\bm{0_{k_1\times \frac{2^{k_2}-1}{3}}}&\bm{0_{k_1\times \frac{2^{k_2}-1}{3}}}\\
			G_\frac{k_2}{2}  & wG_\frac{k_2}{2} & w^2G_\frac{k_2}{2} \\
			wG_\frac{k_2}{2}  & w^2G_\frac{k_2}{2} &  G_\frac{k_2}{2}
		\end{array}\right)
$, where $G_\frac{k_2}{2}$ is a generator matrix of $\frac{k_2}{2}$-dimensional quaternary Simplex code. 
Let $A_{\frac{k_2^{\prime}}{2},m}$ denote the first $m$ blocks of $A_{\frac{k_2^{\prime}}{2}}$, where $1\le m \le 3$.
Then, we have the following corollary.

%Let $\mathcal{L}_{\frac{k_2}{2},m}$ be the generator set of the first $m$ blocks $A_{\frac{k_2}{2},m}$ of $A_\frac{k_2^{\prime}}{2}$.
%
%Since $A_{\frac{k_2}{2}}$ can be divided into three equivalent blocks, here we denote the first $m$ blocks of $A_{\frac{k_2}{2}}$ by $A_{\frac{k_2}{2}}^m$.

%We denote the codes determined by $\mathcal{L}_\frac{k}{2}\setminus (\mathcal{L}_{\frac{k_1}{2}}+\mathcal{L}_{\frac{k_2}{2}})$ and $\mathcal{L}_\frac{k}{2}\setminus \mathcal{L}_\frac{k_i}{2}^{\prime}$ with $C_{\mathcal{L}_\frac{k}{2}}\setminus (C_{\mathcal{L}_{\frac{k_1}{2}}}+ C_{\mathcal{L}_{\frac{k_2}{2}}})$ and $C_{\mathcal{L}_\frac{k}{2}}\setminus C_{\mathcal{L}_{\frac{k_i}{2}}}$, respectively, where $i=1$ or $2$.

\begin{corollary} \label{anticode_k_1_divided_k_2}
	%	Let $k=k_1+k_2$, $k_1,k_2\ge3$, $k$, $k_1$ is odd, $1\le m \le 3$. Then, there exist the following results.
	%Let $1\le m \le 3$, then there exist the following results.
	Let the symbols be the same as above. 
	Let $\mathcal{L}_{\frac{k_2^{\prime}}{2},m}$ be the generator set of $C_{\frac{k_2^{\prime}}{2},m}$. Then, the following results hold.

	(1) $\mathcal{L}_\frac{k}{2}\setminus \mathcal{L}_{\frac{k_2^{\prime}}{2},m}^{\prime}$ generates a two-weight code with parameters $[2^k-1-m\cdot \frac{2^{k_2}-1}{3}, \frac{k}{2}, 3\cdot 2^{k-2}-m\cdot 2^{k_2-2}]_4$, whose weight distribution is $1+(2^k-2^{k_1})\cdot z^{3\cdot 2^{k-2}-m\cdot 2^{k_2-2}}+(2^{k_1}-1)\cdot z^{3\cdot 2^{k-2}}$; 
	
	(2) The $[2^k-1-m\cdot \frac{2^{k_2}-1}{3}, \frac{k}{2}, 3\cdot 2^{k-2}-m\cdot 2^{k_2-2}]_4$ code is an additive both GPO and Griesmer code for $m=1$, others are just GPO;

%(3) $\mathcal{L}_\frac{k}{2}\setminus (\mathcal{L}_{\frac{k_1^{\prime}}{2}}+\mathcal{L}_{\frac{k_2^{\prime}}{2},m}^{\prime})$ generates a three-weight code with parameters $[2^k-2^{k_1}-m\cdot \frac{2^{k_2}-1}{3}, \frac{k}{2}, 3\cdot 2^{k-2}-3\cdot 2^{k_1-2}-m\cdot 2^{k_2-2}]_4$, and weight distribution $1+(2^k-2^{k_1}-2^{k_2}+1)\cdot z^{3\cdot 2^{k-2}-3\cdot 2^{k_1-2}-m\cdot 2^{k_2-2}}+(2^{k_1}-1)\cdot z^{3\cdot 2^{k-2}-3\cdot 2^{k_1-2}}+(2^{k_2}-1)\cdot z^{3\cdot 2^{k-2}-m\cdot 2^{k_2-2}}$; 
%
%(4)If $|k_1-k_2|\ne 1$ or $m<3$, then $[2^k-2^{k_1}-m\cdot \frac{2^{k_2}-1}{3}, \frac{k}{2}, 3\cdot 2^{k-2}-3\cdot 2^{k_1-2}-m\cdot 2^{k_2-2}]_4$ code is GDO; else if $\min\{k_1,k_2\}>\lfloor \frac{m}{2}\rfloor+3$, it is GPO.
\end{corollary}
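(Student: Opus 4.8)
The plan is to read the weight distribution straight off the block structure of the generator matrix supplied by Lemma~\ref{L_camba}, and then to settle optimality by evaluating the additive Griesmer function $g$ of Lemma~\ref{Griesmer_Bound}. First I would record the decomposition. By Lemma~\ref{L_camba} the ASEP code $C_\frac{k}{2}$ is generated by the columns of $A_{\frac{k_1}{2}}\star A_{\frac{k_2}{2}}$ together with $\mathcal{L}_{\frac{k_1^{\prime}}{2}}$ and $\mathcal{L}_{\frac{k_2^{\prime}}{2}}$, and by construction every column indexed by $\mathcal{L}_{\frac{k_2^{\prime}}{2}}$ carries $\bm{0_{k_1}}^T$ on its top $k_1$ coordinates. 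Deleting $\mathcal{L}_{\frac{k_2^{\prime}}{2},m}$ thus removes exactly the first $m$ of the three Simplex blocks $G_\frac{k_2}{2},\,wG_\frac{k_2}{2},\,w^2G_\frac{k_2}{2}$ of $A_\frac{k_2^{\prime}}{2}$, all of whose coordinates vanish on the upper $k_1$ rows.

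For part (1) I would write a codeword of $C_\frac{k}{2}$ through an upper message $\bm{a}\in F_2^{k_1}$ and a lower message $\bm{b}=(\bm{b_1},\bm{b_2})\in F_2^{k_2}$, and set $\bm{\gamma}=\bm{b_1}+w\bm{b_2}\in F_4^{k_2/2}$. On the three blocks of $A_\frac{k_2^{\prime}}{2}$ the codeword equals $\bm{\gamma}^T G_\frac{k_2}{2}$, $w\bm{\gamma}^T G_\frac{k_2}{2}$ and $w^2\bm{\gamma}^T G_\frac{k_2}{2}$, which are quaternary Simplex codewords sharing a common support; since the Simplex code is constant-weight, each has weight $2^{k_2-2}$ exactly when $\bm{\gamma}\ne\bm{0}$, equivalently $\bm{b}\ne\bm{0}$. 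Deleting the first $m$ blocks therefore removes $m\cdot 2^{k_2-2}$ units of weight when $\bm{b}\ne\bm{0}$ and none when $\bm{b}=\bm{0}$. As $C_\frac{k}{2}$ is ASEP, every nonzero codeword has total weight $3\cdot 2^{k-2}$, so the surviving weight is $3\cdot 2^{k-2}$ for the $2^{k_1}-1$ nonzero codewords with $\bm{b}=\bm{0}$ and $3\cdot 2^{k-2}-m\cdot 2^{k_2-2}$ for the $2^k-2^{k_1}$ codewords with $\bm{b}\ne\bm{0}$. Because $m\ge1$ and $k_1\ge3$ make the latter value positive and strictly smaller, this is exactly the claimed two-weight distribution with minimum distance $3\cdot 2^{k-2}-m\cdot 2^{k_2-2}$; positivity of the distance also keeps the dimension at $\frac{k}{2}$, giving the parameters in (1).

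For part (2), put $d=3\cdot 2^{k-2}-m\cdot 2^{k_2-2}$ and $n=2^k-1-m\frac{2^{k_2}-1}{3}$, and evaluate $g(k,2d)=\sum_{i=0}^{k-1}\lceil 2d/2^i\rceil$. Writing $v=k_2-1+v_2(m)$ for the $2$-adic valuation of $2d=3\cdot 2^{k-1}-m\cdot 2^{k_2-1}$, the ceilings are exact for $i\le v$ and round a proper fraction up for $i>v$; summing the two geometric pieces yields $3n-g(k,2d)=0$ when $m=1$ and $3n-g(k,2d)=1$ when $m\in\{2,3\}$, so the additive Griesmer bound is attained with equality precisely for $m=1$. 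This shows the $m=1$ code is Griesmer while the $m=2,3$ codes are not. Finally, the term-by-term increment $g(k,2(d+1))-g(k,2d)$ equals $2$ at $i=0$, equals $1$ at each $1\le i\le v$, and is $0$ for $i>v$, hence totals $k_2+1+v_2(m)$; therefore $g(k,2(d+1))-(3n+3)=k_2-2+v_2(m)-\bigl(3n-g(k,2d)\bigr)\ge k_2-3\ge1>0$ for every $m$, which is exactly the condition $3n+3<g(k,2(d+1))$ defining a GPO code. This establishes (2).

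I expect the main obstacle to be the Griesmer bookkeeping in part (2): one must track the valuation $v=k_2-1+v_2(m)$ of $2d$ as it shifts with $m$, and handle the single boundary index where the ceiling switches from exact to fractional, since this is precisely where the distinction between the Griesmer case $m=1$ and the merely-GPO cases $m=2,3$ originates. By comparison, part (1) is routine once the common-support, constant-weight structure of the three deleted Simplex blocks is in place.
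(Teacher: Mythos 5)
Your proposal is correct and follows essentially the same route as the paper: part (1) is the upper-rows/lower-rows decomposition of Theorem~\ref{anticode_k_1_k_2} (the deleted columns vanish on the top $k_1$ rows, the three Simplex blocks are scalar multiples of a common constant-weight codeword, and the ASEP property fixes the total weight at $3\cdot 2^{k-2}$), and part (2) is the same Griesmer-function evaluation the paper performs. Your $2$-adic-valuation bookkeeping is just a more explicit version of the paper's closed-form computation of $g$ (and, incidentally, yields the correct value $3\cdot 2^{k}-2^{k_2+1}+k_2$ at $m=2$, where the paper's displayed formula is harmlessly off by one).
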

\begin{proof}
One can easily conclude that the above constructions hold according to the method of proving Theorem \ref{anticode_k_1_k_2}. 
Moreover, since $g(k,3\cdot 2^{k-1}-m\cdot 2^{k_2-1})=3\cdot 2^{k}-m\cdot 2^{k_2}+\lceil \frac{m}{2} \rceil-3$ and $g(k,3\cdot 2^{k-1}-m\cdot 2^{k_2-1}+2)=3\cdot 2^{k}-m\cdot 2^{k_2}+k_2+\lceil \frac{m}{2} \rceil-2>3(2^k-m\cdot \frac{2^{k_2}-1}{3})$, if $m=1$, $[2^k-1-m\cdot \frac{2^{k_2}-1}{3}, \frac{k}{2}, 3\cdot 2^{k-2}-m\cdot 2^{k_2-2}]_4$ code is both additive GPO and Griesmer code, otherwise, it is just GPO code.
%In addition, we have
%$$g(k,3\cdot 2^{k-1}-3\cdot 2^{k_1-1}-m\cdot 2^{k_2-1}+2)=\left\{\begin{matrix}
%	3\cdot 2^{k}-3\cdot 2^{k_1}- 2^{k_2}+k_2+1,                                                                                 & k_1-k_2=1, m=1,   \\
%	3\cdot 2^{k}-3\cdot 2^{k_1}- 2\cdot2^{k_2}+k_2-1,                                                                           & k_1-k_2=1, m=2,   \\
%	3\cdot 2^{k}-3\cdot 2^{k_1}- 2^{k_2}+k_1,                                                                                   & k_2-k_1=1, m=1,   \\
%	3\cdot 2^{k}-3\cdot 2^{k_1}- 2\cdot2^{k_2}+k_1+1,                                                                           & k_2-k_1=1, m=2,   \\
%	3\cdot 2^{k}-3\cdot 2^{k_1}- 3\cdot2^{k_2}+\min\{k_1,k_2\},                                                                             & |k_1-k_2|=1, m=3,   \\
%	3\cdot 2^{k}-3\cdot 2^{k_1}-m\cdot 2^{k_2}+\min\{k_1,k_2\}+\lceil \frac{m}{2} \rceil,                                       & |k_1-k_2|\ge 3,
%\end{matrix}\right.$$
%
%The case of $m=3$ has been proved in (4) of Theorem X, the code is GDO for $m\ne3$, or $m=3$ and $9\le k\le11$
%
%
%
% and GPO if $k\ge13$.
%
%$m=1$, $\min\{k_1,k_2\}>3$
%
%$m=2$, if $$\left\{\begin{matrix}
%	if k_1=k_2+1, k_2\ge8\\
%	if k_2=k_1+1, k_1\ge 5\\
%	if |k_1-k_2|\ge 3, \min\{k_1,k_2\}>4
%	
%\end{matrix}\right.$$
%$\min\{k_1,k_2\}>3$
%the code is GPO
%
%
%so if $|k_1-k_2|\ge 3$ or $m<3$, $[2^k-2^{k_1}-m\cdot \frac{2^{k_2}-1}{3}, \frac{k}{2}, 3\cdot 2^{k-2}-3\cdot 2^{k_1-2}-m\cdot 2^{k_2-2}]_4$ code is GDO;
%if $\min\{k_1,k_2\}>\lfloor \frac{m}{2}\rfloor+3$, then it is GPO. 
\end{proof}

%Based on Lemma \ref{divided_Simplex}, we prove that the following two additive codes can be divided into three equivalent parts.

%matrices is $A_{\frac{k_1}{2}}^{\circ} \star \begin{pmatrix}
%	G_{\frac{k_2}{2}}\\ w*G_{\frac{k_2}{2}}
%\end{pmatrix}$ Let the symbols be the same as above. Then
%
%(1) the additive codes generated by $\mathcal{L}_\frac{k}{2}\setminus \mathcal{L}_\frac{k_1^{\prime}}{2}$ and $\mathcal{L}_\frac{k}{2}\setminus (\mathcal{L}_{\frac{k_1^{\prime}}{2}}+\mathcal{L}_{\frac{k_2^{\prime}}{2}})$ both can be divided into three equivalent parts. 

\begin{theorem}\label{anticode_k_1_k_2_1/3}
	Let the symbols be the same as above, and then we will have the following results.
	
	(1) $\mathcal{L}_\frac{k}{2}\setminus \mathcal{L}_\frac{k_1^{\prime}}{2}$ can be divided into three equivalent parts, each part 
	generates an additive two-weight additive code with parameters $[\frac{2^k-2^{k_1}}{3}, \frac{k}{2}, 2^{k-2}-2^{k_1-2}]_4$, whose weight distribution is $1+(2^k-2^{k_2})\cdot z^{ 2^{k-2}- 2^{k_1-2}}+(2^{k_2}-1)\cdot z^{ 2^{k-2}}$;
	
	(2) $\mathcal{L}_\frac{k}{2}\setminus (\mathcal{L}_{\frac{k_1^{\prime}}{2}}+\mathcal{L}_{\frac{k_2^{\prime}}{2}})$ can be divided into three equivalent parts, each part generates a three-weight additive code with parameters $[\frac{2^k-2^{k_1}-2^{k_2}+1}{3}, \frac{k}{2}, 2^{k-2}-2^{k_1-2}-2^{k_2-2}]_4$, and weight distribution $1+(2^k-2^{k_1}-2^{k_2})\cdot z^{ 2^{k-2}- 2^{k_1-2}- 2^{k_2-2}}+(2^{k_1}-1)\cdot z^{ 2^{k-2}- 2^{k_1-2}}+(2^{k_2}-1)\cdot z^{ 2^{k-2}- 2^{k_2-2}}$.
	
	(3) The codes in (1) and (2) are both GPO and Griesmer codes.
\end{theorem}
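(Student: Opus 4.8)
The plan is to exploit the threefold block symmetry that $A_{\frac{k_2}{2}}$ inherits from Lemma \ref{divided_Simplex}. Writing $A_{\frac{k_2}{2}}=(B_1\mid B_2\mid B_3)$ with $B_1=\left(\begin{smallmatrix}G\\wG\end{smallmatrix}\right)$, $B_2=wB_1$ and $B_3=w^2B_1$ (where $G$ is the $\frac{k_2}{2}$-dimensional quaternary Simplex generator, so each $B_j$ has $\frac{2^{k_2}-1}{3}$ columns), I would partition the generator matrix of each code in Theorem \ref{anticode_k_1_k_2} into three column-blocks according to which $B_j$ is used in the lower $k_2$ rows. For part (2) the relevant matrix is $A_{\frac{k_1}{2}}\star A_{\frac{k_2}{2}}$, whose three parts each carry $(2^{k_1}-1)\cdot\frac{2^{k_2}-1}{3}=\frac{2^k-2^{k_1}-2^{k_2}+1}{3}$ columns; for part (1) the matrix $A_{\frac{k}{2}\setminus\frac{k_1^{\prime}}{2}}$ additionally contains $\left(\begin{smallmatrix}\mathbf 0\\A_{\frac{k_2}{2}}\end{smallmatrix}\right)$, which is split in the same way, yielding parts of length $\frac{2^k-2^{k_1}}{3}$. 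Since $3\mid 2^{k_2}-1$ ($k_2$ even), all these lengths are integers.

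The core step is to show that every codeword distributes its weight equally over the three parts. Fix a message and let $u_i$ be the entry contributed by the upper $k_1$ rows (an entry of a codeword of the ASEP code $C_{A_{\frac{k_1}{2}}}$) and $v_m$ the entry contributed by $B_1$ (an entry of a quaternary Simplex codeword $\bm\gamma\cdot G$); then the entry of part $t$ at position $(i,m)$ is $u_i+w^t v_m$ for $t=0,1,2$. Setting $n^u_\lambda=|\{i:u_i=\lambda\}|$, the number of zero entries of part $t$ is $N_t=|\{(i,m):u_i=w^tv_m\}|=\sum_m n^u_{w^tv_m}$. By the ASEP property (Lemma \ref{APS_construction}), when the upper message is nonzero one has $n^u_1=n^u_w=n^u_{w^2}=2^{k_1-2}$, so every nonzero $v_m$ contributes the same value $2^{k_1-2}$ whether $v_m$ equals $1$, $w$ or $w^2$, and $N_t=n^u_0\,|\{m:v_m=0\}|+2^{k_1-2}\,|\{m:v_m\neq0\}|$ is independent of $t$. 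When the upper message is zero, all $u_i=0$ and $N_t=(2^{k_1}-1)\,|\{m:v_m=0\}|$, again independent of $t$; the extra block in part (1) is handled identically with $u_i\equiv 0$. Hence $N_0=N_1=N_2$, so the three parts carry equal weight, each exactly one third of the codeword's total. This counting is the crux, and it succeeds precisely because the nonzero symbols of an ASEP codeword are equidistributed, which is why $A_{\frac{k_1}{2}}$ must be an ASEP code rather than an arbitrary additive code.

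From the equal split I would read off the remaining claims. The three parts are in fact the same code: as the message ranges over $F_2^{k}$ the vector $(v_m)_m$ ranges over the entire $F_4$-linear Simplex code, and the substitution $\bm\gamma\mapsto w^{-t}\bm\gamma$ turns part $t$ into part $0$ coordinatewise, so the codeword sets coincide for $t=0,1,2$; this is the asserted equivalence. Because each nonzero message yields a part-codeword of positive weight, no nonzero message maps to zero, so the row rank is preserved and each part has dimension $\frac{k}{2}$. Writing $W_{\mathrm{full}}$ for the weight enumerator of the Theorem \ref{anticode_k_1_k_2} code and $W_{\mathrm{part}}$ for that of a single part, equal splitting gives $W_{\mathrm{full}}(z)=W_{\mathrm{part}}(z^3)$; since every weight in $W_{\mathrm{full}}$ is divisible by $3$, dividing each exponent by $3$ converts the distributions of Theorem \ref{anticode_k_1_k_2}(1),(3) into exactly the two- and three-weight distributions claimed in (1),(2), with minimum distances $2^{k-2}-2^{k_1-2}$ and $2^{k-2}-2^{k_1-2}-2^{k_2-2}$.

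For (3) I would verify the two Griesmer conditions directly from Lemma \ref{Griesmer_Bound}. Evaluating $g(k,2d)=\sum_{i=0}^{k-1}\lceil 2d/2^i\rceil$ with $2d=2^{k-1}-2^{k_1-1}$ (part (1)) and $2d=2^{k-1}-2^{k_1-1}-2^{k_2-1}$ (part (2)) by splitting the index range at the breakpoints $i=k_1-1$ (and $i=k_2-1$), the ceilings telescope into geometric sums giving $2^k-2^{k_1}=3n$ and $2^k-2^{k_1}-2^{k_2}+1=3n$ respectively, so both codes meet the additive Griesmer bound with equality. For puncture-optimality I would compare $g(k,2d+2)$ with $g(k,2d)$: since $2d\equiv 0\pmod 4$ (as $k_1\ge 3$, $k_2\ge 4$), the terms $i=0,1,2$ alone contribute $2+1+1=4$, whence $g(k,2d+2)\ge 3n+4>3n+3$, which is exactly condition \eqref{griesmer_rd_opt} for GPO. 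The only delicate point here is keeping the ceiling evaluation correct across the breakpoints; the rest is bookkeeping. I expect the main obstacle of the whole proof to be the equal-split counting of the second paragraph, both in verifying the independence of $N_t$ from $t$ and in correctly handling the degenerate case where the upper message vanishes.
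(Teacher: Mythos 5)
Your proof is correct and follows essentially the same route as the paper's: the same threefold split of the $A_{\frac{k_2}{2}}$ block into $(B_1\mid wB_1\mid w^2B_1)$, the same use of the $F_4^*$-invariance of the quaternary Simplex code to identify the three parts, and the same Griesmer arithmetic for part (3). The one place you go beyond the paper is the explicit zero-counting argument showing that each individual codeword splits its weight equally over the three parts — the paper merely asserts that lengths, distances and weight distributions divide by three, whereas your per-codeword count of $N_t$ (which genuinely needs the ASEP equidistribution of the upper block) is what rigorously justifies $W_{\mathrm{full}}(z)=W_{\mathrm{part}}(z^{3})$.
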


\begin{proof}
	Since $\frac{k}{2}$-dimensional ASEP code $C_{A_\frac{k}{2}}$ can be generated by $(A_{\frac{k_1}{2}}^{\circ} \star A_{\frac{k_2}{2}}^{\circ})^{\ominus  }$, and $A_{\frac{k_2}{2}}$ can be divided into three parts, the generator matrix of $\mathcal{L}_\frac{k}{2}\setminus \mathcal{L}_\frac{k_1^{\prime}}{2}$ can be written as

	\begin{equation}
					\setlength{\arraycolsep}{0pt}
	\left( 
	\begin{array}{c|c|c}
		A_{\frac{k_1}{2}}^{\circ} \star \begin{pmatrix}
			G_{\frac{k_2}{2}}\\ w*G_{\frac{k_2}{2}}
		\end{pmatrix}  
	&A_{\frac{k_1}{2}}^{\circ} \star \begin{pmatrix}
		w*G_{\frac{k_2}{2}}\\ w^2*G_{\frac{k_2}{2}}
	\end{pmatrix}  
	&A_{\frac{k_1}{2}}^{\circ} \star \begin{pmatrix}
		w^2*G_{\frac{k_2}{2}}\\ G_{\frac{k_2}{2}}
	\end{pmatrix} 
	\end{array}
	\right).
\end{equation}

	Since the additive code generated by  $\begin{pmatrix}
	w^2*G_{\frac{k_2}{2}}\\ G_{\frac{k_2}{2}}
\end{pmatrix}$ is also a linear code, for all $\alpha$ in $F_4^*$, the additive codes generated by $\begin{pmatrix}
	\alpha\cdot G_{\frac{k_2}{2}}\\\alpha\cdot w*G_{\frac{k_2}{2}}
\end{pmatrix}$ are same.
Therefore, the codes generated by $A_{\frac{k_1}{2}}^{\circ} \star \begin{pmatrix}
	G_{\frac{k_2}{2}}\\ w*G_{\frac{k_2}{2}}
\end{pmatrix}$ and $A_{\frac{k_1}{2}}^{\circ} \star \begin{pmatrix}
	\alpha\cdot G_{\frac{k_2}{2}}\\\alpha\cdot w*G_{\frac{k_2}{2}}
\end{pmatrix}$ are the same.
Similarly, the codes generated by $A_{\frac{k_1}{2}} \star \begin{pmatrix}
	G_{\frac{k_2}{2}}\\ w*G_{\frac{k_2}{2}}
\end{pmatrix}$ and $A_{\frac{k_1}{2}} \star \begin{pmatrix}
	\alpha\cdot G_{\frac{k_2}{2}}\\\alpha\cdot w*G_{\frac{k_2}{2}}
\end{pmatrix}$ are also the same.

Thus, the additive codes generated by $\mathcal{L}_\frac{k}{2}\setminus \mathcal{L}_\frac{k_1^{\prime}}{2}$ and $\mathcal{L}_\frac{k}{2}\setminus (\mathcal{L}_{\frac{k_1^{\prime}}{2}}+\mathcal{L}_{\frac{k_2^{\prime}}{2}})$ both can be divided into three equivalent parts. 
The result additive codes have similar parameters and weight distributions with those in Theorem \ref{anticode_k_1_k_2}, but only need to replace the lengths and distances by one-third of the original codes. 

Since $g(k,2^{k-1}-2^{k_1-1})=2^{k}-2^{k_1}$ and $g(k,2^{k-1}-2^{k_1-1}+2)=2^{k}-2^{k_1}+k_1+1$, $[\frac{2^k-2^{k_1}}{3}, \frac{k}{2}, 2^{k-2}-2^{k_1-2}]_4$ code is both GPO and Griesmer code.
In addition, as $g(k,2^{k-1}-2^{k_1-1}-2^{k_2-1})=2^{k}-2^{k_1}-2^{k_2}+1$ and $g(k,2^{k-1}-2^{k_1-1}-2^{k_2-1}+2)=2^{k}-2^{k_1}-2^{k_2}+\min\{k_1,k_2\}+2>2^{k}-2^{k_1}-2^{k_2}+4$,  
$[\frac{2^k-2^{k_1}}{3}, \frac{k}{2}, 2^{k-2}-2^{k_1-2}]_4$ code is also both GPO and Griesmer code.
\end{proof}

%\begin{theorem}\label{anticode_k_1_k_2_1/3}
%Let the symbols be the same as above. Then
%
%(1) $A_{\frac{k_1}{2}}^{\circ} \star \begin{pmatrix}
%	G_{\frac{k_2}{2}}\\ w*G_{\frac{k_2}{2}}
%\end{pmatrix}$ generates a two-weight code with parameters $[\frac{2^k-2^{k_1}}{3}, \frac{k}{2}, 2^{k-2}-2^{k_1-2}]_4$, whose weight distribution is $1+(2^k-2^{k_2})\cdot z^{ 2^{k-2}- 2^{k_1-2}}+(2^{k_2}-1)\cdot z^{ 2^{k-2}}$;
%
%(2) $A_{\frac{k_1}{2}} \star \begin{pmatrix}
%	G_{\frac{k_2}{2}}\\ w*G_{\frac{k_2}{2}}
%\end{pmatrix}$ generates a three-weight code with parameters $[\frac{2^k-2^{k_1}-2^{k_2}+1}{3}, \frac{k}{2}, 2^{k-2}-2^{k_1-2}-2^{k_2-2}]_4$, and weight distribution $1+(2^k-2^{k_1}-2^{k_2})\cdot z^{ 2^{k-2}- 2^{k_1-2}- 2^{k_2-2}}+(2^{k_1}-1)\cdot z^{ 2^{k-2}- 2^{k_1-2}}+(2^{k_2}-1)\cdot z^{ 2^{k-2}- 2^{k_2-2}}$.
%\end{theorem}
%\begin{proof}
%In Lemma \ref{L_divided_camba}, we show that both additive codes in Theorem \ref{anticode_k_1_k_2} can be divided into three equivalent parts, so this theorem naturally holds.
%\end{proof}

\section{Generalized construction X}\label{Sec V ASEP}
%	Let	$\begin{matrix}
%	[n,k_1-k_2, \delta_1]_4\\
%	[n,k_2,\delta_2]_4
%\end{matrix}\subset [n,k_1,\delta_1]_4$ denote $[n,k_1,\delta_1]_4$ have two disjoint subcodes $[n,k_1-k_2, \delta_1]_4$ and $[n,k_2,\delta_2]_4$. 

To describe the matrix combination method in this section, we denote by $A_{[n,k]_4}$ or $A_{[n,k,d]_4}$ a generator matrix of the additive code with parameters $[n,k,d]_4$. 
%Let $C_{[n,k,d]_4}$ denote additive code with parameters $[n,k,d]_4$, whose generator matrix denoted as $A_{[n,k]_4}$ or $A_{[n,k,d]_4}$.
\begin{definition}
	If $C_a$ is an additive $[n,k_1,\delta_1]_4$ code which has two disjoint subcodes $C_{s_1}$ and $C_{s_2}$ with parameters $[n,k_1-k_2, \delta_1]_4$ and $[n,k_2,\delta_2]_4$, respectively, satisfying for all nonzero codewords $\bm{c_1}$ of $C_{s_1}$ and $\bm{c_2}$ of $C_{s_2}$, $\delta_2=\mathbf{w}(\bm{c_1}+\bm{c_2})> \delta_1$. Then, we call $C_a$ has an \textit{invariant subcode} $C_{s_2}$ with parameters $[n,k_2,\delta_2]_4$.
%	 if $\begin{matrix}
%		[n,k_1-k_2, \delta_1]_4\\
%		[n,k_2,\delta_2]_4
%	\end{matrix}\subset [n,k_1,\delta_1]_4$
\end{definition}

\begin{theorem}(Generalized Construction X)\label{Generalized_X_Construction}
	Let $C_1$ and $C_2$ be additive codes with parameters $[n_1,k,d_1]_4$ and $[n_2,k,d_2]_4$, respectively, satisfying $[n_1,k_1^{\prime},d_1^{\prime}]_4\subset [n_1,k,d_1]_4$, and $[n_2,k_2^{\prime},d_2^{\prime}]_4\subset [n_2,k,d_2]_4$. 
	Suppose $k_1=k-k_1^{\prime}$, $k_2=k-k_2^{\prime}$, $k_1>k_2$,  $\delta_1=d_1^{\prime}-d_1$, $\delta_2=d_2^{\prime}-d_1+\delta_1$. Choosing auxiliary code $C_{au}$ with parameters $[n,k_1,\delta_1]_4$,  which has an invariant subcode $C_{isub}$ with parameters $[n,k_2,\delta_2]_4$, 
	then there is an additive code with parameters $[n_1+n_2+n,k,d_1+d_2+\delta_2]_4$.
\end{theorem}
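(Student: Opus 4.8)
The plan is to construct an explicit generator matrix for the desired $[n_1+n_2+n,k,d_1+d_2+\delta_2]_4$ code by stacking generator matrices of $C_1$, $C_2$, and the auxiliary code $C_{au}$ in a block pattern, in the spirit of the ordinary additive Construction X (Lemma \ref{construction_X}), and then to verify the minimum distance by a case analysis on how a codeword meets the three subcode decompositions. First I would fix generator matrices $G_1$ of $C_1$ and $G_2$ of $C_2$, arranged so that the top $k_1^{\prime}$ (resp. $k_2^{\prime}$) rows generate the prescribed subcodes, and the bottom $k_1=k-k_1^{\prime}$ (resp. $k_2=k-k_2^{\prime}$) rows complete them to full-rank generators of $C_1$ and $C_2$. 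The key structural point is that all three codes share the common dimension $k$, so their generator rows can be indexed by a single coordinate space $F_{4_a}^{\frac{k}{2}}$; the auxiliary matrix $G_{au}$ must be attached to exactly the $k_1$ "excess" rows so that its invariant subcode $C_{isub}$ lines up with the $k_2$ rows responsible for the larger distance jump.

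\textbf{The block matrix.}

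I would write the generator matrix in the form
\begin{equation}
	G=\left(\begin{array}{c|c|c}
		G_1 & G_2 & G_{au}
	\end{array}\right),
\end{equation}
where the rows are partitioned so that the first $k_2^{\prime}$-type rows carry zeros in the $G_{au}$ block, the next $k_2$ rows carry the generator rows of the invariant subcode $C_{isub}$, and the remaining $k_1-k_2$ rows carry the rows of $C_{au}$ outside $C_{isub}$. Concretely the auxiliary block is $\bm{0}$ on the $k-k_1$ rows that already achieve distance only $d_1$ on the $C_1$ side, and equals $G_{au}$ (with its invariant-subcode structure) on the remaining $k_1$ rows. This alignment is what makes the three distance contributions $d_1$, $d_2$, $\delta_2$ additive rather than merely taking a minimum.

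\textbf{Verifying the distance by cases.}

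The core of the argument is a weight estimate for an arbitrary nonzero codeword $\bm{c}=(\bm{c}^{(1)},\bm{c}^{(2)},\bm{c}^{(au)})$ generated by a message $\bm{m}\in F_{4_a}^{\frac{k}{2}}$. I would split according to where $\bm{m}$ lands relative to the two subcode decompositions: (i) if $\bm{m}$ projects nontrivially onto the $C_{au}$-outside-$C_{isub}$ rows, then $\bm{c}^{(1)}$ already has weight $\ge d_1^{\prime}=d_1+\delta_1$ on the $C_1$ block while $\bm{c}^{(au)}$ contributes at least... here I would use the invariant-subcode property, $\mathbf{w}(\bm{c}_1+\bm{c}_2)=\delta_2$, to pin the auxiliary contribution; (ii) if $\bm{m}$ lands in the rows generating $C_{isub}$, then $\bm{c}^{(2)}$ has weight $\ge d_2^{\prime}=d_2-\delta_1+\delta_2$ on the $C_2$ block and $\bm{c}^{(au)}$ contributes $\delta_2$; (iii) if $\bm{m}$ is supported only on the zero-auxiliary rows, then $\bm{c}^{(1)}$ and $\bm{c}^{(2)}$ independently contribute $\ge d_1$ and $\ge d_2$ with no auxiliary cost, and I must check $d_1+d_2\ge d_1+d_2+\delta_2$ forces $\delta_2\le 0$ — so this case needs the definition of the subcodes to guarantee such messages already give large $C_1$ or $C_2$ weight. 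In each case the total weight should be bounded below by $d_1+d_2+\delta_2$ after summing the three blocks. The main obstacle I anticipate is exactly the bookkeeping of case (iii) and the boundary between the $k_1^{\prime}$/$k_2^{\prime}$ decompositions: because $C_1$ and $C_2$ have different subcode dimensions ($k_1>k_2$), the two partitions of the row space do not coincide, so I must track a message's image under \emph{both} quotients simultaneously and use the hypothesis $\delta_2=d_2^{\prime}-d_1+\delta_1$ to convert weights measured relative to one decomposition into the other. Getting the invariant-subcode inequality $\delta_2>\delta_1$ to interact correctly with these mismatched partitions is where the argument must be handled most carefully.
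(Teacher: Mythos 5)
Your overall plan --- a three-block generator matrix in the spirit of Construction X followed by a case analysis on where the message meets the subcode decompositions --- is the same as the paper's, but the execution has a genuine gap: you never commit to a consistent, correct alignment of the three row partitions, and the alignment you do sketch is the wrong one. Dimensionally, your partition ($k_2'$ zero rows, then $k_2$ rows for $C_{isub}$, then $k_1-k_2$ rows for the rest of $C_{au}$) sums to $k_2'+k_1>k$; the zero auxiliary block must occupy the $k_1'=k-k_1$ rows generating the $[n_1,k_1',d_1']_4$ subcode of $C_1$ (which achieve $d_1'$, not ``only $d_1$'' as you write). More importantly, $C_{isub}$ must sit on the $k_2$ rows that \emph{complete} $C_2$ past its subcode --- the rows responsible for the smaller distance $d_2$ --- so that the $\delta_2$ gain lands where it is needed. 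Your case (ii) assumes the opposite (a message in the $C_{isub}$ rows gives $\mathbf{w}(\bm{c}^{(2)})\ge d_2'$), and with that alignment the rows on which $C_2$ achieves only $d_2$ receive at most a $\delta_1$ boost, so the bound $d_1+d_2+\delta_2$ fails. Case (i) is also incorrect as stated: a message meeting the rows of $C_{au}$ outside $C_{isub}$ lies outside the $C_1$-subcode, so $\bm{c}^{(1)}$ is only guaranteed weight $d_1$, not $d_1'$, and the auxiliary contribution you leave as ``at least\dots'' is only $\delta_1$ unless the $C_{isub}$-component is also nonzero. Case (iii), which you explicitly leave open, is resolved precisely by the correct alignment: since $k_1'<k_2'$ (indeed $k_2'=k_1'+(k_1-k_2)$), the zero-auxiliary rows lie inside \emph{both} subcodes, so the two outer blocks contribute $d_1'+d_2'$ unaided.

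The paper sidesteps the simultaneous bookkeeping by working in two steps: it first applies Lemma \ref{construction_X} to $C_1$ and $C_{au}$ with $C_{isub}$ placed on the top $k_2$ rows, notes that every codeword of the resulting $[n_1+n,k,d_1+\delta_1]_4$ code with nonzero component in those $k_2$ rows has weight at least $d_1+\delta_2$ on the first $n_1+n$ coordinates, and only then juxtaposes $C_2$ with its $k_2$ completion rows aligned to those same rows. If you redo the arithmetic with this alignment you will find all three cases close at the value $d_1'+d_2'$, which equals $d_1+d_2+\delta_2$ under the relation $\delta_2=d_2'-d_2+\delta_1$ that you implicitly used in case (ii); note this differs from the theorem's stated $\delta_2=d_2'-d_1+\delta_1$ unless $d_1=d_2$, a discrepancy worth flagging when you complete the proof.
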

\begin{proof}
	Firstly, $C_1$ and $C_{au}$ can determine an additive code $C_{X_1}$ with parameters $[n_1+n,k,d_1+\delta_1]_4$ by Construction X, whose generator matrix can be written as $A_{X_1}$.
	\begin{equation}\label{First_step_X}
		\begin{array}{rl}
				A_{X_1}=&\left(	\begin{array}{c|c}
					A_{[n_1,k_1, d_1]_4} & A_{[n,k_1,\delta_1]_4}\\
					A_{[n_1,k_1^{\prime},d_1^{\prime}]_4} & \bm{0_{(k-k_1), n}} \\  
				\end{array}\right)	\\
			=&\left(	\begin{array}{c|c}
				A_{[n_1,k_1,  d_1]_4} & \begin{matrix}
					A_{[n,k_2,\delta_2]_4}\\
					A_{[n,k_1-k_2, \delta_1]_4}
				\end{matrix}\\
				A_{[n_1,k_1^{\prime},d_1^{\prime}]_4} & \bm{0_{k_1^{\prime}, n}}\\  
			\end{array}\right).
		\end{array}
	\end{equation}
	
	Let $C_{X_{1a}}$ and $C_{X_{1b}}$ be the additive codes determined by the first $k_2$ rows of $A_{X_1}$ and the others rows, respectively. 
Since $C_{au}$ has an invariant subcode $C_{isub}$ with parameters $[n,k_2,\delta_2]_4$ satisfying for any nonzero codeword $\bm{c}$ in $C_{au} \setminus C_{isub}$, $\mathbf{w}(\bm{c})\ge \delta_2$,
then for all nonzero codewords $\bm{c_1}$ of $C_{X_{1a}}$ and $\bm{c_2}$ of $C_{X_{1b}}$,  we have $\mathbf{w}(\bm{c_1}+\bm{c_2})\ge d_1+\delta_2$.  
Therefore, $C_{X_{1a}}$ can help to increase the distance of a $\frac{k_2}{2}$-dimensional subspace of $C_2$ by $\delta_2-\delta_1$. The specific combination method is as following matrix.

\begin{equation}
				\setlength{\arraycolsep}{0.5pt}
	A_{X_2}
	=\left(	\begin{array}{c|c|c}
		A_{[n_1,k_1, d_1]_4} &A_{[n_2,k_2,  d_2]_4} & \begin{matrix}
			A_{[n,k_2,\delta_2]_4}\\
			A_{[n,k_1-k_2, \delta_1]_4}
		\end{matrix}\\
		A_{[n_1,k_1^{\prime},d_1^{\prime}]_4}& A_{[n_2,k_2^{\prime},d_2^{\prime}]_4}& \bm{0_{k_1^{\prime}, n}}\\  
	\end{array}\right).
\end{equation}

It is easy to judge that $A_{X_2}$ can generate an additive code with parameters $[n_1+n_2+n,k,d_1+d_2+\delta_2]_4$. 
\end{proof}

Theorem \ref{Generalized_X_Construction} is an enhancement of Construction X. Compared with Construction X, the present construction enlarges the distance by $\delta_2$ rather than $\delta_1$. For convenience, we also have the following corollary.

\begin{corollary}\label{cor_Generalized_X_Construction}
Let $C_1$ and $C_2$ be additive codes with parameters $[n_1,k,d_1]_4$ and $[n_2,k,d_2]_4$, respectively. 
If $C_1$ has a subcode with parameters $[n_1,k_1^{\prime},d_1^{\prime}]_4$ and $C_2$ has an invariant subcode with parameters $[n,k-k_1^{\prime},d_2+d_1^{\prime}-d_1]_4$, then there exists an additive code with parameters $[n_1+n_2,k,d_1^{\prime}+d_2]_4$. 
\end{corollary}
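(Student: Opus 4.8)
The plan is to prove this as a single-shot juxtaposition of $C_1$ and $C_2$, streamlining the two-step mechanism of Theorem \ref{Generalized_X_Construction}: there the distance boost was produced by appending a separate auxiliary code $C_{au}$ of length $n$ carrying an invariant subcode, whereas here $C_2$ is \emph{assumed} to already contain such an invariant subcode, so I can let $C_2$ itself play the role of both the data code and the auxiliary code. No extra coordinates are then appended, which is exactly why the resulting length is $n_1+n_2$ rather than $n_1+n_2+n$. First I would fix compatible splittings of the two codes: write a generator matrix of $C_1$ whose bottom rows generate the prescribed subcode $[n_1,k_1^{\prime},d_1^{\prime}]_4$ and whose top rows generate a complementary $(k-k_1^{\prime})$-dimensional part; and write a generator matrix of $C_2$ whose top rows generate its invariant subcode $C_{isub}$, of parameters $[n_2,k-k_1^{\prime},d_2+d_1^{\prime}-d_1]_4$ (necessarily of length $n_2$, the length of $C_2$), and whose bottom rows generate the complement, the latter being an $[n_2,k_1^{\prime},d_2]_4$ code whose minimum weight is the overall minimum distance $d_2$ of $C_2$, as forced by the invariant-subcode definition with $\delta_1=d_2$.

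The construction is the block matrix
\[
G=\left(\begin{array}{c|c}
A_{[n_1,\,k-k_1^{\prime},\,d_1]_4} & A_{[n_2,\,k-k_1^{\prime},\,d_2+d_1^{\prime}-d_1]_4}\\
A_{[n_1,\,k_1^{\prime},\,d_1^{\prime}]_4} & A_{[n_2,\,k_1^{\prime},\,d_2]_4}
\end{array}\right),
\]
that is, I align the complement of the subcode of $C_1$ with the invariant subcode of $C_2$, and the subcode of $C_1$ with the complementary part of $C_2$. This matrix has length $n_1+n_2$ and full dimension $k$: a message $(\bm{u},\bm{v})$ (with $\bm{u}$ addressing the top rows and $\bm{v}$ the bottom rows) maps to $\bm{0}$ only if its $C_1$-coordinates vanish, and since the left blocks generate the complement and the subcode $S_1$ of $C_1$ in direct sum, this forces $\bm{u}=\bm{0}$ and $\bm{v}=\bm{0}$.

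The distance claim then reduces to a two-case weight estimate on the codeword of message $(\bm{u},\bm{v})$. If $\bm{u}=\bm{0}$ and $\bm{v}\neq\bm{0}$, the $C_1$-coordinates lie in the subcode $[n_1,k_1^{\prime},d_1^{\prime}]_4$ and have weight $\geq d_1^{\prime}$, while the $C_2$-coordinates lie in the complementary $[n_2,k_1^{\prime},d_2]_4$ part and have weight $\geq d_2$, giving total weight $\geq d_1^{\prime}+d_2$. If $\bm{u}\neq\bm{0}$, the $C_1$-coordinates form a nonzero codeword of $C_1$ (its complementary component is nonzero and cannot be cancelled by the $S_1$-component) and so have weight $\geq d_1$, while the $C_2$-coordinates equal $\bm{c_1}+\bm{c_2}$ with $\bm{c_2}\in C_{isub}$ nonzero and $\bm{c_1}$ in the complementary part; invoking the invariant-subcode property (if $\bm{c_1}=\bm{0}$ then $\mathbf{w}(\bm{c_2})\geq d_2+d_1^{\prime}-d_1$, and if $\bm{c_1}\neq\bm{0}$ then $\mathbf{w}(\bm{c_1}+\bm{c_2})$ equals exactly $d_2+d_1^{\prime}-d_1$), the $C_2$-coordinates have weight $\geq d_2+d_1^{\prime}-d_1$, so the total is $\geq d_1+(d_2+d_1^{\prime}-d_1)=d_1^{\prime}+d_2$. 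Both cases yield the asserted minimum distance.

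I expect the crux to be the case $\bm{u}\neq\bm{0}$. Ordinary Construction X (Lemma \ref{construction_X}) applied with $C_1$ and the invariant subcode of $C_2$ as auxiliary would only guarantee distance $\min\{d_1^{\prime},\,d_1^{\prime}+d_2\}=d_1^{\prime}$, because it places zeros in the top-right block, so the $\bm{u}=\bm{0}$ codewords receive no contribution from $C_2$. The improvement to $d_1^{\prime}+d_2$ comes precisely from filling that block with the complementary part of $C_2$ and then using the \emph{invariant} property of $C_{isub}$ --- not merely its minimum distance --- to keep those added $C_2$-coordinates from cancelling the expected gain. The one point requiring care is to verify that the hypotheses match the invariant-subcode definition verbatim, with $\delta_1=d_2$ and $\delta_2=d_2+d_1^{\prime}-d_1$, so that the equality $\mathbf{w}(\bm{c_1}+\bm{c_2})=d_2+d_1^{\prime}-d_1$ is legitimately available.
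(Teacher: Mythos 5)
Your proposal is correct and follows essentially the same route as the paper, which simply invokes ``the procedure of proving Theorem~\ref{Generalized_X_Construction}'': you form the $2\times 2$ block juxtaposition aligning the complement of the $C_1$-subcode with the invariant subcode of $C_2$, and the invariant-subcode property supplies the weight $\ge d_2+d_1^{\prime}-d_1$ on the $C_2$-coordinates exactly as the theorem's argument does. Your write-up in fact supplies the case analysis (and the observation that the stated length $n$ must be read as $n_2$) that the paper leaves implicit.
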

\begin{proof}
%This corollary holds from proving Theorem \ref{Generalized_X_Construction}.
This corollary holds from the procedure of proving Theorem \ref{Generalized_X_Construction}.
\end{proof}

Additive codes with invariant subcodes are essential in implementing the generalized Construction X. In the following, we focus on constructing optimal additive codes with invariant subcodes.

\begin{lemma}\label{One-third}
	Let $k\ge 5$ be an odd number. Then, there exists an additive GPO $[ \frac{2^k+1}{3} , \frac{k}{2},2^{k-2}]_4$ code with weight distribution $1+(2^{k-1}-1)z^{2^{k-2}}+2^{k-1}z^{2^{k-2}+1}$, which has an invariant subcode with parameters $[\frac{2^k+1}{3} , 0.5,2^{k-2}+1]_4$. 
\end{lemma}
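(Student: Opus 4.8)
The plan is to realize the desired code as a carefully chosen "one‑third" coordinate selection from the ASEP code $C_{A_{\frac{k}{2}}}$ of length $2^k-1$, exploiting that for odd $k$ one has $2^k-1\equiv 1\pmod 3$, so a near‑balanced trisection leaves exactly one leftover coordinate. Concretely, I would write $k=k_1+k_2$ with $k_1=3$ and $k_2=k-3$ (even) and use the decomposition of Lemma \ref{L_camba},
\[
A_{\frac{k}{2}}\cong\left(A_{1.5}\star A_{\frac{k-3}{2}}\ \middle|\ \begin{matrix}A_{1.5}&\mathbf{0}\\ \mathbf{0}&A_{\frac{k-3}{2}}\end{matrix}\right),
\]
where $A_{\frac{k-3}{2}}$ is the integer‑dimensional ASEP code, which splits into the three Simplex blocks $G,\ wG,\ w^2G$ as in Theorem \ref{anticode_k_1_k_2_1/3}. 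The proposed generator multiset for the code takes one Simplex block out of $A_{1.5}\star A_{\frac{k-3}{2}}$, one Simplex block out of the pure $A_{\frac{k-3}{2}}$ factor, and a length‑$3$ restriction of the pure $A_{1.5}$ factor. A direct count gives total length $7\cdot\frac{2^{k-3}-1}{3}+\frac{2^{k-3}-1}{3}+3=\frac{2^k+1}{3}$, and since every nonzero codeword will turn out to have positive weight, the restriction is injective and the dimension stays $\frac{k}{2}$.

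The key structural fact I would establish first is that the three blocks $G,wG,w^2G$ of the integer ASEP code have identical support on every codeword (the three block values differ only by the nonzero scalars $1,w,w^2$), so each block carries exactly one third of that codeword's weight; the same argument applied columnwise shows each Simplex block of $A_{1.5}\star A_{\frac{k-3}{2}}$ carries one third of the weight on the $\star$‑product. Hence for a codeword indexed by $(u_1,u_2)$, with $u_1$ acting on the $A_{1.5}$ rows and $u_2$ on the $A_{\frac{k-3}{2}}$ rows, the two block‑pieces contribute $\tfrac13(3\cdot2^{k-2}-W_1)$, where $W_1\in\{0,6\}$ is the weight of $u_1$ on the full length‑$7$ code $C_{1.5}$. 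The remaining piece is the base: I would choose the three retained coordinates of $C_{1.5}$ to be precisely the (distinct) zero‑positions of the three nonzero words of a fixed two‑dimensional subcode $\{0,a,b,a+b\}$ of $C_{1.5}$. With this choice a nonzero $u_1$ contributes base‑weight $2$ if $u_1\in\{a,b,a+b\}$ and base‑weight $3$ otherwise.

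Putting the pieces together yields, for every nonzero codeword, total weight $2^{k-2}$ (when $u_1=0$, or $u_1\in\{a,b,a+b\}$) or $2^{k-2}+1$ (when $u_1\notin\{0,a,b,a+b\}$); tallying the $(u_1,u_2)$ ranges gives exactly $2^{k-1}-1$ words of weight $2^{k-2}$ and $2^{k-1}$ of weight $2^{k-2}+1$, the claimed distribution. The invariant subcode then falls out of the same bookkeeping: $C_{s_1}=\{(u_1,u_2):u_1\in\{0,a,b,a+b\}\}$ is an additive subcode of dimension $\frac{k-1}{2}$ and constant weight $2^{k-2}$, and any weight‑$(2^{k-2}+1)$ word $v$ generates $C_{s_2}=\{0,v\}$ with $v+C_{s_1}$ lying entirely in the coset $u_1(v)+\{0,a,b,a+b\}$ of base‑weight‑$3$ words, so $\mathbf{w}(c_1+v)=2^{k-2}+1>2^{k-2}$ for all nonzero $c_1\in C_{s_1}$, matching the invariant‑subcode definition with parameters $[\frac{2^k+1}{3},0.5,2^{k-2}+1]_4$. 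Finally, GPO is a routine Griesmer check: $g(k,2^{k-1})=2^k-1\le 2^k+1=3n$ shows the code exists, while $g(k,2^{k-1}+2)=2^k+k>2^k+4=3(n+1)$ exactly when $k\ge5$, ruling out any $[\,n+1,\frac{k}{2},2^{k-2}+1\,]_4$ code. The main difficulty, and the step I would spend the most care on, is the identical‑support / one‑third weight accounting together with the precise base‑coordinate choice, since these simultaneously force the two‑weight distribution and make $C_{s_1}$ additively closed.
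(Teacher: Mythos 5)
Your route is genuinely different from the paper's. The paper proves Lemma \ref{One-third} iteratively: it starts from an explicit $[3,1.5,2]_4$ code with invariant subcode $[3,0.5,3]_4$, and repeatedly applies Construction X (Lemma \ref{construction_X}) to the $[2^{k-2},\frac{k}{2},3\cdot 2^{k-4}]_4$ code of Lemma \ref{aug_Simplex} (which has a $[2^{k-2},1,2^{k-2}]_4$ subcode), using the previously built $[\frac{2^{k-2}+1}{3},\frac{k-2}{2},2^{k-4}]_4$ code as auxiliary; the bookkeeping $2^{k-2}+\frac{2^{k-2}+1}{3}=\frac{2^k+1}{3}$ and $3\cdot 2^{k-4}+2^{k-4}=2^{k-2}$ then gives the parameters, weight distribution and invariant subcode. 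Your direct anticode-style trisection of the ASEP code is an attractive alternative; your length count, the final weight/multiplicity tally, the invariant-subcode verification, and the Griesmer computation are all correct.

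However, there is a genuine gap at exactly the step you flagged. For the pure $A_{\frac{k-3}{2}}$ factor the identical-support argument is fine: the three block values of a codeword are $c$, $wc$, $w^2c$. But ``the same argument applied columnwise'' does not transfer to $A_{1.5}\star A_{\frac{k-3}{2}}$. There the entry of block $m$ at position $(i,j)$ is $a_i+w^{m-1}c_j$ with $a=u_1A_{1.5}$, and whenever $a_i\neq 0$ and $c_j\neq 0$ the three values $a_i+c_j$, $a_i+wc_j$, $a_i+w^2c_j$ run over $F_4\setminus\{a_i\}$: exactly one of them is zero, and which block it falls in varies with $(i,j)$. So the three $\star$-blocks do \emph{not} have identical support; only their weights coincide, and the reason is the ASEP property of $C_{1.5}$ rather than a scalar-multiple argument. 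Concretely, since every nonzero $a$ takes each nonzero value of $F_4$ exactly twice, $|\{(i,j):a_i=w^{m-1}c_j\}|=N_0(a)N_0(c)+2\sum_{\gamma\neq 0}N_\gamma(c)=N_0(c)+2\,\mathbf{w}(c)$ is independent of $m$; you need this equidistribution count (or an equivalent) to justify the one-third weight claim, and without it the two-weight computation is unsupported. A smaller issue: for $k=5$ your decomposition forces $k_2=2$, outside the range $k_2\ge 4$ assumed for Lemma \ref{L_camba}, so the base case needs a separate (easy) check.
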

\begin{proof}
	Here, we use an iterative method to prove the existence of $[ \frac{2^k+1}{3} , \frac{k}{2},2^{k-2}]_4$ code.
	For $k=3$, we have an additive $[3,1.5,2]_4$ code with generator matrix $A_{[3,1.5,2]_4}$ as follows.
	\begin{equation}\label{(3,1.5,2)}
		A_{[3,1.5,2]_4}=\begin{pmatrix}
			w &  w &w\\
			1  & 1 &  0\\
			0  & 1& 1
		\end{pmatrix}.
	\end{equation}
	
	This is a two-weight code with weight distribution $1+3z^2+4z^3$, which has an invariant subcode with parameters $[3, 0.5,3]_4$. 
	
	By Lemma \ref{aug_Simplex}, we can get a two-weight additive $[2^k,\frac{k}{2}+1,3\cdot 2^{k-2}]_4$ code, which has a $[2^k,1,2^k]_4$ subcode. 
	Let $k=5$, it is $[8,2.5,6]_4$ code, and has an $[8,1,8]_4$ subcode. Choosing $[3,1.5,2]_2$ as an auxiliary code, then by Lemma \ref{construction_X}, we can get an $[11,2.5,8]_4$. According to Lemma \ref{aug_Simplex}, a generator matrix of $[8,2.5,6]_4$ code can be denoted as $A_{[8,2.5,6]_4}=\begin{pmatrix}
	\bm{\gamma_1}^T,\bm{\gamma_2}^T,\bm{\gamma_3}^T,\bm{1_{1,2^{k}-1}}^T,\bm{w_{1,2^{k}-1}}^T
\end{pmatrix}^T$, where $\bm{\gamma_i}$ are row vectors, $1\le i\le 3$. 
Thus, the generator matrix of $[11,2.5,8]_4$ can be written as 
\begin{equation}
	A_{[11,2.5,8]_4}=\left( 
	\begin{array}{c|c}
		\bm{\gamma_1}  & w, w,w \\
		\bm{\gamma_2}  & 1, 1,  0\\ 
		\bm{\gamma_3}  & 0, 1, 1\\ 
		\bm{1_{1,2^{k}-1}}& \bm{0_{1,3}}\\ 
		\bm{w_{1,2^{k}-1}}& \bm{0_{1,3}}\\ 
	\end{array}
	\right)
	=\left( 
	\begin{array}{c}
		\bm{\beta_1}  \\
		\bm{\beta_2} \\ 
		\bm{\beta_3} \\ 
		\bm{\beta_4}\\ 
		\bm{\beta_5}  \\ 
	\end{array}
	\right).
\end{equation}

It is easy to confirm that $(\bm{\beta_1})$ and $(\bm{\beta_2},\bm{\beta_3},\bm{\beta_4},\bm{\beta_5})$ can generate two additive codes $C_1$ and $C_2$ with parameters $[11,0.5,9]_4$ and $[11,2,8]_4$, respectively, where $C_{[11,2,8]_4}$ has weight distribution $1+15z^8$, and $C_{[11,0.5,9]_4}$ is an invariant subcode of $C_{[11,2.5,8]_4}$. 
It follows that $[11,2.5,8]_4$ is a two-weight code, and the weight distribution is $1+15z^8+16z^9$.
	
	 %Similarly, taking $k = 5$, combining $(32,3.5,24]_4$ and $(11,2.5,8]_4$ yields $(43,3.5,32]_4$ with weight distribution $1+63z^{32}+64z^{33}$, and $(43,3.5,32]_4$ has an invariant subcode $(43,0.5,33]_4$. 
	Thus, for $k\ge 7$, by iterative construction we can obtain all codes with parameters $[ \frac{2^k+1}{3} , \frac{k}{2},2^{k-2}]_4$ with weight distribution $1+(2^{k-1}-1)z^{2^{k-2}}+2^{k-1}z^{2^{k-2}+1}$, which has an invariant subcode with parameters $[ \frac{2^k+1}{3} , 0.5,2^{k-2}+1]_4$. 
	In addition, since $g(k,2^{k-1}+2)=2^k+k > 3(\frac{2^k+1}{3} +1)=2^k+4$, when $k\ge 5$, $[ \frac{2^k+1}{3}, \frac{k}{2},2^{k-2}]_4$ code is GPO. 
\end{proof}
%\begin{lemma}$\star\star\star$
%	When $k\ge 3$, the additive code in Lemma \ref{One-third} is an additive Griesmer code.
%\end{lemma}

\begin{remark}
	Recently, Bierbrauer et al. \cite{Bierbrauer2023AnAP} also obtained this class of additive codes, and Lemma \ref{One-third} can be regarded as a different construction method and rediscover for the same class of codes.
\end{remark}

\begin{lemma}\label{anticode_k_1_k_2+1/3}
	Let $k\ge7$ and $k-3 \ge k_2\ge 4$ be respective odd and even integers. Then, we have the following results.
	
	(1) There exists a four-weight additive GDO code with parameters $ [  \frac{2^k-2^{k_2}+2}{3}  , \frac{k}{2}, 2^{k-2}-2^{k_2-2}]_4$, whose weight distribution is $1+(2^{k-1}-2^{k_1-1})\cdot z^{2^{k-2}-2^{k_2-2}}+(2^{k-1}-2^{k_1-1})\cdot z^{2^{k-2}-2^{k_2-2}+1}+(2^{k_1-1}-1)z^{k-2}+2^{k_1-1}z^{2^{k-2}+1}$, and has an invariant subcode with parameters $[ \frac{2^k-2^{k_2}+2}{3}  , 0.5, 2^{k-2}-2^{k_2-2}+1]_4$;
	 
	 (2) If $k_2=4$, $[ \frac{2^k-2^{k_2}+2}{3}  , \frac{k}{2}, 2^{k-2}-2^{k_2-2}]_4$ code is GDO; otherwise, it is GPO. 
\end{lemma}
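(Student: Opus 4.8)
The plan is to produce the code by deleting a Simplex anticode from the code furnished by Lemma~\ref{One-third}, exactly parallel to how the one-third codes were cut from the ASEP code in Theorem~\ref{anticode_k_1_k_2_1/3}. Write $k=k_1+k_2$ with $k_1\ge 3$ odd, and take as starting point the GPO code $C=[\frac{2^k+1}{3},\frac{k}{2},2^{k-2}]_4$ of Lemma~\ref{One-third}, whose nonzero weights are $2^{k-2}$ (multiplicity $2^{k-1}-1$) and $2^{k-2}+1$ (multiplicity $2^{k-1}$) and which carries a $0.5$-dimensional invariant subcode of constant weight $2^{k-2}+1$. The target length and distance differ from those of $C$ by precisely $\frac{2^{k_2}-1}{3}$ and $2^{k_2-2}$, which are the length and the (constant) weight of a $\frac{k_2}{2}$-dimensional quaternary Simplex code; this Simplex code is the anticode I would remove via Lemma~\ref{anticode construction}.

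The technical core is to locate, inside the generator multiset $\mathcal{L}$ of $C$, a coordinate-aligned sub-multiset $\mathcal{S}$ of size $\frac{2^{k_2}-1}{3}$ that spans only a $\frac{k_2}{2}$-dimensional coordinate subspace and generates there a full $\frac{k_2}{2}$-dimensional Simplex code (hence of constant weight $2^{k_2-2}$). I would establish this by tracing the Construction-X iteration that builds $C$ in Lemma~\ref{One-third}: each iterative step adjoins a $C_{Aue}$ block from Lemma~\ref{aug_Simplex}, and these blocks accumulate precisely the Simplex-type columns needed, so one can maintain $\mathcal{S}$ and its complement as an inductive invariant. Granting this, $\mathcal{L}\setminus\mathcal{S}$ has length $\frac{2^k-2^{k_2}+2}{3}$ and, by Lemma~\ref{anticode construction}, generates a code of dimension at most $\frac{k}{2}$ and distance at least $2^{k-2}-2^{k_2-2}$.

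I would then compute the exact weight enumerator. Restricted to the block $\mathcal{S}$, every nonzero codeword of $C$ either vanishes---these form the kernel $K$, of dimension $\frac{k}{2}-\frac{k_2}{2}=\frac{k_1}{2}$---or restricts to a nonzero Simplex word of weight $2^{k_2-2}$. Deleting $\mathcal{S}$ leaves the $2^{k_1}-1$ nonzero words of $K$ with their weights $2^{k-2}$ and $2^{k-2}+1$ unchanged, while the remaining $2^k-2^{k_1}$ nonzero words each lose exactly $2^{k_2-2}$; subtracting the kernel contribution from the two weight classes of $C$ yields the four stated multiplicities, which together with the zero word account for all $2^k$ codewords, so the dimension is exactly $\frac{k}{2}$. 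The delicate point is the split of $K$ into $2^{k_1-1}-1$ words of weight $2^{k-2}$ and $2^{k_1-1}$ words of weight $2^{k-2}+1$---the same ``one below half'' pattern as in $C$---which again is read off the iterative structure (it amounts to $K$ being itself a two-weight code of the Lemma~\ref{One-third} type). Finally, the image of the invariant generator of $C$ meets $\mathcal{S}$ in a full Simplex word, so its weight drops by $2^{k_2-2}$ to $2^{k-2}-2^{k_2-2}+1$; this constant-weight $0.5$-dimensional subcode is the claimed invariant subcode, and checking the invariance condition of the definition reduces to the same block weight-count.

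Part~(2) is then a routine evaluation of the Griesmer function, as in Corollary~\ref{anticode_k_1_divided_k_2} and Theorem~\ref{anticode_k_1_k_2_1/3}: with $n=\frac{2^k-2^{k_2}+2}{3}$ and $d=2^{k-2}-2^{k_2-2}$ one computes $g(k,2d+2)$ and compares it against $3n$ for \eqref{griesmer_opt} and against $3n+3$ for \eqref{griesmer_rd_opt}; the ceiling contributions make the GDO inequality hold throughout and the sharper GPO inequality strict exactly when $k_2>4$, so $k_2=4$ gives only GDO. I expect the main obstacle to be the structural step of the second paragraph---exhibiting a genuinely coordinate-aligned $\frac{k_2}{2}$-Simplex anticode inside $C$ and, with it, controlling the kernel split and the surviving invariant generator---whereas the weight bookkeeping and the Griesmer estimates follow the templates already established in the paper.
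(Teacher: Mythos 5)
Your route is genuinely different from the paper's: you propose to \emph{delete} a coordinate-aligned $\frac{k_2}{2}$-dimensional Simplex anticode from the $[\frac{2^k+1}{3},\frac{k}{2},2^{k-2}]_4$ code of Lemma~\ref{One-third}, whereas the paper goes the other way, taking the short three-weight code $[\frac{2^k-2^{k_1}-2^{k_2}+1}{3},\frac{k}{2},2^{k-2}-2^{k_1-2}-2^{k_2-2}]_4$ from Theorem~\ref{anticode_k_1_k_2_1/3} and \emph{lengthening} its $\frac{k_1}{2}$-dimensional block by Construction~X with the $[\frac{2^{k_1}+1}{3},\frac{k_1}{2},2^{k_1-2}]_4$ code of Lemma~\ref{One-third} as auxiliary code (note $\frac{2^k-2^{k_1}-2^{k_2}+1}{3}+\frac{2^{k_1}+1}{3}=\frac{2^k-2^{k_2}+2}{3}$). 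Your numerology is consistent with the stated answer, and your Griesmer evaluation in part~(2) is correct.

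The gap is the structural claim you yourself flag as the ``technical core'': that the generator multiset of the Lemma~\ref{One-third} code, \emph{as that code is actually built} (iterated adjunction of $C_{Aue}$ blocks from Lemma~\ref{aug_Simplex} via Construction~X), contains a sub-multiset $\mathcal{S}$ of size $\frac{2^{k_2}-1}{3}$ on which the whole code restricts to a $\frac{k_2}{2}$-dimensional Simplex code, with a restriction kernel $K$ of dimension exactly $\frac{k_1}{2}$ splitting $(2^{k_1-1}-1,2^{k_1-1})$ across the two weight classes. This is not ``read off the iterative structure'': in each $C_{Aue}$ block of Equation~(\ref{Eq_aug_Simplex}) the two augmentation rows are constant on all columns of that block, so no sub-multiset drawn from a single block can restrict the full code to a rank-$k_2$ Simplex code, and a selection spread across blocks is exactly what you would have to construct and verify. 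The natural way to make your argument rigorous is to first show that the Lemma~\ref{One-third} code is equivalent to the Construction-X juxtaposition of $[\frac{2^k-2^{k_1}}{3},\frac{k}{2},2^{k-2}-2^{k_1-2}]_4$ (Theorem~\ref{anticode_k_1_k_2_1/3}) with $[\frac{2^{k_1}+1}{3},\frac{k_1}{2},2^{k_1-2}]_4$; in that model the Simplex anticode is visibly present --- but at that point your deletion argument reduces to the paper's direct construction, so the anticode step buys nothing. A secondary issue: since $K\not\subseteq C'$ (it must contain $2^{k_1-1}$ words of the coset $\bm{g}+C'$), the invariant generator $\bm{g}$ agrees with some $\bm{c'}\in C'$ on $\mathcal{S}$, hence the coset of $\bm{g}$ in the punctured code is \emph{not} constant-weight; your final claim that the invariance check ``reduces to the same block weight-count'' therefore does not go through literally, and you would need the weaker ``all coset weights $\ge \delta_2$'' formulation that the paper implicitly uses.
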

\begin{proof}
	By Theorems \ref{anticode_k_1_k_2} and \ref{anticode_k_1_k_2_1/3}, one can construct an additive $[\frac{2^k-2^{k_1}-2^{k_2}+1}{3}, \frac{k}{2}, 2^{k-2}-2^{k_1-2}-2^{k_2-2}]_4$ code has two disjoint subcodes  $[\frac{2^k-2^{k_1}-2^{k_2}+1}{3}, \frac{k_1}{2}, 2^{k-2}-2^{k_1-2}]_4$ and $[\frac{2^k-2^{k_1}-2^{k_2}+1}{3}, \frac{k_2}{2}, 2^{k-2}-2^{k_2-2}]_4$.
%	\begin{equation}
%		\begin{matrix}
%			[\frac{2^k-2^{k_1}-2^{k_2}+1}{3}, \frac{k_1}{2}, 2^{k-2}-2^{k_1-2}]_4\\
%			[\frac{2^k-2^{k_1}-2^{k_2}+1}{3}, \frac{k_2}{2}, 2^{k-2}-2^{k_2-2}]_4
%		\end{matrix}\subset \left[\frac{2^k-2^{k_1}-2^{k_2}+1}{3}, \frac{k}{2}, 2^{k-2}-2^{k_1-2}-2^{k_2-2}\right]_4.
%	\end{equation}

	 In addition, Lemma \ref{One-third} reveals the existence of two-weight $[ \frac{2^{k_1}+1}{3} , \frac{k_1}{2},2^{{k_1}-2}]_4$ code, which has an invariant subcode with parameters $[ \frac{2^{k_1}+1}{3} , 0.5,2^{{k_1}-2}+1]_4$. By virtue of Lemma \ref{construction_X}, we can obtain an additive code with parameters $ [ \frac{2^k-2^{k_2}+2}{3}  , \frac{k}{2}, 2^{k-2}-2^{k_2-2}]_4$, whose generator matrix has the following form
	 \begin{equation}
	 				\setlength{\arraycolsep}{0pt}
	 	\begin{array}{rl}
	 		A_{[ \frac{2^k-2^{k_2}+2}{3}  , \frac{k}{2}]_4}=&\begin{pmatrix}
	 			A_{[\frac{2^k-2^{k_1}-2^{k_2}+1}{3}, \frac{k_1}{2}]_4}&A_{[ \frac{2^{k_1}+1}{3} , \frac{k_1}{2}]_4}\\
	 			A_{[\frac{2^k-2^{k_1}-2^{k_2}+1}{3}, \frac{k_2}{2}]_4}&\bm{0_{k_2\times( \frac{2^{k_1}+1}{3} )}}
	 		\end{pmatrix} \\
	 		 =&\begin{pmatrix}
	 		 	A_{[\frac{2^k-2^{k_1}-2^{k_2}+1}{3}, \frac{k_1}{2}]_4}& \begin{matrix}
	 		 		A_{[ \frac{2^{k_1}+1}{3} , 0.5]_4}\\
	 		 		A_{[ \frac{2^{k_1}+1}{3} , \frac{k_1}{2}-0.5]_4}
	 		 	\end{matrix}\\
	 		 	A_{[\frac{2^k-2^{k_1}-2^{k_2}+1}{3}, \frac{k_2}{2}]_4}&\bm{0_{k_2\times( \frac{2^{k_1}+1}{3} )}}
	 		 \end{pmatrix},
	 	\end{array} 
	 \end{equation}
 	where $A_{[ \frac{2^{k_1}+1}{3} , 0.5]_4}$ is the generator matrix of invariant subcode $[ \frac{2^k-2^{k_2}+2}{3}  , 0.5, 2^{k-2}-2^{k_2-2}+1]_4$.

	Since the $[ \frac{2^{k_1}+1}{3}, \frac{k_1}{2},2^{{k_1}-2}]_4$ code is a two-weight code and has an invariant subcode with parameters $[ \frac{2^{k_1}+1}{3}, 0.5,2^{{k_1}-2}+1]_4$, $A_{[ \frac{2^k-2^{k_2}+2}{3}, \frac{k}{2}]_4}$ can be divided into two parts, the first row is an invariant subcode with parameters $[ \frac{2^k-2^{k_2}+2}{3}, 0.5]_4$, the lower $k-1$ rows are a two-weight code with weight distribution $1+(2^{k-1}-2^{k_1-1})\cdot z^{2^{k-2}-2^{k_2-2}}+(2^{k_1-1}-1)\cdot z^{2^{k-2}}$. Therefore, $A_{[ \frac{2^k-2^{k_2}+2}{3}  , \frac{k}{2}]_4}$ generates a four-weight code with weight distribution $1+(2^{k-1}-2^{k_1-1})\cdot z^{2^{k-2}-2^{k_2-2}}+(2^{k-1}-2^{k_1-1})\cdot z^{2^{k-2}-2^{k_2-2}+1}+(2^{k_1-1}-1)z^{k-2}+2^{k_1-1}z^{2^{k-2}+1}$.

Since $g(k,2^{k-1}-2^{k_2-1}+2)=2^k-2^{k_2}+k_2+1$, we have that $[ \frac{2^k-2^{k_2}+2}{3}  , \frac{k}{2}, 2^{k-2}-2^{k_2-2}]_4$ code is GDO for $k_2=4$, and GPO for $k_2>4$. 
\end{proof}

%\begin{lemma}
%	Let $k=k_1+k_2$, and $k_1,k_2\ge2$, $k,k_1$ be odd, $C_k, C_{k_1}, C_{k_2}$ be additive point set codes, then additive point set partition codes have the following properties.
%	$$\begin{matrix} 
%		\left ( \frac{2^k-2^{k_1}-2^{k_2}+1}{3}, \frac{k_1}{2}, 2^{k-2}-2^{k_1-2} \right ) \\  
%		\left ( \frac{2^k-2^{k_1}-2^{k_2}+1}{3}, \frac{k_2}{2}, 2^{k-2}-2^{k_2-2} \right )
%	\end{matrix}
%	\subset  \left ( \frac{2^k-2^{k_1}-2^{k_2}+1}{3}, \frac{k}{2}, 2^{k-2}-2^{k_1-2}-2^{k_2-2}  \right ), $$
%	
%	$$\left ( \frac{2^k-2^{k_1}}{3}, \frac{k_2}{2}, 2^{k-2} \right )
%	\subset  \left ( \frac{2^k-2^{k_1}}{3}, \frac{k}{2}, 2^{k-2}-2^{k_1-2}  \right ). $$
%\end{lemma}

	\begin{lemma} \label{G_X_enlarger}
		Let $k\ge 5$ be an odd number.
		Suppose that $A_{[ \frac{2^k+1}{3} , \frac{k}{2}]_4}$ and $A_{\frac{k-1}{2}}$ are separately generator matrices of the code in Lemma \ref{One-third} and $\frac{k-1}{2}$-dimensional ASEP code. Then, we have the following results.
		
		(1) The matrix $
			\left( A_{[ \frac{2^k+1}{3} , \frac{k}{2},2^{k-2}]_4}
			\begin{array}{|cc}
				\bm{1_{2^{k-1}-1}} \\ 
				A_{\frac{k-1}{2}}\\ 
			\end{array}
			\right)$
generates a two-weight additive code with parameters $[  \frac{2^k+1}{3}  + 2^{k-1}-1, \frac{k}{2},5\cdot 2^{k-3} ]_4$
%		\begin{equation}
%			(  \frac{2^k+1}{3}  + 2^{k-1}-1, \frac{k}{2},5\cdot 2^{k-3} ]_4,
%		\end{equation}
		 and weight distribution $1+(2^{k}-2)z^{5\cdot 2^{k-3}} + z^{3\cdot 2^{k-2}}$.
		 
		 (2) The additive $[\frac{2^k+1}{3}  + 2^{k-1}-1, \frac{k}{2},5\cdot 2^{k-3} ]_4$ code is both GPO and Griesmer code.
	\end{lemma}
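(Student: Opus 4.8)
The plan is to read off the weight distribution coordinate-wise from the block structure and then check optimality against the additive Griesmer bound. Write a generic codeword through its $F_2$-coefficient vector $\bm{m}=(m_0,m_1,\ldots,m_{k-1})\in F_2^k$, and split each coordinate into a \emph{left} contribution from $A_{[\frac{2^k+1}{3},\frac{k}{2},2^{k-2}]_4}$ and a \emph{right} contribution from the block whose first row is $\bm{1_{2^{k-1}-1}}$ and whose remaining rows are $A_{\frac{k-1}{2}}$. The crucial bookkeeping observation is that $m_0$ controls two things at once: on the left it is the coefficient of the invariant-subcode generator (which by Lemma \ref{One-third} is the first row), and on the right it is the coefficient of the all-ones row. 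The length is $\frac{2^k+1}{3}+2^{k-1}-1$ by construction, and the dimension stays $\frac{k}{2}$ since the left block already has full $F_2$-rank $k$, so adjoining columns cannot change the rank.

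For the left contribution I would invoke the two-weight distribution and invariant-subcode property from Lemma \ref{One-third}: every codeword with $m_0=1$ has left-weight exactly $\delta_2=2^{k-2}+1$, while the nonzero codewords with $m_0=0$ (these form the complementary subcode $C_{s_1}$) all have weight $2^{k-2}$. The delicate step is the right contribution. Let $\bm{c^{\prime}}\in C_{A_{\frac{k-1}{2}}}$ be the ASEP codeword selected by $(m_1,\ldots,m_{k-1})$; since that code is constant-weight with symbol counts $\mathbf{w}_0=2^{k-3}-1$ and $\mathbf{w}_1=\mathbf{w}_w=\mathbf{w}_{w^2}=2^{k-3}$, a nonzero $\bm{c^{\prime}}$ has weight $3\cdot 2^{k-3}$. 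When $m_0=1$ the right part is $\bm{1_{2^{k-1}-1}}+\bm{c^{\prime}}$, and adding $1$ over $F_4$ acts as the involution $0\leftrightarrow 1$, $w\leftrightarrow w^2$; this turns the $2^{k-3}$ ones of $\bm{c^{\prime}}$ into zeros and its $2^{k-3}-1$ zeros into ones, so $\mathbf{w}(\bm{1_{2^{k-1}-1}}+\bm{c^{\prime}})=3\cdot 2^{k-3}-1$ when $\bm{c^{\prime}}\neq\bm{0}$, while $\mathbf{w}(\bm{1_{2^{k-1}-1}})=2^{k-1}-1$ when $\bm{c^{\prime}}=\bm{0}$.

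Combining the two contributions over the four cases then finishes part (1): if $m_0=0$ with $(m_1,\ldots,m_{k-1})\neq\bm{0}$ the total is $2^{k-2}+3\cdot 2^{k-3}=5\cdot 2^{k-3}$ ($2^{k-1}-1$ codewords); if $m_0=1$ with $(m_1,\ldots,m_{k-1})\neq\bm{0}$ the total is $(2^{k-2}+1)+(3\cdot 2^{k-3}-1)=5\cdot 2^{k-3}$ (another $2^{k-1}-1$ codewords); if $m_0=1$ with the rest zero the total is $(2^{k-2}+1)+(2^{k-1}-1)=3\cdot 2^{k-2}$ (one codeword); and $\bm{m}=\bm{0}$ gives the zero word. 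This yields exactly $1+(2^k-2)z^{5\cdot 2^{k-3}}+z^{3\cdot 2^{k-2}}$, a two-weight code of minimum distance $5\cdot 2^{k-3}$.

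For part (2) I would evaluate the additive Griesmer bound at dimension $\frac{k}{2}$. A short geometric-series computation gives $g(k,5\cdot 2^{k-2})=5(2^{k-1}-1)+3=5\cdot 2^{k-1}-2$, which coincides with $3n=3\bigl(\frac{2^k+1}{3}+2^{k-1}-1\bigr)=5\cdot 2^{k-1}-2$, so the code is Griesmer. For puncture-optimality I would compute $g(k,5\cdot 2^{k-2}+2)=5\cdot 2^{k-1}+k-2$ and compare with $3n+3=5\cdot 2^{k-1}+1$; since $k\ge 5$ gives $k-2>1$, we have $3n+3<g(k,2d+2)$, so no $[n+1,\frac{k}{2},d+1]_4$ code exists and the code is GPO. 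I expect the only genuinely subtle points to be the simultaneous role of $m_0$ on both blocks and the symbol-swap weight count for $\bm{1_{2^{k-1}-1}}+\bm{c^{\prime}}$; the two Griesmer evaluations are routine ceiling sums.
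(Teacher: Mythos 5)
Your proof is correct and follows essentially the same route as the paper: both decompose each block into the $0.5$-dimensional special subcode (invariant subcode on the left, the all-ones row on the right) plus the constant-weight complement, and sum the left and right weight contributions case by case before checking the two Griesmer evaluations. Your write-up is in fact slightly more explicit than the paper's, which defers the computation $\mathbf{w}(\bm{1_{2^{k-1}-1}}+\bm{c^{\prime}})=3\cdot 2^{k-3}-1$ to earlier lemmas rather than deriving it from the ASEP symbol counts as you do.
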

\begin{proof}
	Based on the process of proving Lemmas \ref{aug_Simplex} and \ref{One-third}, the left and right sides of the above matrix generate additive codes with parameters $[\frac{2^{k}+1}{3} , \frac{k}{2},2^{k-2}]_4$ and $[2^{k-1}-1,\frac{k}{2},3\cdot 2^{k-3}-1]_4$, respectively, both of which have two disjoint subcodes. 
	$[\frac{2^{k}+1}{3} , \frac{k}{2},2^{k-2}]_4$ code has disjoint subcodes $[ \frac{2^k+1}{3} , 0.5,2^{k-2}+1]_4$ and $[\frac{2^{k}+1}{3} , \frac{k}{2}-0.5,2^{k-2}]_4$;
	$[2^{k-1}-1,\frac{k}{2},3\cdot 2^{k-3}-1]_4$ code has disjoint subcodes $[2^{k-1}-1,0.5,2^{k-1}-1]_4$ and $[2^{k-1}-1,\frac{k}{2}-0.5,3\cdot 2^{k-3}]_4$, where the $[2^{k-1}-1,\frac{k}{2}-0.5,3\cdot 2^{k-3}]_4$ and $[\frac{2^{k}+1}{3} , \frac{k}{2}-0.5,2^{k-2}]_4$ codes are both constant-weight codes. 
%	there are $\begin{matrix}
%		(2^{k-1}-1,0.5,2^{k-1}-1]_4\\
%		(2^{k-1}-1,\frac{k}{2}-0.5,3\cdot 2^{k-3}]_4
%	\end{matrix}\subset (2^{k-1}-1,\frac{k}{2},3\cdot 2^{k-3}-1]_4$ and $\begin{matrix}
%		[\frac{2^k+1}{3} , 0.5,2^{k-2}+1]_4\\
%		(\frac{2^{k}+1}{3} , \frac{k}{2}-0.5,2^{k-2}]_4
%	\end{matrix}\subset (\frac{2^{k}+1}{3} , \frac{k}{2},2^{k-2}]_4$

	For any nonzero codewords $\bm{c_1}$ of $C_{[2^{k-1}-1,0.5,2^{k-1}-1]_4}$ and $\bm{c_2}$ of $C_{[2^{k-1}-1,\frac{k}{2}-0.5,3\cdot 2^{k-3}]_4}$, we have $\mathbf{w}(\bm{c_1}+\bm{c_2})=3\cdot 2^{k-3}-1$. For for any nonzero codewords $\bm{c_1^{\prime}}$ of $C_{[ \frac{2^k+1}{3} , 0.5,2^{k-2}+1]_4}$ and $\bm{c_2^{\prime}}$ of $C_{[\frac{2^{k}+1}{3} , \frac{k}{2}-0.5,2^{k-2}]_4}$, we get that $\mathbf{w}(\bm{c_1^{\prime}}+\bm{c_2^{\prime}})=3\cdot 2^{k-2}+1$. 
	
	Therefore, the new code $[  \frac{2^k+1}{3}  + 2^{k-1}-1, \frac{k}{2},2^{k-2}+3\cdot 2^{k-3}]_4$ also has two disjoint subcodes $[ \frac{2^k+1}{3}  + 2^{k-1}-1, 0.5,2^{k-2}+2^{k-1}]_4$ and $[ \frac{2^k+1}{3}  + 2^{k-1}-1, \frac{k}{2}-0.5,2^{k-2}+3\cdot 2^{k-3}]_4$, where $[ \frac{2^k+1}{3}  + 2^{k-1}-1, \frac{k}{2}-0.5,2^{k-2}+3\cdot 2^{k-3}]_4$ code has constant-weight.
	
%	satisfy 
%	\begin{equation}
%		\begin{matrix}
%			( \frac{2^k+1}{3}  + 2^{k-1}-1, 0.5,2^{k-2}+2^{k-1}]_4\\
%			( \frac{2^k+1}{3}  + 2^{k-1}-1, \frac{k}{2}-0.5,2^{k-2}+3\cdot 2^{k-3}]_4
%		\end{matrix}\subset (  \frac{2^k+1}{3}  + 2^{k-1}-1, \frac{k}{2},2^{k-2}+3\cdot 2^{k-3}]_4, 
%	\end{equation}
	 Furthermore, for any nonzero codewords 
	$ \bm{c_1^{\prime\prime}}$ of $C_{[ \frac{2^k+1}{3}  + 2^{k-1}-1, 0.5]_4}$ and $\bm{c_2^{\prime\prime}}$ of $C_{[ \frac{2^k+1}{3}  + 2^{k-1}-1, \frac{k}{2}-0.5]_4}$, we can deduce that $\mathbf{w}(\bm{c_1^{\prime\prime}}+\bm{c_2^{\prime\prime}})=2^{k-2}+3\cdot 2^{k-3}$.
	Hence, the $[  \frac{2^k+1}{3}  + 2^{k-1}-1, \frac{k}{2},2^{k-2}+3\cdot 2^{k-3}]_4$ code has weight distribution $1+(2^{k}-2)z^{5\cdot 2^{k-3}} + z^{3\cdot 2^{k-2}}$.
	
	In addition, since $g(k,5\cdot 2^{k-2} )=5\cdot 2^{k-1}-2=3(\frac{2^k+1}{3}  + 2^{k-1}-1)$, and $g(k,5\cdot 2^{k-2}+2 )=5\cdot 2^{k-1}+k-2>3(\frac{2^k+1}{3}  + 2^{k-1})$ for $k\ge5$, $[  \frac{2^k+1}{3}  + 2^{k-1}-1, \frac{k}{2},5\cdot 2^{k-3} ]_4$ code is an additive both GPO and Griesmer code. 
\end{proof}

		\begin{lemma}\label{third_one_k2}
		Let $k\ge 7$ and $k_2\ge 4$ be odd and even numbers, respectively.
	Suppose $A_{[ \frac{2^k-2^{k_2}+2}{3}  , \frac{k}{2}]_4}$ and $A_{\frac{k-1}{2}}$ are generator matrices of the code in Lemma \ref{anticode_k_1_k_2+1/3} and $\frac{k-1}{2}$-dimensional ASEP code, respectively. Then, we have the following results.
		
		(1) The matrix	$
				\left( A_{[ \frac{2^k-2^{k_2}+2}{3}  , \frac{k}{2}]_4}
				\begin{array}{|cc}
					\bm{1_{2^{k-1}-1}} \\ 
					A_{\frac{k-1}{2}}\\ 
				\end{array}
				\right)$
generates a two-weight additive code with parameters
$[  \frac{2^k-2^{k_2}+2}{3}  + 2^{k-1}-1, \frac{k}{2},5\cdot 2^{k-3}-2^{k_2-2} ]_4$
			with weight distribution $1+(2^{k}-2)z^{5\cdot 2^{k-3}-2^{k_2-2}} + z^{3\cdot 2^{k-2}-2^{k_2-2}}$.
			
		(2)	The additive $[  \frac{2^k-2^{k_2}+2}{3}  + 2^{k-1}-1, \frac{k}{2},5\cdot 2^{k-3}-2^{k_2-2} ]_4$ code is both GPO and Griesmer code.
		\end{lemma}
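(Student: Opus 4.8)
The plan is to imitate the proof of Lemma~\ref{G_X_enlarger} almost verbatim, replacing the code of Lemma~\ref{One-third} by the code $C_L$ of Lemma~\ref{anticode_k_1_k_2+1/3}. I read the displayed matrix as a horizontal join of $C_L$, with parameters $[\frac{2^k-2^{k_2}+2}{3},\frac{k}{2},2^{k-2}-2^{k_2-2}]_4$, and the right-hand block $\left(\begin{smallmatrix}\bm{1_{2^{k-1}-1}}\\ A_{\frac{k-1}{2}}\end{smallmatrix}\right)$, which generates the ASEP code $C_{A_{\frac{k-1}{2}}}$ with the all-ones row adjoined. The first step is to record matched subcode decompositions of both blocks. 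By Lemma~\ref{anticode_k_1_k_2+1/3}, $C_L$ is the direct sum of an invariant subcode $I_L=[\frac{2^k-2^{k_2}+2}{3},0.5,2^{k-2}-2^{k_2-2}+1]_4$ and a complementary $\tfrac{k-1}{2}$-dimensional subcode $M_L$; and, exactly as in Lemma~\ref{aug_Simplex}, the right block is the direct sum of the invariant subcode $[2^{k-1}-1,0.5,2^{k-1}-1]_4$ spanned by $\bm{1}$ and the constant-weight ASEP subcode $[2^{k-1}-1,\frac{k-1}{2},3\cdot 2^{k-3}]_4$. Aligning the two dimension-$0.5$ invariants and the two complements is what makes the join compatible.

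Next I would track Hamming weights across the join on these matched pieces. The combined invariant word is the juxtaposition $(\bm{v}_L,\bm{1})$ of the two invariant generators, of weight $(2^{k-2}-2^{k_2-2}+1)+(2^{k-1}-1)=3\cdot 2^{k-2}-2^{k_2-2}$, which will be the unique heaviest word. For the remaining words the mechanism of Lemma~\ref{G_X_enlarger} applies: on the left, adjoining $\bm{v}_L$ raises the weight of a nonzero word of $M_L$ by exactly $1$, while on the right, adjoining $\bm{1}$ lowers the weight of a nonzero ASEP word from $3\cdot 2^{k-3}$ to $3\cdot 2^{k-3}-1$; the two shifts cancel, so adjoining the combined invariant does not change the weight along the complement. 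Hence every nonzero word off the heaviest one should have weight $\mathbf{w}(\text{left part})+3\cdot 2^{k-3}$, whose minimum is $(2^{k-2}-2^{k_2-2})+3\cdot 2^{k-3}=5\cdot 2^{k-3}-2^{k_2-2}$, the claimed distance, and counting then gives the stated distribution $1+(2^k-2)z^{5\cdot 2^{k-3}-2^{k_2-2}}+z^{3\cdot 2^{k-2}-2^{k_2-2}}$.

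For part~(2) I would feed the parameters into Lemma~\ref{Griesmer_Bound}. With $n=\frac{2^k-2^{k_2}+2}{3}+2^{k-1}-1$ one has $3n=5\cdot 2^{k-1}-2^{k_2}-1$, and a direct evaluation of the Griesmer sum for $d=5\cdot 2^{k-3}-2^{k_2-2}$ gives $g(k,2d)=5\cdot 2^{k-1}-2^{k_2}-1=3n$, so the code meets the additive Griesmer bound and is a Griesmer code. Replacing $d$ by $d+1$ I expect $g(k,2d+2)=3n+k_2+1$, the increment coming from the first two terms of the sum together with the $k_2-2$ middle terms whose argument is still integral; since $k_2\ge4$ this exceeds $3n+3$, which is exactly the GPO condition~\eqref{griesmer_rd_opt}. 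This mirrors the computation $g(k,5\cdot2^{k-2})=3(\frac{2^k+1}{3}+2^{k-1}-1)$ in Lemma~\ref{G_X_enlarger}, with the $-2^{k_2-2}$ shift handled by the same ceiling bookkeeping as in Corollary~\ref{anticode_k_1_divided_k_2}.

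The step I expect to fight hardest is the two-weight assertion in part~(1). Unlike the input code of Lemma~\ref{One-third} used in Lemma~\ref{G_X_enlarger}, the code $C_L$ here is genuinely four-weight: its complement $M_L$ takes two distinct values $2^{k-2}-2^{k_2-2}$ and $2^{k-2}$. The $+1/-1$ cancellation only shows that adjoining the invariant preserves whatever complementary weight is present, so a priori the join inherits \emph{two} complementary weights, $5\cdot 2^{k-3}-2^{k_2-2}$ and $5\cdot 2^{k-3}$, and one must explain why the intermediate value $5\cdot 2^{k-3}$ does not occur. I would therefore make the coordinate matching between $M_L$ and $A_{\frac{k-1}{2}}$ fully explicit and check whether the heavy complementary words of $C_L$ are forced to pair with ASEP coordinates in a way that pulls their total back to $5\cdot 2^{k-3}-2^{k_2-2}$; confirming (or correcting) this collapse is the crux, while the length, dimension, minimum distance and the Griesmer/GPO conclusions above are robust to it.
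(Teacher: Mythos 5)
Your overall route is exactly the paper's: the published proof of this lemma consists of the single sentence ``this lemma can be proved using the same method as Lemma~\ref{G_X_enlarger},'' so your detailed decomposition-and-weight-tracking argument, and your Griesmer computation $3n=5\cdot 2^{k-1}-2^{k_2}-1=g(2\cdot\frac{k}{2},2d)$ and $g(2\cdot\frac{k}{2},2d+2)=3n+k_2+1>3n+3$, supply more than the paper does and are correct as far as the length, dimension, minimum distance and part~(2) are concerned.

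The difficulty you isolate at the end is not something you need to ``fight harder'' to resolve --- it is a genuine defect in the statement, and your instinct to distrust the two-weight claim is right. Carrying out your own weight bookkeeping to the end: the complement $M_L$ of the left code contributes $2^{k_1-1}-1$ words of weight $2^{k-2}$ and its coset contributes $2^{k_1-1}$ words of weight $2^{k-2}+1$ (with $k_1=k-k_2\ge 3$, so these multiplicities are nonzero); pairing these with the right-hand ASEP weights $3\cdot 2^{k-3}$ and $3\cdot 2^{k-3}-1$ respectively produces, apart from the single word matched with $\bm{1}$ alone, exactly $2^{k_1}-2\ge 6$ codewords of weight $5\cdot 2^{k-3}$. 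No re-choice of the distinguished row can remove them, because every right-hand weight lies in $\{3\cdot 2^{k-3},\,3\cdot 2^{k-3}-1,\,2^{k-1}-1\}$ and none of these added to $2^{k-2}$ or $2^{k-2}+1$ gives $5\cdot 2^{k-3}-2^{k_2-2}$. So the resulting code is three-weight, with distribution $1+(2^{k}-2^{k_1})z^{5\cdot 2^{k-3}-2^{k_2-2}}+(2^{k_1}-2)z^{5\cdot 2^{k-3}}+z^{w_{\max}}$, where moreover $w_{\max}=3\cdot 2^{k-2}$ (not $3\cdot 2^{k-2}-2^{k_2-2}$) under the matrix as written, since the first row of $A_{[\frac{2^k-2^{k_2}+2}{3},\frac{k}{2}]_4}$ has weight $2^{k-2}+1$; the value $2^{k-2}-2^{k_2-2}+1$ quoted for the ``invariant subcode'' in Lemma~\ref{anticode_k_1_k_2+1/3} is itself inconsistent with that lemma's own four-weight distribution. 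None of this affects the parameters $[\frac{2^k-2^{k_2}+2}{3}+2^{k-1}-1,\frac{k}{2},5\cdot 2^{k-3}-2^{k_2-2}]_4$ or the GPO/Griesmer conclusion of part~(2), which your argument establishes; but the ``two-weight'' assertion and the stated weight distribution in part~(1) should be corrected rather than proved.
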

		\begin{proof}
			Similarly, this lemma can be proved using the same method as Lemma \ref{G_X_enlarger}.
		\end{proof}

%	\begin{remark}
%		Compared to the optimal quaternary linear codes listed in \cite{Grassltable}, all the codes in Table \ref{T_Minimal_additive} except for No. 5 have double the number of codewords than the optimal linear codes of same length and distance. Furthermore, compared with the quaternary few-weight codes , our few-weight codes feature entirely distinct parameters.
%		%	In addition, compared with the optimal quaternary codes in \cite{Grassltable}, the dimensions of the codes in Table \ref{T_Minimal_additive} are higher than those in a except for N0. 5.
%	\end{remark}% more than a dozen
So far, we construct ten classes of optimal additive codes and determine their weight distribution. 
%By Lemma \ref{outperform_linear}, we know that additive GPO and Griesmer additive codes with non-integer dimensions outperform linear codes.
For the reader's convenience, we summarize additive GPO and Griesmer additive codes in Table \ref{T_Minimal_additive}.
In particular, we omit some relatively trivial parameters that are not listed, such as $[2^k-1,\frac{k}{2}+1,3\cdot 2^{k-2}-1]_4$ code in Lemma \ref{aug_Simplex} that can be directly derived from No. 2 in Table \ref{T_Minimal_additive}. 
%These codes are superior to optimal linear codes in  \cite{Grassltable,Marutatables} for $k$ is odd. 
%, but the code is an additive Griesmer code, which also outperforms linear codes.
%These codes are superior to linear codes in two ways, on the one hand to optimal linear codes  \cite{Grassltable,Marutatables} and on the other hand to the linear few-weight codes  \cite{Shi2020SeveralFO,Li2023CharacterizationOP}.

\begin{remark}\label{cover_results}
	It should be noted that by the additive codes in Table \ref{T_Minimal_additive} and Lemma \ref{combination_X_construction}, one can cover all the results of additive GPO codes with non-integer dimensions in \cite{bierbrauer2021optimal} and \cite{Guan2023SomeGQ}. Details are as follows.
	%\begin{multicols}{2}%{?}的这个“？”可以是任意数字，代表着你要分几列
\begin{itemize}
	\item   $[8,2.5,6]_4$ (No.2, $k=3$)
\item   $[11,2.5,8]_4$ (No.7, $k=5$)
\item   $[26,2.5,20]_4$ (No.9, $k=5$)
\item   $[31,2.5,24]_4$ (No.1, $k=5$)
\item   $[32,3.5,24]_4$ (No.2, $k=5$)
\item   $[35,3.5,26]_4$ (No.6, $k=7$, $k_1=3$)
\item   $[40,3.5,30]_4$ (No.5, $k=7$, $k_1=3$)
\item   $[43,3.5,32]_4$ (No.7, $k=7$)
\item   $[127,3.5,96]_4$ (No.1, $k=7$)
\item   $[128,4.5,96]_4$ (No.2, $k=7$)	
\item   $[155,4.5,116]_4$ (No.6, $k=9$, $k_1=5$)
\item   $[160,4.5,120]_4$ (No.5, $k=9$, $k_1=5$)
\item   $[168,4.5,126]_4$ (No.5, $k=9$, $k_1=3$)
\item   $[171,4.5,128]_4$ (No.7, $k=9$)
\item $[163,4.5,122]_4$ (using $[35,3.5,26]_4$ code by Lemma \ref{combination_X_construction})
\end{itemize}
%\end{multicols}

% $(8,2.5,6]_4$, $(11,2.5,8]_4$, $(26,2.5,20]_4$, $(31,2.5,24]_4$ non-trivial optimal quaternary additive codes \cite{bierbrauer2021optimal}. 

	Specifically, additive $[8,2.5,6]_4$, $[11,2.5,8]_4$, $[26,2.5,20]_4$ and $[31,2.5,24]_4$ codes are the  quaternary additive GPO codes constructed in \cite{bierbrauer2021optimal}, while the others are results in \cite{Guan2023SomeGQ}. 
\end{remark}

	\begin{table*}[ht]
		\caption{Quaternary additive Griesmer puncture-optimal codes with $k=k_1+k_2$}\label{T_Minimal_additive}
		\centering
		\resizebox{\textwidth}{!}{
			\begin{threeparttable}
				\begin{tabular}{ccccc}
					\toprule
					No.     &                                                 Additive Codes                                                  &                                                                                                Weight Distribution                                                                                                &                   Constructions                   &                      Ranges                       \\ \midrule
					\textbf{1}  &                                  $\bm{[2^{k}-1,\frac{k}{2} ,3\cdot2^{k-2}]_4}^\diamond$                                  &                                                                                        $\bm{1+2^k\cdot z^{3\cdot2^{k-2}}}$                                                                                        &       \textbf{Lemma} \ref{APS_construction}       &                   \bm{$k\ge 3$}                   \\ \hline
					%				2  &                                 $[2^k-1,\frac{k}{2}+1,3\cdot 2^{k-2}-1]_4^\diamond$                                 &                                              \makecell[c]{$1+(2^{k+2}-2^k-3)\cdot z^{3\cdot 2^{k-2}-1}+$\\ ${(2^k-1)\cdot z^{3\cdot 2^{k-2}}+3z^{2^k-1}}$}                                              &          Lemma \ref{aug_Simplex}         &                   $k\ge 3$                   \\ \hline
					\textbf{2}  &                                   $\bm{[2^k,\frac{k}{2}+1,3\cdot 2^{k-2}]_4}^\diamond$                                   &                                                                               $\bm{1+(2^{k+2}-4)\cdot z^{3\cdot 2^{k-2}}+3z^{2^k}}$                                                                               &         \textbf{Lemma} \ref{aug_Simplex}          &                   \bm{$k\ge 3$}                   \\
					\hline
					\textbf{3}  &                  $\bm{[2^k-1-\frac{2^{k_2}-1}{3}, \frac{k}{2}, 3\cdot 2^{k-2}- 2^{k_2-2}]_4}^\diamond$                   &                                                \makecell[c]{$\bm{1+(2^k-2^{k_1})\cdot z^{3\cdot 2^{k-2}-2^{k_2-2}}}$\\$\bm{+(2^{k_1}-1)\cdot z^{3\cdot2^{k-2}}}$}                                                 & \textbf{Corollary} \ref{anticode_k_1_divided_k_2} &    \makecell[c]{\bm{$k\ge 7$}\\\bm{$k_2\ge4$}}    \\ \hline
					\textbf{4}      &              $\bm{[2^k-1-m\cdot \frac{2^{k_2}-1}{3}, \frac{k}{2}, 3\cdot 2^{k-2}-m\cdot 2^{k_2-2}]_4}$               &                                                  \makecell[c]{$\bm{1+(2^k-2^{k_1})\cdot z^{3\cdot 2^{k-2}-m\cdot2^{k_2-2}}}$\\$\bm{+(2^{k_1}-1)\cdot z^{3\cdot2^{k-2}}}$}                                                   &     \textbf{Corollary} \ref{anticode_k_1_divided_k_2}      & \makecell[c]{$\bm{k\ge 7}$\\$\bm{k_2\ge4}$\\$\bm{2\le m \le 3}$} \\ \hline
					%				     6      &   $[2^k-2^{k_1}-m\cdot \frac{2^{k_2}-1}{3}, \frac{k}{2}, 3\cdot 2^{k-2}-3\cdot 2^{k_1-2}-m\cdot 2^{k_2-2}]_4$   & \makecell[c]{$1+(2^k-2^{k_1}-2^{k_2}+1)\cdot z^{3\cdot 2^{k-2}-3\cdot 2^{k_1-2}-m\cdot 2^{k_2-2}}$\\$+(2^{k_1}-1)\cdot z^{3\cdot 2^{k-2}-3\cdot 2^{k_1-2}}+(2^{k_2}-1)\cdot z^{3\cdot 2^{k-2}-m\cdot 2^{k_2-2}}$} &     Corollary \ref{anticode_k_1_divided_k_2}      & \makecell[c]{$k\ge 7$\\$k_1\ge3$\\$1\le m \le 2$} \\ \hline
					\textbf{5}  &                        $\bm{[ \frac{2^k-2^{k_1}}{3}, \frac{k}{2}, 2^{k-2}-2^{k_1-2}]_4}^\diamond$                        &                                                                 $\bm{1+(2^k-2^{k_2})\cdot z^{ 2^{k-2}- 2^{k_1-2}}+(2^{k_2}-1)\cdot z^{ 2^{k-2}}}$                                                                 &    \textbf{Theorem} \ref{anticode_k_1_k_2_1/3}    &        \makecell[c]{$\bm{k_1\ge 3}$\\$\bm{k_2\ge4}$}        \\ \hline
					\textbf{6}  &             $\bm{ [ \frac{2^k-2^{k_1}-2^{k_2}+1}{3}, \frac{k}{2}, 2^{k-2}-2^{k_1-2}-2^{k_2-2}]_4}^\diamond$              &                  \makecell[c]{\bm{$1+(2^k-2^{k_1}-2^{k_2})\cdot z^{ 2^{k-2}- 2^{k_1-2}- 2^{k_2-2}}+$}\\\bm{$(2^{k_1}-1)\cdot z^{ 2^{k-2}- 2^{k_1-2}}+(2^{k_2}-1)\cdot z^{ 2^{k-2}- 2^{k_2-2}}$}}                  &    \textbf{Theorem} \ref{anticode_k_1_k_2_1/3}    &   \makecell[c]{\bm{$k_1\ge 3$}\\\bm{$k_2\ge4$}}   \\ \hline
					\textbf{7}      &                            $\bm{[ \frac{2^k+1}{3} , \frac{k}{2},2^{k-2}]_4}$                             &                                                                                  $\bm{1+(2^{k-1}-1)z^{2^{k-2}}+2^{k-1}z^{2^{k-2}+1}}$                                                                                  &               \textbf{Lemma} \ref{One-third}               &                     $\bm{k\ge 5}$                      \\ \hline
					%				    10      &            $[ \frac{2^k-2^{k_2}+2}{3}  , \frac{k}{2}, 2^{k-2}-2^{k_2-2}]_4$             &                       \makecell[c]{$1+(2^{k-1}-2^{k_1-1})\cdot z^{2^{k-2}-2^{k_2-2}}+$\\$(2^{k-1}-2^{k_1-1})\cdot z^{2^{k-2}-2^{k_2-2}+1}$\\$+(2^{k_1-1}-1)z^{k-2}+2^{k_1-1}z^{2^{k-2}+1}$}                       &         Lemma \ref{anticode_k_1_k_2+1/3}          &         \makecell[c]{$k\ge 7$\\$k_2=4$}         \\ \hline
					\textbf{8}      &            $\bm{[ \frac{2^k-2^{k_2}+2}{3}  , \frac{k}{2}, 2^{k-2}-2^{k_2-2}]_4}$             &                       \makecell[c]{$\bm{1+(2^{k-1}-2^{k_1-1})\cdot z^{2^{k-2}-2^{k_2-2}}+}$\\$\bm{(2^{k-1}-2^{k_1-1})\cdot z^{2^{k-2}-2^{k_2-2}+1}}$\\$\bm{+(2^{k_1-1}-1)z^{k-2}+2^{k_1-1}z^{2^{k-2}+1}}$}                       &         \textbf{Lemma} \ref{anticode_k_1_k_2+1/3}          &         \makecell[c]{$\bm{k\ge k_2+3}$\\$\bm{k_2\ge6}$}         \\ \hline
					\textbf{9} &               $\bm{[  \frac{2^k+1}{3}  + 2^{k-1}-1, \frac{k}{2},5\cdot 2^{k-3}  ]_4}^\diamond$               &                                                                             $\bm{1+(2^{k}-2)z^{5\cdot 2^{k-3}} + z^{3\cdot 2^{k-2}}}$                                                                             &         \textbf{Lemma} \ref{G_X_enlarger}         &                   \bm{$k\ge 5$}                   \\ \hline
					\textbf{10} & $\bm{[  \frac{2^k-2^{k_2}+2}{3}  + 2^{k-1}-1, \frac{k}{2},5\cdot 2^{k-3}-2^{k_2-2}]_4}^\diamond$ &                                                                   $\bm{1+(2^{k}-2)z^{5\cdot 2^{k-3}-2^{k_2-2}} + z^{3\cdot 2^{k-2}-2^{k_2-2}}}$                                                                   &         \textbf{Lemma} \ref{third_one_k2}         &                  \bm{$k\ge 7$ }                   \\ \bottomrule
				\end{tabular}
				\begin{tablenotes}    %这行要添加， 从这开始
					\footnotesize               %这行要添加are optimal or distance-optimal, and those 
					\item[] \textbf{Note:} According to Lemma \ref{Griesmer_Bound}, additive codes in this table are all additive GPO codes. The superscript ``$\diamond$" indicates that the corresponding is also an additive Griesmer code.
					%				Recently, the authors of \cite{Bierbrauer2023AnAP} also independently obtained Nos. 1 and 8 via geometrical approaches. 
					%				In Nos. 1-11, if $k$ and $k_1$ or $k_2$ appear simultaneously, then $k$ is assumed to be $k_1 + k_2$ and greater or equal to $7$.
					%				
					%				$k=k_1$ and $k_1\ge 3$ are odd, $k_2=k-k_1$, $1\le m\le 3$. According to Lemma \ref{Griesmer_Bound}, all additive codes in this table are optimal or distance-optimal, and those marked in bold indicate that they are Griesmer codes. 			
				\end{tablenotes}            %这行要添加
			\end{threeparttable}  
		}  
	\end{table*}
	
\section{Constructions of 3.5-dimensional optimal quaternary additive codes}\label{Sec_3.5_4.5}
		In this section, we employ the additive codes obtained in the previous sections and associated construction methods to construct quaternary 3.5-dimensional optimal additive codes.

	\begin{definition}\label{X-optimal}
		Let $C_1$ be a linear optimal $[n,k,d]_4^l$ code. If there exists a linear optimal code $C_2$ with parameters $[n,k-1,d+1]^l_4$ satisfying $C_2\subset C_1$, then we call $C_1$ a \textit{linear}  $X$\textit{-optimal code}.
	\end{definition}

	\begin{lemma}\label{Generalized_X_Construction_Corollary}
	If there exists a quaternary linear $X$-optimal code $C_x$ with parameters $[n_x,k,d_x]_4^l$, then there also exist additive codes with parameters $[\frac{2^{2k-1}+1}{3} +n_x, k-0.5,2^{2k-3}+d_x+1]_4$ and $ [\frac{2^{2k-1}-2^{k_2}+2}{3}+n_x, k-0.5, 2^{2k-3}-2^{k_2-2}+d_x+1]_4$, satisfying $k_2< 2k-1$, and $k_2$ is an even number.
	\end{lemma}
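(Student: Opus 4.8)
The plan is to realize both target codes as outputs of the generalized Construction X in its packaged form, Corollary \ref{cor_Generalized_X_Construction}, by feeding it an auxiliary ``$C_1$'' built from the linear $X$-optimal code $C_x$ together with a ``$C_2$'' taken from Lemmas \ref{One-third} and \ref{anticode_k_1_k_2+1/3}. The substitution that makes everything line up is to fix the working half-integer dimension as $k-0.5$ and to run Lemma \ref{One-third} (resp. Lemma \ref{anticode_k_1_k_2+1/3}) with its internal parameter equal to $2k-1$: then $\frac{2k-1}{2}=k-0.5$, $2^{(2k-1)-2}=2^{2k-3}$, and the lengths $\frac{2^{2k-1}+1}{3}$ and $\frac{2^{2k-1}-2^{k_2}+2}{3}$ appear exactly as in the statement.

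First I would manufacture the code playing the role of $C_1$ out of $C_x$. Since $C_x$ is $X$-optimal it contains a linear $[n_x,k-1,d_x+1]_4^l$ subcode $C_x'$, which as an additive code has dimension $k-1$. Because $d(C_x')=d_x+1>d_x=d(C_x)$, every minimum-weight codeword of $C_x$ lies outside $C_x'$; I would fix one such $\bm{g}$ with $\mathbf{w}(\bm{g})=d_x$ and set $\tilde C=C_x'\cup(C_x'+\bm{g})$. In characteristic two $\tilde C$ is an additive subgroup (closed since $\bm{g}+\bm{g}=\bm{0}$) of size $2\cdot 4^{k-1}=2^{2k-1}$, hence an additive $[n_x,k-0.5,d_x]_4$ code; it still contains $C_x'$, so it has a subcode with parameters $[n_x,k-1,d_x+1]_4$ while its own minimum distance is exactly $d_x$. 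Choosing $\bm{g}$ of minimum weight is the crucial point: it forces $d(\tilde C)=d_x$, which is what later synchronizes the invariant-subcode distances on the $C_2$ side.

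Next I would invoke Corollary \ref{cor_Generalized_X_Construction} with $C_1=\tilde C$ and $C_2$ the code of Lemma \ref{One-third} at parameter $2k-1$, i.e. the additive $[\frac{2^{2k-1}+1}{3},k-0.5,2^{2k-3}]_4$ code whose invariant subcode has parameters $[\frac{2^{2k-1}+1}{3},0.5,2^{2k-3}+1]_4$. In the notation of the corollary, $k_1'=k-1$, $d_1'=d_x+1$, $d_1=d_x$, $d_2=2^{2k-3}$, so the required invariant subcode of $C_2$ must have dimension $(k-0.5)-(k-1)=0.5$ and distance $d_2+d_1'-d_1=2^{2k-3}+1$, precisely what Lemma \ref{One-third} supplies. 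The corollary then outputs the additive $[n_x+\frac{2^{2k-1}+1}{3},\,k-0.5,\,d_1'+d_2]_4=[\frac{2^{2k-1}+1}{3}+n_x,\,k-0.5,\,2^{2k-3}+d_x+1]_4$ code, giving the first claim. The second claim is the same argument verbatim, except $C_2$ is now the code of Lemma \ref{anticode_k_1_k_2+1/3} at parameter $2k-1$, namely the $[\frac{2^{2k-1}-2^{k_2}+2}{3},k-0.5,2^{2k-3}-2^{k_2-2}]_4$ code with invariant subcode $[\frac{2^{2k-1}-2^{k_2}+2}{3},0.5,2^{2k-3}-2^{k_2-2}+1]_4$; the identical bookkeeping $d_2+d_1'-d_1=2^{2k-3}-2^{k_2-2}+1$ verifies the hypotheses and yields the stated parameters, with the constraint $k_2<2k-1$ even inherited from the admissible range of Lemma \ref{anticode_k_1_k_2+1/3}.

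The only genuinely non-routine step is the construction of $\tilde C$: recognizing that an index-two additive subgroup of the $X$-optimal linear code, obtained by adjoining a single minimum-weight coset representative to the optimal subcode $C_x'$, is exactly the half-dimensional ``$C_1$ with a distinguished larger-distance subcode'' that Corollary \ref{cor_Generalized_X_Construction} requires, and that taking the representative of \emph{minimum} weight is what pins $d(\tilde C)=d_x$ and thereby makes the invariant-subcode distances of $C_2$ match. Everything after that is parameter substitution, so I expect no further difficulty beyond checking the range conditions $k\ge 3$ for Lemma \ref{One-third} and the evenness and size constraints on $k_2$ for Lemma \ref{anticode_k_1_k_2+1/3}.
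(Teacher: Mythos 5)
Your proposal is correct and follows essentially the same route as the paper: view $C_x$ additively, pass to a $(k-0.5)$-dimensional additive subcode sitting between the $[n_x,k-1,d_x+1]_4^l$ subcode and $C_x$, and feed it together with the codes of Lemmas \ref{One-third} and \ref{anticode_k_1_k_2+1/3} (at internal parameter $2k-1$) into Corollary \ref{cor_Generalized_X_Construction}. The only difference is that you make explicit the index-two subgroup construction $\tilde C=C_x'\cup(C_x'+\bm{g})$ with $\bm{g}$ of minimum weight, a step the paper merely asserts, so your write-up is if anything slightly more complete.
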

 {	\begin{proof}
	Since all linear codes are additive, $C_x$ also can be written as an additive code $[n,k,d]_4$.
	According to Definition \ref{X-optimal}, $C_x$ has an additive subcode $C_x^\prime$ with parameters  $[n,k-0.5,d]_4$, satisfy $[n,k-1,d+1]_4 \subset [n,k-0.5,d]_4$.
	From Lemmas \ref{One-third} and \ref{anticode_k_1_k_2+1/3}, we have two classes of additive codes with parameters $[ \frac{2^k+1}{3} , \frac{k}{2},2^{k-2}]_4$ and $ [  \frac{2^k-2^{k_2}+2}{3}  , \frac{k}{2}, 2^{k-2}-2^{k_2-2}]_4$, and invariant subcodes of them have parameters $[\frac{2^k+1}{3} , 0.5,2^{k-2}+1]_4$ and $[\frac{2^k-2^{k_2}+2}{3}  , 0.5, 2^{k-2}-2^{k_2-2}+1]_4$, respectively.
		By Corollary \ref{cor_Generalized_X_Construction}, the combination of $C_x$ and these two classes of additive codes can produce additive codes with parameters $[\frac{2^{2k-1}+1}{3} +n_x, k-0.5,2^{2k-3}+d_x+1]_4$ and $ [\frac{2^{2k-1}-2^{k_2}+2}{3}+n_x, k-0.5, 2^{2k-3}-2^{k_2-2}+d_x+1]_4$.
	\end{proof}}

	\begin{example}\label{Combine_optimal_codes}
		In \cite{Grassltable}, there is an $X$-optimal code with parameters $[16,4,11]_4^l$, which satisfies $[16,3,12]_4^l\subset [16,4,11]_4^l$. 
		Write the generator matrix of $X$-optimal $[16,4,11]_4^l$ code in the additive form (it is an $8\times16$ matrix) and choose the lower $7$-rows as $A_{[16,3.5]_4}$, such that the lower 6-rows of $A_{[16,3.5]_4}$ generates an additive $[16,3,12]_4$ code.
%		Construct a matrix $A_{(16,3.5]_4}$, whose first row in $(16,4,11]_4\setminus(16,3,12]_4$, and the lower 6-rows can generate an additive code with parameters $(16,3,12]_4$.
		\begin{equation}\label{(3,1.5,2)}
				\setlength{\arraycolsep}{0pt}
{\scriptsize 		A_{[16,3.5]_4}= \left( {\begin{array}{*{16}{c}}
				0 & 0 & 0 & 1   & 1   & w   & 0   & 1   & w   & w^2 & w^2 & w & 1   & 0   & w   & 1   \\
				1 & 0 & 0 & 1   & 0   & w^2 & w   & 1   & w^2 & 1   & 0   & 1 & w^2 & 1   & w   & w^2 \\
				w & 0 & 0 & w   & 0   & 1   & w^2 & w   & 1   & w   & 0   & w & 1   & w   & w^2 & 1   \\
				0 & 1 & 0 & w^2 & w   & 1   & w^2 & 1   & 0   & 1   & w^2 & 1 & w   & w^2 & 0   & 1   \\
				0 & w & 0 & 1   & w^2 & w   & 1   & w   & 0   & w   & 1   & w & w^2 & 1   & 0   & w   \\
				0 & 0 & 1 & 1   & w   & 0   & 1   & w   & w^2 & w^2 & w   & 1 & 0   & w   & 1   & 1   \\
				0 & 0 & w & w   & w^2 & 0   & w   & w^2 & 1   & 1   & w^2 & w & 0   & w^2 & w   & w
		\end{array}} \right)}.
		\end{equation}
%			\begin{equation}\label{(3,1.5,2)}
%		\setlength{\arraycolsep}{0.5pt}
%		G_{(16,3.5]_4}= \left( {\begin{array}{*{16}{c}}
%				0 & 0 & 0 & 1   & 1   & w   & 0   & 1   & w   & w^2 & w^2 & w & 1   & 0   & w   & 1   \\
%				1 & 0 & 0 & 1   & 0   & w^2 & w   & 1   & w^2 & 1   & 0   & 1 & w^2 & 1   & w   & w^2 \\
%				0 & 1 & 0 & w^2 & w   & 1   & w^2 & 1   & 0   & 1   & w^2 & 1 & w   & w^2 & 0   & 1   \\
%				0 & 0 & 1 & 1   & w   & 0   & 1   & w   & w^2 & w^2 & w   & 1 & 0   & w   & 1   & 1   \\
%		\end{array}} \right),
%	\end{equation}
		 	\begin{figure*}[htbp] %hb代表放在文章底部，%ht为放在文章顶部 
		 	\centering
	\begin{equation}\label{(38,3.5,28]_4}
	{\tiny 			\setlength{\arraycolsep}{0pt}
		A_{[38,3.5,28]_4}= \left( {\begin{array}{*{43}{c}}
			1 & 1 & w^2 & 0   & w & w^2 & w   & 1 & 1 & w^2 & 0   & w & w^2 & w   & 1   & 1   & w^2 & 0   & w   & w^2 & w   & 1   & 1   & w^2 & 0   & w   & w^2 & w   & 1   & 1   & w^2 & 0   & w   & w^2 & w   & w & w & w \\
			w & 1 & 1   & w^2 & 0 & w   & w^2 & w & 1 & 1   & w^2 & 0 & w   & w^2 & w   & 1   & 1   & w^2 & 0   & w   & w^2 & w   & 1   & 1   & w^2 & 0   & w   & w^2 & w   & 1   & 1   & w^2 & 0   & w   & w^2 & 1 & 0 & 1 \\
			0 & w & w^2 & w   & 1 & 1   & w^2 & 0 & w & w^2 & w   & 1 & 1   & w^2 & 0   & w   & w^2 & w   & 1   & 1   & w^2 & 0   & w   & w^2 & w   & 1   & 1   & w^2 & 0   & w   & w^2 & w   & 1   & 1   & w^2 & 0 & 1 & 1 \\
			1 & 1 & 1   & 1   & 1 & 1   & 1   & 0 & 0 & 0   & 0   & 0 & 0   & 0   & 1   & 1   & 1   & 1   & 1   & 1   & 1   & w   & w   & w   & w   & w   & w   & w   & w   & w   & w   & w   & w   & w   & w   & 0 & 0 & 0 \\
			w & w & w   & w   & w & w   & w   & 0 & 0 & 0   & 0   & 0 & 0   & 0   & w   & w   & w   & w   & w   & w   & w   & w^2 & w^2 & w^2 & w^2 & w^2 & w^2 & w^2 & w^2 & w^2 & w^2 & w^2 & w^2 & w^2 & w^2 & 0 & 0 & 0 \\
			0 & 0 & 0   & 0   & 0 & 0   & 0   & 1 & 1 & 1   & 1   & 1 & 1   & 1   & w   & w   & w   & w   & w   & w   & w   & w   & w   & w   & w   & w   & w   & w   & 1   & 1   & 1   & 1   & 1   & 1   & 1   & 0 & 0 & 0 \\
			0 & 0 & 0   & 0   & 0 & 0   & 0   & w & w & w   & w   & w & w   & w   & w^2 & w^2 & w^2 & w^2 & w^2 & w^2 & w^2 & w^2 & w^2 & w^2 & w^2 & w^2 & w^2 & w^2 & w   & w   & w   & w   & w   & w   & w   & 0 & 0 & 0
		\end{array}} \right).}
\end{equation}		
	\end{figure*}
		 	\begin{figure*}[htbp] %hb代表放在文章底部，%ht为放在文章顶部 
	\centering
				\begin{equation}\label{(43,3.5,32]_4}
{\tiny 			\setlength{\arraycolsep}{0pt}
			A_{[43,3.5,32]_4}= \left( {\begin{array}{*{43}{c}}
				0 & 1 & 0 & 0 & 1   & w   & w^2 & w^2 & 0   & 0   & w^2 & 1 & 1   & 1   & w^2 & 0   & w   & 0   & w^2 & w^2 & w & w^2 & 1   & w^2 & w   & 1   & w & 1   & 0   & w & w   & w   & 0 & 1 & 0 & w^2 & w^2 & 1   & w   & w & w & w & w \\
				0 & w & 0 & 1 & w   & 1   & w   & w^2 & 1   & 1   & w^2 & w & w^2 & w^2 & w   & 1   & 0   & 0   & w^2 & w^2 & 0 & w   & w   & w   & 0   & w^2 & 1 & w^2 & 0   & 0 & 1   & 1   & 0 & w & 1 & w^2 & w   & w^2 & 0   & 1 & 1 & 1 & 0 \\
				0 & 0 & 1 & 0 & w   & 0   & 1   & w   & 1   & w   & w^2 & 1 & 1   & w^2 & w   & w^2 & w^2 & w   & 1   & 0   & 0 & w^2 & w^2 & 0   & w   & w   & w & 0   & w^2 & 1 & w^2 & 1   & 0 & 0 & w & 1   & w^2 & w   & w^2 & 1 & 1 & 0 & 1 \\
				0 & 0 & w & 1 & w^2 & 1   & w   & w   & w^2 & w   & 0   & w & w^2 & 0   & w^2 & 0   & 1   & w^2 & w^2 & 1   & 0 & 1   & 1   & 0   & w^2 & w   & w & 1   & 1   & w & 0   & w^2 & 1 & 1 & 1 & 1   & 1   & 1   & 1   & 1 & 0 & 0 & 0 \\
				0 & 0 & 0 & w & 1   & w^2 & 1   & w   & w   & w^2 & w   & 0 & w   & w^2 & 0   & w^2 & 0   & 1   & w^2 & w^2 & 1 & 0   & 1   & 1   & 0   & w^2 & w & w   & 1   & 1 & w   & w^2 & w & w & w & w   & w   & w   & w   & w & 0 & 0 & 0 \\
				1 & 1 & 1 & 1 & 1   & 1   & 1   & 1   & 1   & 1   & 1   & 1 & 1   & 1   & 1   & 1   & 1   & 1   & 1   & 1   & 1 & 1   & 1   & 1   & 1   & 1   & 1 & 1   & 1   & 1 & 1   & 1   & 0 & 0 & 0 & 0   & 0   & 0   & 0   & 0 & 0 & 0 & 0 \\
				w & w & w & w & w   & w   & w   & w   & w   & w   & w   & w & w   & w   & w   & w   & w   & w   & w   & w   & w & w   & w   & w   & w   & w   & w & w   & w   & w & w   & w   & 0 & 0 & 0 & 0   & 0   & 0   & 0   & 0 & 0 & 0 & 0
			\end{array}} \right).}
		\end{equation}
		\end{figure*}
	
		 By Lemma \ref{Generalized_X_Construction_Corollary}, 
		 combining $A_{[16,3.5]_4}$ with the generator matrices $A_{[38,3.5,28]_4}$ of $[38,3.5,28]_4$ code in Lemma \ref{anticode_k_1_k_2+1/3} and $A_{[43,3.5,32]_4}$ of $[43,3.5,32]_4$ code  in Lemma \ref{One-third}, we can get additive GPO codes with parameters $[54,3.5,40]_4$ and $[59,3.5,44]_4$, respectively. 

		  In particular, using Magma \cite{bosma1997magma}, one can determine that both of them are two-weight codes whose weight distributions are $1+101z^{40}+26z^{44}$ and $1+108z^{44}+19z^{48}$, respectively.
	\end{example}
	
	Next, let us solve the problem of constructing all optimal quaternary $3.5$-dimensional additive codes. As mentioned in Sect. I, this is to construct all GPO or PO codes with dimension $3.5$. 
	Firstly, Lemma \ref{Griesmer_Bound} plays a vital role in determining the upper bound on GPO codes. 
	However, the additive Griesmer bound is typically not tight for codes with short lengths. Here, we use inverse methods to determine the upper bound on PO codes, as shown in the following lemma.
	
	\begin{lemma}\label{L_anti_bound}
		If there is no linear $[3n,2k,2d]^l_2$ code, then there is no additive $[n,k,d]_4$ code.
	\end{lemma}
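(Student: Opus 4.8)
The plan is to prove the contrapositive: assuming an additive $[n,k,d]_4$ code exists, I will build a genuine binary linear $[3n,2k,2d]_2^l$ code (in the Hamming metric), so that non-existence of the latter forces non-existence of the former.

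First I would translate the additive code into the symplectic picture already set up in Section \ref{Sec II}. An additive $[n,k,d]_4$ code $C_a$ has $4^k=2^{2k}$ codewords and is $F_2$-linear, so via the bijection $\Phi$ (which satisfies $\mathbf{w}(\Phi(\bm{v}))=\mathbf{w}_s(\bm{v})$) its preimage $C_s=\Phi^{-1}(C_a)$ is a binary linear subspace of $F_2^{2n}$ of dimension $2k$ whose minimum symplectic weight equals $d$; that is, $C_s$ is a $[2n,2k,d]_2^s$ code. Writing each codeword as $\bm{c}=(\bm{c}_l,\bm{c}_r)$ with $\bm{c}_l,\bm{c}_r\in F_2^n$, I will use the identity $\mathbf{w}_s(\bm{c})=\tfrac12\big(\mathbf{w}(\bm{c}_l)+\mathbf{w}(\bm{c}_r)+\mathbf{w}(\bm{c}_l+\bm{c}_r)\big)$, which is exactly the relation (Lemma 2.4 of \cite{ling2010generalization}) already invoked in the proof of Lemma \ref{APS_construction}.

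The construction is then a single $F_2$-linear map $\psi\colon F_2^{2n}\to F_2^{3n}$ defined by $\psi(\bm{c}_l,\bm{c}_r)=(\bm{c}_l,\bm{c}_r,\bm{c}_l+\bm{c}_r)$. It is injective, since its first $2n$ coordinates already recover $\bm{c}$, so $\psi(C_s)$ is a binary linear code of length $3n$ and dimension $2k$. By the identity above, every codeword satisfies $\mathbf{w}(\psi(\bm{c}))=\mathbf{w}(\bm{c}_l)+\mathbf{w}(\bm{c}_r)+\mathbf{w}(\bm{c}_l+\bm{c}_r)=2\,\mathbf{w}_s(\bm{c})$, so the Hamming weight of each codeword of $\psi(C_s)$ is exactly twice the symplectic weight of the corresponding codeword of $C_s$; in particular its minimum Hamming distance is $2d$. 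Hence $\psi(C_s)$ is a linear $[3n,2k,2d]_2^l$ code, and the contrapositive is complete.

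There is no serious obstacle here; essentially all the content lies in choosing the length-tripling map $\psi$ together with the symplectic-to-Hamming weight identity. The only points needing a word of care are that $\psi$ preserves dimension (injectivity, which is immediate) and that the minimum distance is actually \emph{attained} rather than merely bounded below, which follows at once from the exact weight equality $\mathbf{w}(\psi(\bm{c}))=2\,\mathbf{w}_s(\bm{c})$.
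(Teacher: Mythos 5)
Your proof is correct and is essentially the paper's own argument: the paper concatenates the additive code with the linear $[3,2,2]_2^l$ code, which symbol-by-symbol is exactly your map $(a,b)\mapsto(a,b,a+b)$, so your $\psi$ is the same construction up to a coordinate permutation. You simply spell out the weight-doubling identity that the paper leaves implicit.
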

	\begin{proof}
		If additive $[n,k,d]_4$ code exists, then by concatenating with linear $[3,2,2]_2^l$ code,  we can derive a linear $[3n,2k,2d]^l_2$ code.
		Therefore, if $[3n,2k,2d]^l_2$ code does not exist, $[n,k,d]_4$ does not exist either.
	\end{proof}

%	\begin{lemma}\label{L_two-step-Grisemer bound}
%		If there is no additive $[n-d,k-1,\lfloor \frac{1}{2} \lceil\frac{d}{2}\rceil \rfloor]_4$ code, then there is no additive $[n,k,d]_4$ code.
%	\end{lemma}
%	\begin{proof}
%		The concatenation of codes $[n,k,d]_4$ and $[3,2,2]_2^l$ yields $[3n,2k,2d]_2^l$ code, and  two-step residual code of $[3n,2k,2d]_2^l$ provides $[3n-3d,2k-2,\lceil\frac{d}{2}\rceil]_2$. By Lemma \ref{L_anti_bound}, the existence of $[3n-3d,2k-2,\lceil\frac{d}{2}\rceil]_2$ is the necessary condition for  the existence of $[n-d,k-1,\lfloor \frac{1}{2} \lceil\frac{d}{2}\rceil \rfloor]_4$. Therefore, we finish the proof.
%	\end{proof}
%	
	
	\begin{corollary}\label{C_non-exist}
		There are no quaternary additive codes with the following parameters. \par
		(1) For $n=18,19$, additive $[n,7/2,n-5]_4$ code;\par
		(2) For $n=26,27$, additive $[n,7/2,n-7]_4$ code.\par
	\end{corollary}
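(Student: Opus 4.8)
The plan is to pass to binary linear codes through Lemma~\ref{L_anti_bound} and then import a nonexistence result for binary codes of dimension $7$. Since every code in the statement has dimension $k=7/2$, Lemma~\ref{L_anti_bound} tells us that an additive $[n,7/2,d]_4$ code is impossible whenever the binary linear $[3n,7,2d]_2^l$ code fails to exist. First I would translate the four targets accordingly: in part~(1), $d=n-5$ yields $[54,7,26]_2^l$ (for $n=18$) and $[57,7,28]_2^l$ (for $n=19$); in part~(2), $d=n-7$ yields $[78,7,38]_2^l$ (for $n=26$) and $[81,7,40]_2^l$ (for $n=27$). Thus the whole corollary reduces to showing that none of these four binary linear codes exists.

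Next I would record why the Griesmer bound cannot settle the matter on its own. A direct computation gives $g(7,26)=54$, $g(7,28)=57$, $g(7,38)=78$ and $g(7,40)=81$, so $3n=g(7,2d)$ in every case; equivalently, each additive candidate meets \eqref{AGB} with equality, and each binary candidate meets the binary Griesmer bound with equality. These are precisely the borderline parameters on which \eqref{AGB} is silent, so the nonexistence must come from the finer structure of binary Griesmer codes rather than from Lemma~\ref{Griesmer_Bound}.

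The main step, therefore, is to certify the nonexistence of the four binary Griesmer codes $[54,7,26]_2^l$, $[57,7,28]_2^l$, $[78,7,38]_2^l$ and $[81,7,40]_2^l$. I would do this by appealing to the classification of optimal binary linear codes of dimension $7$ in \cite{Grassltable,Marutatables}, which establishes that each of these Griesmer-length codes is nonexistent. As a sanity check consistent with this, the Solomon--Stiffler/anticode construction already breaks down: writing $2^{6}-d=\sum_i 2^{u_i-1}$ one finds $\sum_i u_i>7$ in every case (for instance $d=26$ forces $\{u_i\}=\{6,3,2\}$ with sum $11$), so the Simplex code of $PG(6,2)$ cannot be punctured on the pairwise-disjoint subspaces a Griesmer code of these parameters would demand. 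Feeding the binary nonexistence back through Lemma~\ref{L_anti_bound} then rules out each of the four additive codes.

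The hard part is exactly this binary input. Because the Griesmer bound is tight on all four parameter sets, the nonexistence is not a formal consequence of \eqref{AGB}; it has to be imported from the tabulated classification of dimension-$7$ binary Griesmer codes, or else reproved by a dedicated residual-code/geometric argument. Once that single fact is granted, the reduction is routine and Lemma~\ref{L_anti_bound} closes all four cases immediately.
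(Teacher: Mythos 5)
Your proposal is correct and takes essentially the same route as the paper: reduce via Lemma~\ref{L_anti_bound} to the binary linear codes $[54,7,26]_2^l$, $[57,7,28]_2^l$, $[78,7,38]_2^l$ and $[81,7,40]_2^l$, and then invoke their tabulated nonexistence from \cite{Grassltable}. Your added observations that all four parameter sets meet the Griesmer bound with equality and that the Solomon--Stiffler construction fails are accurate context, but, as you correctly note, the essential input remains the tabulated binary nonexistence, exactly as in the paper's proof.
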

 {	\begin{proof}
		These codes satisfy Lemma \ref{Griesmer_Bound}, but according to \cite{Grassltable}, there is no $[54,7,26]^l_2$, $[57,7,28]^l_2$, $[78,7,38]^l_2$ and $[81,7,40]^l_2$ codes. Therefore, the above additive codes violate Lemma \ref{L_anti_bound}.
		%According to \cite{Grassltable}, there is no $[54,7,26]_2$, $[57,7,28]_2$, $[78,7,38]_2$ and $[81,7,40]_2$. 
	\end{proof}}
	
	\begin{lemma}\label{Additive code combinations} Let $s$ be a positive integer. 
		If the additive $[n,\frac{k}{2},d]_4$ code is GPO (resp. GDO), then the additive $[n+s(2^k-1),\frac{k}{2},d+3s\cdot 2^{k-2}]_4$ code is also  GPO (resp. GDO).
	\end{lemma}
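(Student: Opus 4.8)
The plan is to build the longer code explicitly by juxtaposing the given GPO code with the ASEP code of Lemma \ref{APS_construction}, and then to verify the puncture-optimality condition by exploiting a clean additivity property of the Griesmer function $g$ under a shift of its weight argument by $3\cdot 2^{k-1}$.

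First I would settle existence. Let $G$ be a generator matrix of the assumed GPO $[n,\frac{k}{2},d]_4$ code, and let $A_{\frac{k}{2}}$ be a generator matrix of the ASEP code $C_{A_{\frac{k}{2}}}$ with parameters $[2^{k}-1,\frac{k}{2},3\cdot2^{k-2}]_4$. Both are $F_2$-generating sets with the same number of rows, so the juxtaposition $(G\mid A_{\frac{k}{2}})$ generates an additive code of dimension $\frac{k}{2}$ whose codewords are $(\bm{u}G,\bm{u}A_{\frac{k}{2}})$ for $\bm{u}\in F_2^{k}$. The decisive point is that $C_{A_{\frac{k}{2}}}$ is constant-weight: by Lemma \ref{APS_construction} (equivalently entry No.~1 of Table \ref{T_Minimal_additive}) every nonzero codeword $\bm{u}A_{\frac{k}{2}}$ has weight exactly $3\cdot2^{k-2}$. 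Hence for every $\bm{u}\neq\bm{0}$ the weight of the juxtaposed codeword equals $\mathbf{w}(\bm{u}G)+3\cdot2^{k-2}\ge d+3\cdot2^{k-2}$, with equality whenever $\bm{u}G$ attains the minimum distance $d$. This yields an additive $[n+2^{k}-1,\frac{k}{2},d+3\cdot2^{k-2}]_4$ code, establishing the claimed parameters.

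Next I would record the Griesmer identity that drives the optimality transfer: for every nonnegative integer $y$,
\begin{equation}
g(k,\,y+3\cdot2^{k-1})=g(k,y)+3(2^{k}-1).
\end{equation}
This follows term by term from $g(k,w)=\sum_{i=0}^{k-1}\lceil w/2^{i}\rceil$, since for each $0\le i\le k-1$ the shift contributes the integer $3\cdot2^{k-1-i}$, which leaves the ceiling unchanged, and $\sum_{i=0}^{k-1}3\cdot2^{k-1-i}=3(2^{k}-1)$. Writing $N=n+2^{k}-1$ and $D=d+3\cdot2^{k-2}$, the GPO condition $3(N+1)<g(k,2(D+1))$ then rewrites, using $2(D+1)=2(d+1)+3\cdot2^{k-1}$ and $3(N+1)=3(n+1)+3(2^{k}-1)$, as
\begin{equation}
3(n+1)+3(2^{k}-1)<g(k,2(d+1))+3(2^{k}-1),
\end{equation}
which is equivalent to $3(n+1)<g(k,2(d+1))$, i.e. exactly the GPO condition satisfied by the original code. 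Together with the existence of the new code, this shows by Lemma \ref{Griesmer_Bound} that no $[N+1,\frac{k}{2},D+1]_4$ code exists, so the new code is GPO.

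The construction half is immediate once the constant-weight property of $C_{A_{\frac{k}{2}}}$ is invoked, so the only part demanding care is the Griesmer bookkeeping: one must check that the shift by $3\cdot2^{k-1}$ separates cleanly out of every ceiling term in the $k$-fold sum, which is precisely why the dimension $\frac{k}{2}$ (hence a sum of length $k$) and the constant weight $3\cdot2^{k-2}$ are matched. I expect this ceiling-function identity to be the main, though modest, obstacle.
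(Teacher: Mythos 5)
Your argument is correct and follows essentially the same route as the paper: the heart of both proofs is the identity $\sum_{i=0}^{k-1}\left\lceil\frac{2(d+3\cdot2^{k-2})}{2^{i}}\right\rceil=\sum_{i=0}^{k-1}\left\lceil\frac{2d}{2^{i}}\right\rceil+3(2^{k}-1)$, which transfers the violation of the additive Griesmer bound at length $n+1$ to length $n+2^{k}$. Your explicit verification of existence by juxtaposing with the constant-weight ASEP code is a welcome addition that the paper leaves implicit, but it does not change the method.
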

	\begin{proof}
		Since $[n,\frac{k}{2},d]_4$ code is an GPO, we have that additive $[n+1,\frac{k}{2},d+1]_4$ code violates Lemma \ref{Griesmer_Bound}, i.e., $3n+3>\sum\limits_{i=0}^{k-1}\left\lceil\frac{d+1}{2^{i-1}}\right\rceil$. 
		%Horizontaljoin additive $[n,\frac{k}{2},d]_4$ code with ASEP $(2^k-1,\frac{k}{2},3\cdot 2^{k-2}]_4$ code, one can obtain additive $[n+2^k-1,\frac{k}{2},d+3\cdot 2^{k-2}]_4$ code.
		As $\sum\limits_{i=0}^{k-1}\left\lceil\frac{d+3s\cdot 2^{k-2}}{2^{i-1}}\right\rceil=2d+3s\cdot 2^{k-1}+d+3s\cdot 2^{k-2}+\cdots+\left\lceil\frac{d}{2^{k-2}}\right\rceil+3s=\sum\limits_{i=0}^{k-1}\left\lceil\frac{d}{2^{i-1}}\right\rceil+3s(2^{k}-1)$, additive $[n+s2^k,\frac{k}{2},d+3s\cdot 2^{k-2}+1]_4$ code also violates Lemma \ref{Griesmer_Bound}, i.e., additive $[n+s(2^k-1),\frac{k}{2},d+3s\cdot 2^{k-2}]_4$ code is also an GPO code.
        By using a similar approach, it can be shown that this result holds for GDO code as well.
%		
%		 $3\cdot(2^{k}-1)=\sum\limits_{i=0}^{k-1}\left\lceil\frac{3\cdot2^{k-2}}{2^{i-1}}\right\rceil=3\cdot\sum\limits_{i=0}^{k-1}2^{k-i-1}$, so there is no $[n+2^k-1+1,\frac{k}{2},d+3\cdot 2^{k-2}+1]_4$, i.e.
%		$[n+2^k-1,\frac{k}{2},d+3\cdot 2^{k-2}]_4$ is an additive Griesmer optimal code.
	\end{proof}
	
 {	 With the help of Lemma \ref{Additive code combinations}, we can transform the problem of determining all quaternary $\frac{k}{2}$-dimensional long GPO codes into the problem of constructing all GPO codes of lengths between $i(2^k-1)$ and $(i+1)(2^k-1)$, where $i$ is a non-negative integer.
	Here, we call $2^k-1$ as the \textit{period} of quaternary $\frac{k}{2}$-dimensional additive codes.} 
	
 {	
	By corollary \ref{C_non-exist}, there are no $3.5$-dimensional GPO codes of lengths $19$ and $27$. Thus, to solve the construction problem of all 3.5-dimensional optimal additive codes, we need to construct at least all GPO or PO codes in the first two periods, i.e., determine all 3.5-dimensional GPO and PO codes with lengths up to $254$. 
	In Table \ref{3.5_additive}, we list them of lengths between $4$ and $254$, except for three undetermined ones.}
	
 	In Table \ref{3.5_additive}, we use $C_t$ denote additive $[n,3.5,n-t]_4$ codes with variable $n$, and $(C_{t_1}\mid C_{t_2})$ denote the juxtaposition code of $C_{t_1}$ and $C_{t_2}$, in which $C_{t_1}$ and $C_{t_2}$ are defaulted to the maximum lengths.
	$C_t$ with the largest $n$ is GPO or PO code, others are GDO or DO codes.
	In Table \ref{3.5_additive}, if $t=5$, $6$ or $\ge 8$, then the code $C_t$ with largest $n$ are all GPO and the others are PO.
	Under Lemma \ref{Additive code combinations}, for $n\ge255$, we can construct all GPO codes, except for $t=6,7,12$, by combining the codes in Table \ref{3.5_additive} with $[127,3.5,96]_4$ code in Lemma \ref{APS_construction}, and all GDO codes can be obtained by puncturing them. 
	
% {	Specifically, we determine optimal additive $[n,3.5,n-t]_4$ codes for all $t$ with variable $n$, except for $t=6,7,12$.
%	This is, if one can construct three GPO or PO codes with parameters: $[22,3.5,16]_4$, $[25,3.5,18]_4$ and $[46,3.5,34]_4$, then optimal parameters of quaternary $3.5$-dimensional additive codes will be determined entirely.}

\begin{table*}[htbp]
	\caption{Optimal quaternary additive 3.5-dimensional codes of lengths from $4$ to $254$}\label{3.5_additive}
	\centering
	\renewcommand\arraystretch{0.75}
	\begin{threeparttable}
		\begin{tabular}{ccccc}
			\toprule
			No.     &    $C_{t}$    &      parameters       &          Range          &                   Constructions                   \\ \midrule
			1      &    $C_{3}$    &    $[n,3.5,n-3]_4$    &     $4\le n \le 7$      &             \cite{blokhuis2004small}              \\
			2      &    $C_{4}$    &    $[n,3.5,n-4]_4$    &     $8\le n \le 12$     &             \cite{blokhuis2004small}              \\
			3      &    $C_{5}$    &    $[n,3.5,n-5]_4$    &    $13\le n \le 17$     &                \cite{Grassltable}                 \\
			4      &    $C_{6}$    &    $[n,3.5,n-6]_4$    &    $18\le n \le 22$     &                         -                         \\
			5      &    $C_{7}$    &    $[n,3.5,n-7]_4$    &    $23\le n \le 25$     &                         -                         \\
			6      &    $C_{8}$    &    $[n,3.5,n-8]_4$    &    $26\le n \le 32$     &              Lemma \ref{aug_Simplex}              \\
			7      &    $C_{9}$    &    $[n,3.5,n-9]_4$    &    $33\le n \le 35$     &        Theorem \ref{anticode_k_1_k_2_1/3}         \\
			8      &   $C_{10}$    &   $[n,3.5,n-10]_4$    &    $36\le n \le 40$     &        Theorem \ref{anticode_k_1_k_2_1/3}         \\
			9      &   $C_{11}$    &   $[n,3.5,n-11]_4$    &    $41\le n \le 43$     &               Lemma \ref{One-third}               \\
			10      &   $C_{12}$    &   $[n,3.5,n-12]_4$    &    $44\le n \le 46$     &                         -                         \\
			11      &   $C_{13}$    &   $[n,3.5,n-13]_4$    &    $47\le n \le 51$     &                         \cite{kurz2024computer}                         \\
			\textbf{12} & $\bm{C_{14}}$ & $\bm{[n,3.5,n-14]_4}$ &  $\bm{52\le n \le 54}$  &   \textbf{Example} \ref{Combine_optimal_codes}    \\
			\textbf{13} & $\bm{C_{15}}$ & $\bm{[n,3.5,n-15]_4}$ &  $\bm{55\le n \le 59}$  &   \textbf{Example} \ref{Combine_optimal_codes}    \\
			14      &   $C_{16}$    &   $[n,3.5,n-16]_4$    &    $60\le n \le 64$     &              Lemma \ref{aug_Simplex}              \\
			15      &   $C_{17}$    &   $[n,3.5,n-17]_4$    &    $65\le n \le 67$     &                $(C_{8}\mid C_{9})$                \\
			16      &   $C_{18}$    &   $[n,3.5,n-18]_4$    &    $68\le n \le 72$     &               $(C_{8}\mid C_{10})$                \\
			17      &   $C_{19}$    &   $[n,3.5,n-19]_4$    &    $73\le n \le 75$     &               $(C_{8}\mid C_{11})$                \\
			18      &   $C_{20}$    &   $[n,3.5,n-20]_4$    &    $76\le n \le 80$     &               $(C_{10}\mid C_{10})$               \\
			19      &   $C_{21}$    &   $[n,3.5,n-21]_4$    &    $81\le n \le 85$     &              Lemma \ref{combination_X_construction}              \\
			20      &   $C_{22}$    &   $[n,3.5,n-22]_4$    &    $86\le n \le 86$     &                  Extend $C_{21}$                  \\
			\textbf{21} & $\bm{C_{23}}$ & $\bm{[n,3.5,n-23]_4}$ &  $\bm{87\le n \le 91}$  &             $\bm{(C_{8}\mid C_{15})}$             \\
			22      &   $C_{24}$    &   $[n,3.5,n-24]_4$    &    $92\le n \le 96$     &              Lemma \ref{aug_Simplex}              \\
			\textbf{23} & $\bm{C_{25}}$ & $\bm{[n,3.5,n-25]_4}$ & $\bm{97\le n \le 101}$  &         \textbf{Lemma} \ref{third_one_k2}         \\
			\textbf{24} & $\bm{C_{26}}$ & $\bm{[n,3.5,n-26]_4}$ & $\bm{102\le n \le 106}$ &         \textbf{Lemma} \ref{G_X_enlarger}         \\
			25      &   $C_{27}$    &   $[n,3.5,n-27]_4$    &   $107\le n \le 107$    &                  Extend $C_{26}$                  \\
			26      &   $C_{28}$    &   $[n,3.5,n-28]_4$    &   $108\le n \le 112$    &               Corollary \ref{anticode_k_1_divided_k_2}                \\
			\textbf{27} & $\bm{C_{29}}$ & $\bm{[n,3.5,n-29]_4}$ & $\bm{113\le n \le 117}$ & \textbf{Corollary} \ref{anticode_k_1_divided_k_2} \\
			\textbf{28} & $\bm{C_{30}}$ & $\bm{[n,3.5,n-30]_4}$ & $\bm{118\le n \le 122}$ & \textbf{Corollary} \ref{anticode_k_1_divided_k_2} \\
			29      &   $C_{31}$    &   $[n,3.5,n-31]_4$    &   $123\le n \le 127$    &           Lemma \ref{APS_construction}            \\
			30      &   $C_{32}$    &   $[n,3.5,n-32]_4$    &   $128\le n \le 128$    &                  Extend $C_{31}$                  \\
			\textbf{31} & $\bm{C_{33}}$ & $\bm{[n,3.5,n-33]_4}$ & $\bm{129\le n \le 133}$ &             $\bm{(C_{8}\mid C_{25})}$             \\
			32      & $\bm{C_{34}}$ & $\bm{[n,3.5,n-34]_4}$ & $\bm{134\le n \le 138}$ &             $\bm{(C_{8}\mid C_{26})}$             \\
			\textbf{33} & $\bm{C_{35}}$ & $\bm{[n,3.5,n-35]_4}$ & $\bm{139\le n \le 141}$ &            $\bm{(C_{10}\mid C_{25})}$             \\
			\textbf{34} & $\bm{C_{36}}$ & $\bm{[n,3.5,n-36]_4}$ & $\bm{142\le n \le 146}$ &            $\bm{(C_{10}\mid C_{26})}$             \\
			\textbf{35} & $\bm{C_{37}}$ & $\bm{[n,3.5,n-37]_4}$ & $\bm{147\le n \le 149}$ &            $\bm{(C_{11}\mid C_{26})}$             \\
			\textbf{36} & $\bm{C_{38}}$ & $\bm{[n,3.5,n-38]_4}$ & $\bm{150\le n \le 154}$ &             $\bm{(C_{8}\mid C_{30})}$             \\
			37      &   $C_{39}$    &   $[n,3.5,n-39]_4$    &   $155\le n \le 159$    &               $(C_{8}\mid C_{31})$                \\
			38      &   $C_{40}$    &   $[n,3.5,n-40]_4$    &   $160\le n \le 162$    &               $(C_{9}\mid C_{31})$                \\
			39      &   $C_{41}$    &   $[n,3.5,n-41]_4$    &   $163\le n \le 167$    &               $(C_{10}\mid C_{31})$               \\
			40      &   $C_{42}$    &   $[n,3.5,n-42]_4$    &   $168\le n \le 170$    &               $(C_{11}\mid C_{31})$               \\
			\textbf{41} & $\bm{C_{43}}$ & $\bm{[n,3.5,n-43]_4}$ & $\bm{171\le n \le 173}$ &             $\bm{(C_{8}\mid C_{35})}$             \\
			\textbf{42} & $\bm{C_{44}}$ & $\bm{[n,3.5,n-44]_4}$ & $\bm{174\le n \le 178}$ &             $\bm{(C_{8}\mid C_{36})}$             \\
			\textbf{43} & $\bm{C_{45}}$ & $\bm{[n,3.5,n-45]_4}$ & $\bm{179\le n \le 181}$ &            $\bm{(C_{14}\mid C_{31})}$             \\
			\textbf{44} & $\bm{C_{46}}$ & $\bm{[n,3.5,n-46]_4}$ & $\bm{182\le n \le 186}$ &            $\bm{(C_{15}\mid C_{31})}$             \\
			45      &   $C_{47}$    &   $[n,3.5,n-47]_4$    &   $187\le n \le 191$    &               $(C_{16}\mid C_{31})$               \\
			46      &   $C_{48}$    &   $[n,3.5,n-48]_4$    &   $192\le n \le 194$    &               $(C_{17}\mid C_{31})$               \\
			47      &   $C_{49}$    &   $[n,3.5,n-49]_4$    &   $195\le n \le 199$    &               $(C_{18}\mid C_{31})$               \\
			48      &   $C_{50}$    &   $[n,3.5,n-50]_4$    &   $200\le n \le 202$    &               $(C_{19}\mid C_{31})$               \\
			49      &   $C_{51}$    &   $[n,3.5,n-51]_4$    &   $203\le n \le 207$    &               $(C_{20}\mid C_{31})$               \\
			50      &   $C_{52}$    &   $[n,3.5,n-52]_4$    &   $208\le n \le 212$    &               $(C_{21}\mid C_{31})$               \\
			51      &   $C_{53}$    &   $[n,3.5,n-53]_4$    &   $213\le n \le 213$    &               $(C_{22}\mid C_{31})$               \\
			\textbf{52} & $\bm{C_{54}}$ & $\bm{[n,3.5,n-54]_4}$ & $\bm{214\le n \le 218}$ &            $\bm{(C_{23}\mid C_{31})}$             \\
			53      &   $C_{55}$    &   $[n,3.5,n-55]_4$    &   $219\le n \le 223$    &               $(C_{24}\mid C_{31})$               \\
			\textbf{54} & $\bm{C_{56}}$ & $\bm{[n,3.5,n-56]_4}$ & $\bm{224\le n \le 228}$ &            $\bm{(C_{25}\mid C_{31})}$             \\
			\textbf{55} & $\bm{C_{57}}$ & $\bm{[n,3.5,n-57]_4}$ & $\bm{229\le n \le 233}$ &            $\bm{(C_{26}\mid C_{31})}$             \\
			58      &   $C_{60}$    &   $[n,3.5,n-58]_4$    &   $234\le n \le 234$    &               $(C_{27}\mid C_{31})$               \\
			59      &   $C_{61}$    &   $[n,3.5,n-59]_4$    &   $235\le n \le 239$    &               $(C_{28}\mid C_{31})$               \\
			\textbf{60} & $\bm{C_{62}}$ & $\bm{[n,3.5,n-60]_4}$ & $\bm{240\le n \le 244}$ &            $\bm{(C_{29}\mid C_{31})}$             \\
			\textbf{61} & $\bm{C_{63}}$ & $\bm{[n,3.5,n-61]_4}$ & $\bm{245\le n \le 249}$ &            $\bm{(C_{30}\mid C_{31})}$             \\
			62      &   $C_{64}$    &   $[n,3.5,n-62]_4$    &   $250\le n \le 254$    &               $(C_{31}\mid C_{31})$               \\ \bottomrule
		\end{tabular}%
		\begin{tablenotes}    %这行要添加， 从这开始
			\footnotesize               %这行要添加
			\item[]\textbf{Note:} The marked bold indicates the new quaternary additive codes obtained in this paper.
		\end{tablenotes}            %这行要添加
	\end{threeparttable}  
\end{table*}

	%%%%%%%%%%%%%%%%%-V-Conclusion--%%%%%%%%%%%%%%%%%%%%%%%%%%%%%%
	\section{Conclusion}\label{VII Dis conclu}
%	In this work, we present efficient combinatorial constructions for optimal quaternary additive codes and obtain many classes of optimal additive codes with non-integer dimensions. 
%	Specifically, the combinatorial construction of additive constant-weight codes is given, and the algebra structure of its generator matrix is characterized. 
%	Based on this, we give an anticode construction of additive codes. In addition, generalized Construction X is proposed, which allows the construction of additive codes of non-integer dimensions using special optimal linear codes.	
%	It is worth noting that Lemma \ref{outperform_linear} determines a sufficient condition for additive codes with non-integer dimensions to outperform linear codes, and most of our additive codes outperform optimal linear codes in \cite{Grassltable,Marutatables}, and we summarize them in Table \ref{T_Minimal_additive}. 
%Finally, we also determine optimal additive $[n,3.5,n-t]_4$ codes for all $t$ with variable $n$, except for $t=6,7,12,13$. 

In this work, we studied combinatorial techniques for quaternary additive codes and successfully derived ten classes of optimal additive codes with non-integer dimensions, see Table \ref{T_Minimal_additive}. 
%Notably, Lemma \ref{outperform_linear} established a sufficient condition for additive codes with non-integer dimensions to surpass linear codes in performance, which shows ten classes of our additive codes outperform linear counterparts \cite{Grassltable,Marutatables}, as listed in Table \ref{T_Minimal_additive}
Specifically, we detailed the combinatorial construction of additive constant-weight codes and elucidated the algebraic structure of their generator matrices, see Lemmas \ref{APS_construction}, \ref{iterating} and \ref{L_camba}.
Then, we proposed an additive generalized anticode construction method for additive codes, see Lemma \ref{anticode construction} and Theorem  \ref{anticode_k_1_k_2}. Furthermore, we present a generalized Construction X that enables the construction of additive codes with non-integer dimensions using specialized optimal linear codes, see Lemma \ref{Generalized_X_Construction_Corollary} and Theorem \ref{Generalized_X_Construction}.
Finally, we also determined the optimal additive $[n,3.5,n-t]_4$ codes for all $t$ with variable $n$, except for $t=6,7,12$, see Lemma \ref{Additive code combinations} and Table \ref{3.5_additive}.

%In particular, our results can cover recent works about optimal additive codes in  \cite{bierbrauer2021optimal} and \cite{Guan2023SomeGQ}, and we show this in Remark \ref{cover_results}.

%Therefore, in the future, the decoding of additive codes and their applications in cryptography and other aspects will be interesting problems.

%\section*{Acknowledgments}
%	%The authors would like to thank the discussions with Dr. Shitao Li and Dr. Yang Li, whose suggestions improved the quality of this paper.
%	The authors would like to thank Dr. Shitao Li and Dr. Yang Li for the helpful discussions. 
%	%They also thank the anonymous reviewers for their valuable comments.
%	They also thank three anonymous reviewers and the Associate Editor Prof. Xiande Zhang for their valuable comments and suggestions.
% They also thank the reviewers and  for carefully reading this paper and valuable comments, which significantly improved this paper's presentation and quality.
	%
	%The authors are very grateful to Prof. Markus Grassl for discussing the relationship between additive and quasi-cyclic codes, which significantly improved this paper's presentation and quality.

	%%%%%%%%%%%%%%%%%%%%%%%%%%%%%%%%%%%%%%%%%%%%%%%%%%%%%%%%%%%%%%%%%%%%%%%%%
	\bibliographystyle{IEEEtran} 
	\bibliography{reference} 

\end{document}